\documentclass[12pt]{article}

\usepackage{graphicx}
\usepackage{caption}
\usepackage{subcaption}

\usepackage{float}

\usepackage{booktabs}

\usepackage{multirow}

\usepackage{amsmath} 
\usepackage{amsthm} 
\usepackage{amssymb} 
\usepackage{physics} 
\usepackage{amsfonts} 
\usepackage{bm}

\allowdisplaybreaks

\usepackage[title]{appendix}

\usepackage[backend=biber, style=apa, sorting=nty,  doi=false]{biblatex}

\DeclareNameFormat{labelname}{%
	\nameparts{#1}%
	\usebibmacro{name:family}
	{\namepartfamily}
	{\namepartgiven}
	{\namepartprefix}
	{\namepartsuffix}%
	\usebibmacro{name:andothers}%
}

\usepackage{hyperref}
\hypersetup{
	colorlinks=true, 
	linkcolor=blue, 
	urlcolor=blue, 
	citecolor=blue 
}

\usepackage{xcolor}
\usepackage{color}
{}           

\usepackage{CJK, CJKnumb}

\usepackage{indentfirst}
\usepackage{mathrsfs}

\makeatletter
 \newcommand{\Rmnum}[1]{\expandafter\@slowromancap\romannumeral #1@}
\makeatother

\usepackage{chngcntr}

\numberwithin{equation}{section}

\usepackage{cleveref}
\crefname{Theorem}{Theorem}{Theorem}
\Crefname{Theorem}{Theorem}{Theorem}

\crefname{figure}{fig.}{figs.}     
\Crefname{figure}{Fig.}{Figs.}     

\newtheorem{Definition}{Definition}[section]
\newtheorem{Theorem}{Theorem}[section]
\newtheorem{Lemma}{Lemma}[section]
\newtheorem{Corollary}{Corollary}[section]

\numberwithin{equation}{section}
\newtheorem{Remark}{Remark}[section]

\makeatletter
\long\def\@makefntext#1{\parindent 1em\noindent *\ #1}
\makeatother
\begin{document}
	\title{Equilibrium analysis in a multi-agent reinsurance chain}

	\author{Kaizheng Wang$^1$, \quad Wei Liu$^{1,*}$, \quad Zhuo Jin$^2$, \quad Wenyuan Wang$^3$\\
		\\
		$^1$ College of Mathematics and System Science, Xinjiang University, \\
		Urumqi 830017, China 
		\\		
		\\
		$^2$ Department of Actuarial Studies and Business Analytics, Macquarie University, \\
		2109, NSW, Australia
	   \\
	   \\
		$^3$ School of Mathematics and Statistics, 
		 Fujian Normal University, \\
		  Fuzhou 350007, China
	}

	\date{}
	
	\maketitle
	
		\footnotetext{Email address: kzwang\underline{ }math@stu.xju.edu.cn  (K.Wang); 
			liuwei.math@xju.edu.cn (W.Liu);\\
			zhuo.jin@mq.edu.au (Z. Jin); 
			wwywang@xmu.edu.cn (W. Wang). }

	\def\sA{{\mathscr A}}
	\def\sB{{\mathscr B}}
	\def\sC{{\mathscr C}}
	\def\sD{{\mathscr D}}
	\def\sF{{\mathscr F}}
	\def\sG{{\mathscr G}}
	\def\sH{{\mathscr H}}
	\def\sM{{\mathscr M}}
	\def\sP{{\mathscr P}}
	\def\sQ{{\mathscr Q}}
	\def\sR{{\mathscr R}}
	\def\sS{{\mathscr S}}
	\def\sT{{\mathscr T}}
	\def\sW{{\mathscr W}}
	\def\sX{{\mathscr X}}
	\def\sY{{\mathscr Y}}

	\begin{abstract}
This paper investigates a multi-layer reinsurance chain within a stochastic differential game framework involving $m$ competing insurers and $n$ reinsurers. Specifically, Stackelberg differential games are employed to characterize the strategic interactions between reinsurance buyers and sellers at each layer of the chain. In addition, a non-zero-sum game model is established to capture the competitive behavior among insurers. Both insurers and reinsurers are allowed to invest in a risk-free asset and a risky asset. To examine the heterogeneity of reinsurance chains under different contract types, the analysis is conducted separately for proportional reinsurance and excess-of-loss reinsurance. By combining dynamic programming and game theory, closed-form equilibrium strategies for investment and reinsurance are derived by solving the extended Hamilton–Jacobi–Bellman (HJB) systems under the mean-variance (MV) criterion. Numerical analysis is conducted to explore the impact of key parameters on the equilibrium strategies. The results indicate that intensified competition in the insurance market leads to a reduction in the safety loadings of reinsurance contracts at each layer of the reinsurance chain.
	\end{abstract}

	\vspace{0.2cm}
	
	\noindent{\bf Keywords: }\quad Reinsurance chain, Stackelberg differential game, Non-zero-sum game,  Equilibrium strategy.

	\vspace{0.2cm}
	
	\noindent {\bf  Mathematics Subject Classification: }\ \  91G05, 93E20.
	
	%

	\newpage
	\section{Introduction}
	Since the seminal work of \textcite{browne1995optimal}, the study of optimal reinsurance has attracted substantial academic attention. Early research primarily examined the optimal strategies for a single insurer, as exemplified by \textcite{hojgaard1998optimal}, \textcite{david2005minimizing}, and \textcite{cao2009optimal}. More recently, driven by competitive dynamics in insurance markets, game-theoretic approaches have become a central theme in actuarial optimization. To model insurer competition, \textcite{zeng2010}, \textcite{taksar2011}, and \textcite{meng2015} adopted zero-sum game frameworks. In contrast, \textcite{bensoussan2014}, \textcite{pun2016a}, and \textcite{pun2016b} proposed non-zero-sum game models to capture strategic interactions among insurers whose objectives are not strictly antagonistic. 
	However, these contributions largely neglect the strategic role of reinsurers, whose  decisions also play an important role in contract design. Addressing this gap, \textcite{chen2018new} pioneered the application of Stackelberg differential games to reinsurance-investment problems, modeling the reinsurer as the leader and the insurer as the follower in a hierarchical decision structure. Building on this framework, subsequent studies such as \textcite{gu2020}, \textcite{wang2020}, and \textcite{yuan2022robust} have further enriched the literature by examining optimal strategies under varying assumptions regarding market structure, information asymmetry, and robustness.

	Most existing studies primarily consider single-layer reinsurance structures. However, with the increasing prevalence of cross-border risks driven by economic globalization, traditional reinsurance frameworks are often inadequate to capture the complexities of modern risk-sharing arrangements. To address the multilayered and interdependent nature of insurer–reinsurer relationships, recent research has adopted network-based models to study strategic interactions within reinsurance markets. \textcite{lin2015reinsurance} analyzed the impact of reinsurance networks on decision-making by building a stylized theoretical model, complemented by empirical validation using real-world data. \textcite{chen2020reinsurancenetwork} examined the microstructure of the U.S. property reinsurance market, highlighting the dynamics of default risk and contagion within the network. \textcite{klages2020cascading} introduced a cascading contagion model to explore systemic losses resulting from interconnected exposures. 
	By building a global reinsurance network,  international reinsurance companies have made it possible to spread risks more rationally and efficiently around the world. 
	Within these complex network structures,   reinsurance chains constitute fundamental  structures. 
	This paper concentrates on the study of reinsurance chains, a topic with many unresolved issues that deserves significant attention.

	In reality, reinsurers, when unable to bear excessive risks, will choose to transfer the risks further.
    The National Association of Insurance Commissioners
	states in its definition of reinsurance that reinsurers may also buy reinsurance protection, which is
	called "retrocession". Multiple layers of reinsurance and retrocession form a reinsurance chain in which reinsurers simultaneously act as leaders in the upstream risk transfer  layer and followers in the subsequent layer. This hierarchical structure can be captured by the Stackelberg game framework.  
	Reinsurance chains are common in practice. 
	For example, during Hurricane Katrina in 2005, multiple insurers and reinsurers faced massive claims and were on the verge of bankruptcy. 
	Some of these reinsurers had pre-arranged retrocession agreements to transfer a portion of the hurricane risk to other international reinsurers. Consequently, they were able to effectively alleviate their financial burden after the storm, ensuring their financial stability and continued operational capability. 
	In such scenarios, insurers provide coverage to residents or businesses in affected areas, while first-layer reinsurers receive the risk ceded by the insurers, and retrocessionaires receive retrocession by the reinsurers 
	to further diversify the transferred losses 
	and prevent insurers and reinsurers from going bankrupt due to the concentration of risk. 
	\textcite{davison2019counterparty}  points out that, after retrocession, the losses caused by hurricane Katrina to reinsurers are reduced by half.
	Therefore, the study of the equilibrium strategies of insurers and reinsurers  under the reinsurance chain is of great significance in promoting the development of the economy and maintaining the stability of the market.

	The study of reinsurance chains traces back to  \textcite{gerber1984chains}, who pioneered the derivation of optimal reinsurance cession and pricing strategies under the assumption that insurance claims follow a normal distribution.  Subsequent research by scholars such as \textcite{d1985chains} and \textcite{lemaire1986chains} further examined the structural properties of reinsurance chains.  
	Despite the prevalence of reinsurance chains in practice, a substantial literature gap emerged following early contributions in the 1980s, and the topic remained largely unexplored for decades.  Recently, \textcite{chen2020continuous} revitalized this line of inquiry by analyzing optimal strategies within a continuous-time Stackelberg game framework, investigating how structural variations in reinsurance chains influence optimal decision-making.  
	Building on this, \textcite{cao2023-two-reinsurance} compared the performance of reinsurance chain structures with that of reinsurance trees, while in another study, \textcite{cao2023-n-reinsurance}  explored the impact of model ambiguity on the optimal strategies under reinsurance chain structure. 
	
	Existing studies in this domain predominantly focus on settings involving a single insurer. In practice, however, 
	the number of insurers often substantially exceeds the number of reinsurers, which frequently necessitates that a reinsurer simultaneously manages multiple reinsurance contracts from various insurers.  
	A primary driver for the emergence of reinsurance chains is the substantial volume of risk ceded by insurers, which often requires retrocession  to achieve further risk diversification. 
	At the same time, intense competition among insurers compels them to incorporate the strategies of their competitors into their own decision-making processes. 
	 \textcite{han2019stochastic} examined optimal strategies within a two-layer Stackelberg game framework comprising two insurers and one reinsurer, employing a bifractional Brownian motion environment. 
	 \textcite{he2024stochastic} 
	 investigate equilibrium strategies for two insurers and one reinsurer in a delayed stochastic Stackelberg game under value-at-risk (VaR) constraints.  
	 \textcite{wang2024optimal} examines the optimal investment and reinsurance strategies of $n$ insurers with competitive and cooperative interactions and a reinsurer under the Stackelberg game framework.
    
	\begin{figure}[h]
			\includegraphics[width=\textwidth]{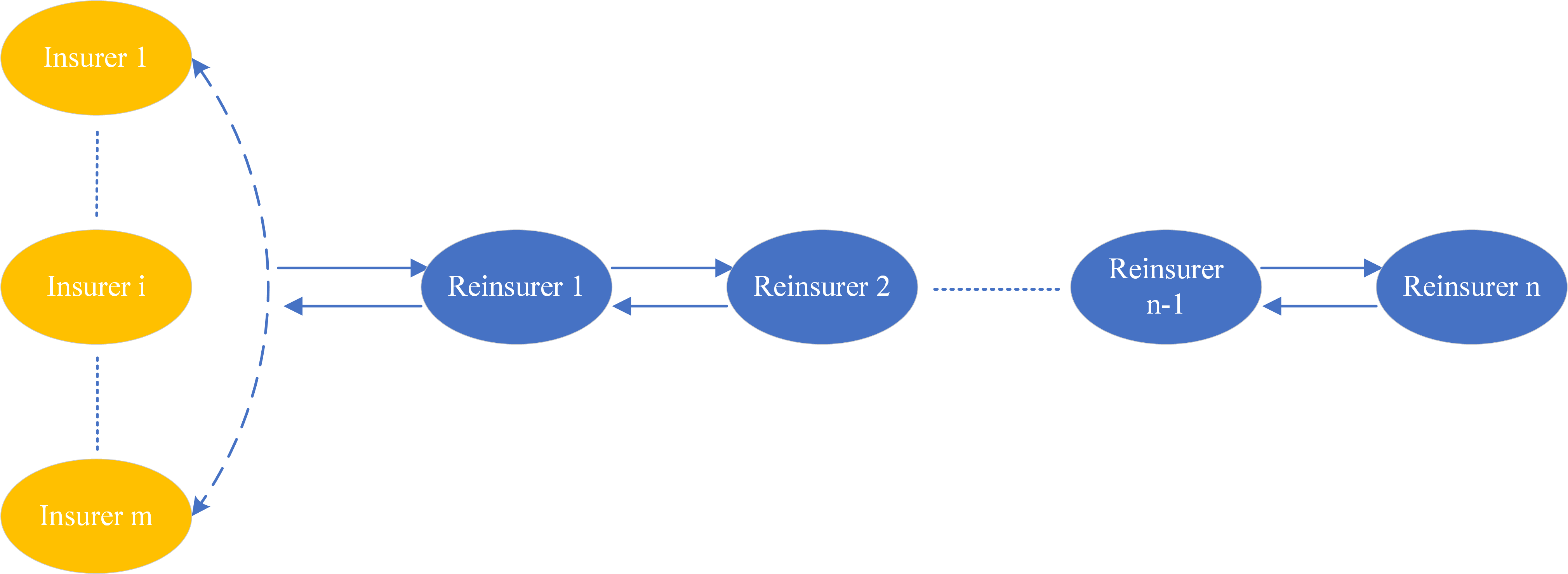}
		\caption{The reinsurance chain of $m$ insurers and $n$ reinsurers}	
		\label{fig:Reinsurance-chain}
	\end{figure}

	This paper constructs a multi-layer stochastic Stackelberg differential game model to analyze a reinsurance chain comprising $m$ insurers and $n$ reinsurers (see \Cref{fig:Reinsurance-chain} as an example). The model captures the joint propagation of systemic risk emerging from both market competition and the chain structure itself. It offers a precise representation of the complex risk network formed when a limited number of reinsurers engage with multiple primary insurers, thereby elucidating the influence of competitive dynamics in the primary insurance market on reinsurance demand.  
	Extending the frameworks developed in  \textcite{chen2020continuous} and \textcite{cao2023-two-reinsurance,cao2023-n-reinsurance}, this study generalizes the reinsurance chain model from a single-insurer context to a multi-insurer setting and provides a rigorous analysis of how competition among insurers affects the reinsurance chain. In contrast to \textcite{wang2024optimal}, where a single-layer Stackelberg game with multiple insurers is considered, the present framework generalizes the hierarchical structure to a multi-layer reinsurance chain and further examines the role of such chain configurations in shaping reinsurance contracts. 
	The proposed model not only provides a more accurate characterization of multi-layer reinsurance but also explicitly accounts for the empirical market structure, in which numerous insurers interact with a limited number of reinsurers. In practice, a small set of reinsurers typically undertakes the risk-transfer obligations of multiple insurers, while competitive and cooperative interactions among insurers significantly influence reinsurance strategies. To capture these dynamics, the model incorporates a multi-layer reinsurance chain framework, more faithfully reflecting decision-making interactions and risk-sharing processes among market participants. Consequently, the model exhibits strong generalizability, offering a rigorous framework applicable to various insurance products and market scenarios, and providing a theoretical basis for risk management and strategy optimization.
	
			Building  upon the  multi-dimensional reinsurance chain model above, we next turn to the methodological challenges inherent in extending the reinsurance chain framework to a multi-insurer and multi-layer setting. Specifically, the inclusion of multiple insurers introduces higher-dimensional interactions that necessitate new solution approaches. 
		Recent studies on reinsurance chains, such as \textcite{chen2020continuous},  \textcite{cao2023-n-reinsurance} and \textcite{cao2023-two-reinsurance},  have primarily focused on optimal strategies within a single-insurer framework. In contrast, the present study  extends the reinsurance chain framework from a low-dimensional to a high-dimensional setting, aiming to capture the more complex interactions among multiple insurers and reinsurers. In this extension, we systematically examine the key challenges arising from the transition, including intensified strategy coupling, increased complexity of risk transfer mechanisms, and higher computational demands for solution methods. 
		
		To address these challenges, we develop a recursive analytical framework that enables tractable characterization of equilibrium strategies in high-dimensional environments. 
		The inclusion of $m$ insurers renders the derivation and verification of optimal reinsurance contracts considerably more challenging. In order to obtain the explicit solution in high dimensions, we propose a backward recursive approach. The essence of this approach lies in establishing recursive relations among hierarchical equilibrium strategies. 
		Unlike \textcite{chen2020continuous} and \textcite{cao2023-n-reinsurance}, which examine only the single-insurer setting, this paper introduces a structured matrix that systematizes the transmission of strategies across layers, thereby substantially mitigating the computational burden inherent in high-dimensional optimization. 
	    In contrast to the previous studies limited to a single Stackelberg game, our approach is capable of capturing multi-layer reinsurance interactions in real markets with higher precision. Based on this recursive formulation, we derive equilibrium strategies under both proportional and excess-of-loss reinsurance contracts, thereby providing an effective methodological reference for decision-making in competitive reinsurance markets. 
	
	Beyond methodological advancements, the proposed framework also yields deeper insights into the strategic interdependencies characterizing multi-insurer systems.  Extending the research framework from a single-insurer model to a multi-insurer setting introduces tighter strategic interconnections and more intricate interdependencies within the risk-transfer mechanism. In this extended framework, each insurer’s decision not only affects its own expected utility but also influences the equilibrium strategies of other insurers, thereby generating a multi-layer network of strategic interactions transmitted through the reinsurance chain. Distinct from prior studies that primarily focus on a single Stackelberg structure, the recursive formulation proposed in this paper captures the multi-layer interactions among reinsurers with higher analytical precision and stronger empirical relevance. This methodological innovation provides a rigorous theoretical foundation for strategic decision-making in competitive reinsurance markets and establishes a feasible analytical framework for dynamic optimization problems in multi-agent and multi-layer insurance systems. 
	
	The main contributions of this paper are threefold.
	
	First, we extend the classical reinsurance framework to a multi-layer setting with $m$ insurers and $n$ reinsurers, formulating the problem as a coupled stochastic control system. By deriving and solving the associated extended  Hamilton–Jacobi–Bellman equations, we obtain explicit closed-form solutions for the optimal investment and reinsurance strategies, providing tractable expressions for equilibrium controls in this complex multi-agent system.
	
	Second, to the best of our knowledge, this is the first study to address the stochastic optimal investment and reinsurance problem within a multi-layer reinsurance chain that simultaneously incorporates both proportional and excess-of-loss treaties. This generalization captures the heterogeneous contractual structures observed in practice and unifies multiple reinsurance mechanisms within a single analytical framework.
	
	Third, we rigorously analyze the strategic interactions among insurers and the structural properties of the reinsurance network. Our results show that competition among insurers propagates through the entire reinsurance chain, affecting the design, pricing, and risk allocation of subsequent contracts. A systematic comparative statics analysis characterizes how key model parameters—such as risk aversion, claim intensity, and reinsurance costs—influence equilibrium strategies and resulting risk-sharing outcomes.
	
	In contrast to single-layer models, the equilibrium strategies derived in Theorem \ref{Theorem-Proportional reinsurance-equilibrium strategy} demonstrate that interactions at the initial level of the chain generate endogenous adjustments throughout all layers, revealing a mathematically tractable mechanism by which competitive dynamics and network structure jointly shape the global equilibrium of a reinsurance market.
	
	The rest of the paper is organized as follows. In Section 2, 
	the structure of the reinsurance-chain model is given. Section 3 presents the equilibrium strategies under two scenarios: proportional reinsurance and excess-of-loss reinsurance. In Section 4, the paper examines the relevant properties of reinsurance chains. Numerical 
	analysis and economic explanation are illustrated in Section 5. Section 6 summarizes the paper.

	\section{Model formulation}
	
	Let $(\Omega, \mathcal{F}, \mathbb{P})$ be a complete filtered probability space, where the filtration $\mathcal{F} = \{\mathcal{F}_t\}_{t \in [0,T]}$ is generated by a collection of mutually independent processes, including  a standard Brownian motion $B(t)$, Poisson processes $N(t)$ and $N_i(t)$, and a family of independent random variables $\{Z_k^i : k \geq 1\}$ for $i \in \{1,2, \cdots, m\}$. Here, all processes and random variables are assumed to be mutually independent. The probability measure $\mathbb{P}$ denotes the real-world probability measure, and the interval $[0,T]$ represents a fixed finite time horizon.

	\subsection{The wealth process of insurers}
		 Let $R_j (j=1, 2, \cdots, n)$ denote the $n$ reinsurers. In the absence of reinsurance arrangements, the reserve process of insurer
    $i (i=1, 2, \cdots, m)$ follows the stochastic differential equation
	\begin{equation*}
		\begin{aligned}
			\mathrm{d}U_{i}(t)&=c^{0}_{i}\mathrm{d}t-\mathrm{d}\sum_{k=1}^{N_{i}(t)+N(t)}Z^{i}_{k}, 
		\end{aligned}
	\end{equation*}
	where $c^{0}_{i}$ is a constant representing the premium rate of insurer $i$. The claim sizes $\bigl\{Z^{i}_{k}, k=1,2,\cdots \bigr\}$ form a sequence of independent and identically distributed (i.i.d.) non-negative random variables with common distribution function $F_i(z)$, finite first and second moments $E[Z^{i}]=\mu_{i}$ and  $E[(Z^{i})^2]=b_{i}^2$. The process
	$N_{i}(t)+N(t)$ 
	represents the total number of claims experienced by insurer $i$ up to time $t$. $N_{i}(t)$ and $N(t)$ are mutually independent Poisson processes with intensities $\lambda_i$ and $\lambda$, respectively. The sequence $\left\{Z^{i}_{k}, k=1,2,\cdots\right\}$ is independent of both $N_{i}(t)$ and $N(t)$. Thus, the interdependence among the $m$ insurers arises from their exposure to the common shock process $N(t)$.
    The  correlation coefficient between the aggregate claims $\sum_{k=1}^{N_{x}(t)+N(t)}Z^{x}_{k}$ and  $\sum_{k=1}^{N_{y}(t)+N(t)}Z^{y}_{k}$ is
	\begin{equation*}
		\rho_{xy}=\frac{\lambda \mu_x\mu_y}
		{\sqrt{\left(\lambda_x+\lambda\right)b_x^2}\sqrt{\left(\lambda_y+\lambda\right)b_y^2}}. 
	\end{equation*}
	
	Under the expected value premium principle, the premium rate $c^{0}_{i}$ for insurer $i$ is 
	 \begin{equation*}
	 c^{0}_{i}=\left(1+\eta_{i}\right)\left(\lambda_{i}+\lambda\right) \mu_{i},
	 \end{equation*}
	  where $\eta_{i}>0$ is the safety loading of insurer $i$. For each claim, insurer $i$ retains $l^{0}_{i}(Z^i_k,t)$, while the residual amount $Z^i_k-l^{0}_{i}(Z^i_k,t)$ is ceded to the reinsurer $R_1$ at a premium rate of $c^{1}_{i}(t)$.  After reinsurance, the surplus process of insurer $i$ satisfies
	\begin{equation}
		\begin{aligned}
			\mathrm{d}U_{i}(t)&=\left(c^{0}_{i}-c^{1}_{i}(t)\right)\mathrm{d}t
			-\mathrm{d} \left(\sum_{k=1}^{N_{i}(t)+N(t)}l^{0}_{i}(Z^{i}_{k}, t)\right).
		\end{aligned}
	\end{equation}

    All insurers and reinsurers invest in a financial market consisting of a risk-free asset $S_0(t)$ with a constant interest rate $r$ and a risky asset $S_1(t)$. The price processes of $S_0(t)$ and $S_1(t)$ follow the stochastic differential equations
    \begin{equation*}
    	\left\{
    	\begin{aligned}
    		&
    		\frac{\mathrm{d}S_1(t)}{S_1(t)}=\mu\mathrm{d}t+\sigma \mathrm{d}B(t),
    		\\
    		&
           \frac{\mathrm{d}S_0(t)}{S_0(t)}=r\mathrm{d}t,
    	\end{aligned}
    	\right.
    \end{equation*}
    where $ \mu > r $ is expected return, $ \sigma > 0$ is the  volatility, and $ B(t) $ is a standard Brownian motion.

	 Assume that insurer $i$ invests $\pi_{i}(t)$  in the risky asset $S_{1}(t)$ and allocates the remaining wealth to the risk-free asset $S_{0}(t)$. Thus, the wealth process $X_{i}(t)$ of  insurer $i$ follows
		\begin{equation}
			\label{Weathy process of insurer i}
			\mathrm{d}X_{i}(t)
			=\left[(\mu-r)\pi_{i}(t)+rX_{i}(t)+  c^{0}_{i}-c^{1}_{i}(t) \right] \mathrm{d}t 
			-\mathrm{d}\sum_{k=1}^{N_{i}(t)+N(t)}l^{0}_{i}(Z^{i}_{k},t)
			+\pi_{i}(t)\sigma\mathrm{d}B(t). 
		\end{equation}
	
	Relative performance measures an insurer's wealth relative to that of its competitors.  Specifically, the relative performance process $\widehat X_i(t)$ of insurer $i$ is defined as
	\begin{align}
		\mathrm{d}\widehat X_i(t)
		&=\mathrm{d}\left(X_{i}(t)-\omega_{m,i}\sum_{k\neq i}^{m}X_k(t)\right) \notag\\
		&=\Bigg[r\widehat X_i(t)
		+(\mu-r)\left(\pi_{i}(t)-\omega_{m,i}\sum_{k\neq i}^{m}\pi_{k}(t)\right) \notag\\
		&\quad+\left(c^{0}_{i}-c^{1}_{i}(t)\right)
		-\omega_{m,i}\sum_{k\neq i}^{m}\left(c^0_k-c^1_k(t)\right)\Bigg]\mathrm{d}t 
		\label{Relative performance}\\
		&\quad-\mathrm{d}\left[\sum_{v=1}^{N_{i}(t)+N(t)}l^{0}_{i}(Z^{i}_{v},t)
		-\omega_{m,i}\sum_{k\neq i}^{m}\sum_{v=1}^{N_{k}(t)+N(t)}l^{0}_{k}(Z^{k}_{v},t)\right] \notag\\
		&\quad+\left(\pi_{i}(t)-\omega_{m,i}\sum_{k\neq i}^{m}\pi_k(t)\right)\sigma\mathrm{d}B(t), \notag
	\end{align}
	where $\omega_{m,i} = \frac{\omega_i}{m - 1}$.  $\omega_i \in [0, 1]$ is the competition weight parameter. A higher value of $\omega_i$ indicates that insurer $i$ places more weight on relative performance, thereby intensifying strategic interactions in the game.

	\subsection{The wealth processes of the reinsurers}
	    Reinsurer $R_1$  receives claims $\sum_{k=1}^{N_{i}(t)+N(t)}\left(Z^{i}_{k}-l^{0}_{i}(Z^{i}_{k},t)\right)$ 
	    from insurer $i$ at premium rate $c^{1}_{i}(t)$, and subsequently transfers claims $\sum_{k=1}^{N_{i}(t)+N(t)}\left(Z^i_k-l^{0}_{i}(Z^{i}_{k},t)-l^{1}_{i}(Z^{i}_{k},t)\right)$ to reinsurer $R_{2}$ at premium rate $c^{2}_{i}(t)$. For any $ j\in\left\{2,\cdots,n-1\right\}$, 
		reinsurer $R_j$  receives claims $\sum_{k=1}^{N_{i}(t)+N(t)}$ $\big(Z^i_k-\sum_{v=0}^{j-1}l^{v}_{i}(Z^{i}_{k},t)\big)$ 
		from reinsurer $R_{j-1}$ at the premium rate $c^{j}_{i}(t)$, and further cedes claims $\sum_{k=1}^{N_{i}(t)+N(t)}$\quad$\big(Z^i_k-\sum_{v=0}^{j}l^{v}_{i}(Z^{i}_{k},t)\big)$ to reinsurer $R_{j+1}$ at premium rate $c^{j+1}_{i}(t)$. Consequently, the surplus process $U_{R_j}$ of reinsurer $R_j$ follows
	\begin{equation}
		\begin{aligned}
			\mathrm{d}U_{R_j}(t)&=\sum_{i=1}^{m}\left(c^{j}_{i}(t)-c^{j+1}_{i}(t)\right)
			\mathrm{d}t
			-\mathrm{d}\sum_{i=1}^{m}\sum_{k=1}^{N_{i}(t)+N(t)}l^{j}_{i}(Z^{i}_{k},t).
		\end{aligned}
	\end{equation}
	Under the expected value premium principle, the premium rate $c^{j}_{i}(t)$ is given by
	\begin{equation*}
		c^{j}_{i}(t)=\left(1+\theta^{j}_{i}(t)\right)\left(\lambda_{i}+\lambda\right)
		\mathbb{E}\left[Z^{i}-\sum_{k=0}^{j-1}l^{k}_{i}(Z^{i},t)\right],
	\end{equation*}
    where $\theta^{j}_{i}(t)$ is the reinsurance safety loading strategy of reinsurer $R_{j}$ when providing coverage to reinsurer $R_{j-1}$. 
    
  Reinsurer $R_j$ invests an amount $\pi_{R_j}(t)$ in $S_1(t)$, holding the remainder in $S_0(t)$. Its wealth process $Y_j(t)$  therefore satisfies 
	\begin{align}
		\mathrm{d}Y_j(t)
		&=\pi_{R_j}(t)\frac{\mathrm{d}S_1(t)}{S_1(t)}
		+\left(Y_{j}(t)-\pi_{R_j}(t)\right)\frac{\mathrm{d}S_0(t)}{S_0(t)}
		+\mathrm{d}U_{R_j}(t) 
		\notag\\
		&=\Big[rY_j(t)+(\mu-r)\pi_{R_j}(t)
		+\sum_{i=1}^{m}\big(c^{j}_{i}(t)-c^{j+1}_{i}(t)\big)\Big]\mathrm{d}t
		\label{Weathy process of Reinsurance}\\
		&\quad-\mathrm{d}\sum_{i=1}^{m}\sum_{k=1}^{N_{i}(t)+N(t)}
		l^{j}_{i}(Z^{i}_{k},t)
		+\pi_{R_j}(t)\sigma\mathrm{d}B(t). \notag
	\end{align}
	
	\subsection{The optimization problem}
	
	Define the strategy of insurer $i$ as $u_{i}(t):=(\pi_{i}, l^{0}_{i})(t)$, and the strategy of reinsurer $R_j$ as $u_{R_j}(t):=(\pi_{R_j},  l^{j}_{1}, l^{j}_{2}, \cdots,l^{j}_{m},  \theta^{j}_{1}, \theta^{j}_{2}, \cdots,\theta^{j}_{m})(t)$.  The aggregated strategy vector is denoted by $u(t)=\left(u_1, u_2, \cdots, u_{m}, u_{R_1}, u_{R_2}, \cdots, u_{R_n}\right)(t)$. For notational convenience, define $u_{i-}(t)=(u_{1}, \cdots, $ $u_{i-1}, u_{i+1}, \cdots, u_{m}, u_{R_1}, u_{R_2}, \cdots, u_{R_n})(t)$, and
	$u_{R_j-}(t)=(u_{1}, \cdots, u_m, u_{R_{1}}, \cdots, u_{R_{j-1}}$, $ u_{R_{j+1}}, \cdots, u_{R_{n}})(t)$. 
	
	\begin{Definition}
		(Admissible strategy) A strategy $\{u(t)\}$ is said to be admissible if it satisfies the following conditions
		
		(i) The strategy processes $\{u_{i}(t)\}$ and $\{u_{R_j}(t)\}$ are non-negative and  $\left\{\mathcal{F}_t\right\}$ progressively measurable. 
		
		(ii) For any $i\in\left\{1,2,\cdots,m\right\}$,  $c^{n}_{i}(t) \leq  \cdots \leq c^{1}_{i}(t) \leq c^{0}_{i}$,
		$0 \leq Z^{i}-\sum_{k=0}^{n-1}l^{k}_{i}(Z^{i}) \leq \cdots \leq Z^{i}-\sum_{k=0}^{j}l^{k}_{i}(Z^{i}) \leq \cdots \leq Z^{i}-l^{0}_{i}(Z^{i})\leq Z^i$. 
        
		(iii)
		Under the strategy $u(t)$, SDEs (\ref{Weathy process of insurer i}) and (\ref{Weathy process of Reinsurance}) have unique strong solutions $X_i^{u}(t)$ and $Y^{u}_{R_j}(t)$ that
		are càdlàg, $\left\{\mathcal{F}_t\right\}$-adapted, and satisfy the square‑integrability conditions $\mathbb{E}\left[\sup_{t\in[0,T]}\abs{X_{i}^u(t)}^2\right]<\infty$,
		$\mathbb{E}\left[\sup_{t\in[0,T]} \abs{Y_j^u(t)}^2\right]<\infty$.
	\end{Definition}
	
	Denote by $\mathcal{A}=\mathcal{A}_1\times\mathcal{A}_2 \times \cdots \times \mathcal{A}_{m}\times\mathcal{A}_{R_1}\times\cdots\times\mathcal{A}_{R_n}$ the set of all admissible strategies, with  $\mathcal{A}_{i}$ and $\mathcal{A}_{R_j}$ being the admissible strategy sets for insurer $i$  and reinsurer $R_j$, respectively.
	
	Within the mean–variance framework, the objective functionals for insurer $i$ and reinsurer $R_{j}$ are given by
	
	   \begin{equation*}
	   	\left\{
	   	\begin{aligned}
	   	&J^{u_{i}, u_{i-}}_{i}(t, x_{i})
	   	=\mathbb{E}_{t, x_{i}}\left[\widehat{X_{i}(T)}\right]-\frac{\gamma_{i}}{2}\mathrm{Var}_{t, x_{i}}\left[\widehat{X_{i}(T)}\right],
	   	\\
	   	&J^{u_{R_j}, u_{{R_j}-}}_{R_j}(t, y_j)=\mathbb{E}_{t, y_{j}}\left[Y_j(T)\right]-\frac{\gamma_{R_j}}{2}\mathrm{Var}_{t, y_{j}}\left[Y_j(T)\right].
	   	\end{aligned}
	   	\right.
	   \end{equation*}
	   where $\gamma_{i}$ and $\gamma_{R_j}$ are the risk aversion coefficients of insurer $i$ and reinsurer $R_j$, respectively. 
	   
	   The optimization problem of reinsurance chain can be decomposed into two coupled problems: one
	   centering on the insurers and the other on the reinsurers. 
	   
	   The value function of insurer $i$ is given by
	   		\begin{equation}
	   			\label{Insurance-value function}
	   			V^{i}(t, x_{i})=\sup_{u_{i}\in\mathcal{A}_{i}}
	   			J^{u_{i}, u_{i-}}_{i}(t, x_{i}).  
	   		\end{equation}	
	   		
	   		The  value function of reinsurer $R_j$ is given by
	   		\begin{equation}
	   			\label{reinsurance-value function}
	   			V^{R_j}(t, y_{j})=
	   			\sup_{u_{R_j}\in \mathcal{A}_{R_j}}J^{u_{R_j}, u_{{R_j}-}}_{R_j}(t,y_{j}).
	   		\end{equation}

\section{Nash equilibrium strategies in the reinsurance chain}
    In the reinsurance chain framework,  each reinsurer $R_j$ acts as a leader for its immediate downstream reinsurer $R_{j-1}$, and as a follower of its upstream reinsurer $R_{j+1}$. The Nash equilibrium strategies for insurers and reinsurers are defined as follows.
	\begin{Definition}
		(Nash equilibrium) For any $t\in[0,T]$, consider the strategy $\left(u_{i}^*(t), u_{i-}^*(t) \right)$. For each insurer \(i \in \{1,2, \cdots, m\}\), any $\varepsilon>0$ and any admissible control $u_{i}(t)\in\mathcal{A}_{i}$, define $u_{i}^\varepsilon(t)$ by
		\begin{equation}
			u_{i}^{\varepsilon}(s)=
			\left\{
			\begin{aligned}
				&u_{i}(s), s\in[t, t+\varepsilon], 
				\\
				&u^*_{i}(s), s\in[t+\varepsilon, T]. 
			\end{aligned}
			\right. 
		\end{equation}
		Then $u^*_{i}(t)$ is called an equilibrium strategy  for insurer $i$ if for every $u_{i}(t)\in\mathcal{A}_{i}$
		\begin{equation*}
        \liminf_{\varepsilon\rightarrow 0}\frac{J^{u^*_{i}, u^*_{i-}}_{i}(t, x_{i})-J^{u^{\varepsilon}_{i}, u_{i-}^*}_{i}(t, x_{i})}{\varepsilon}\ge0. 
		\end{equation*}

		For the reinsurance chain, fixed $ j\in\left\{1,2,\cdots,n\right\}$ and assume that $ u_{R_{j-1}}^*$
$\big(u^*_{R_{0}}=(u^*_1,u^*_2,\cdots,$ $ u^{*}_{n})\big)$ is known. For any $ u_{R_j}\in\mathcal{A}_{R_j}$, define
		\begin{equation*}
			u^{\varepsilon}_{R_j}(s)=
			\left\{	
			\begin{aligned}
				&u_{R_j}(s) , s\in [t, t+\varepsilon], 
				\\
				&u_{R_j}^*(s) , s\in [t+\varepsilon, T]. 
			\end{aligned}
			\right. 
		\end{equation*}
	 Then $u_{R_j}^*(t)$ is an equilibrium strategy for reinsurer $R_j$ if for any admissible $u_{R_j}\in\mathcal{A}_{R_{j}}$,
	\begin{equation*}
		\liminf_{\varepsilon\rightarrow 0}\frac{J^{u^*_{R_j},  u^*_{{R_j}-}}_{R_j}(t, y_j)-J^{u^{\varepsilon}_{R_j}, u^*_{{R_j}-}}_{R_j}(t, y_j)}{\varepsilon}\ge0. 
	\end{equation*}
	\end{Definition}
	
	Next, we define the corresponding infinitesimal generators. Let $\mathcal{L}^{u_{i}, u_{i-}}$ and  $\mathcal{L}^{u_{R_{j}}, u_{R_{j-}}}$ be given by
		\begin{align*}
			\quad\mathcal{L}^{u_{i}, u_{i-}}W(t, x_{i})
            &=W_t+
		\Biggl[
		rx_{i}+(\mu-r)\left(\pi_{i}(t)-\omega_{m,i}\sum_{k\neq i}^{m}\pi_{k}(t)\right)+\left(c^{0}_{i}-c^{1}_{i}(t)\right)
		\\
		&\quad
		-
		\omega_{m,i}\sum_{k\neq i}^{m}\left(c^0_k-c^1_k(t)\right)
		\Biggr]W_{x_{i}}
		+
		\frac{1}{2}\left[\left(\pi_{i}(t)-\omega_{m,i}\sum_{k\neq i}^{m}\pi_k(t)\right)^2\sigma^2\right]W_{x_{i}x_{i}}
		\\
		&\quad
		+\lambda_{i}\mathbb{E}\left[W\left(t,x_{i}-l^{0}_{i}(Z^i)\right)-W(t,x_{i})\right]
		\\
		&\quad
		+
		\sum_{k\neq i}^{m}\lambda_{k} \mathbb{E}\left[W\left(t,x_{i}+\omega_{m,i} l^{0}_{k}(Z^k)\right)-W(t,x_{i})\right]
		\\
		&\quad+\lambda \mathbb{E}\left[W\left(t,x_{i}-l^{0}_{i}(Z^i)+\omega_{m,i}\sum_{k\neq i}^{m} l^{0}_{k}(Z^k)\right)-W(t,x_{i})\right].
		\\
		\mathcal{L}^{u_{R_j}, u_{R_j-}}W(t, y_j)&=W_t
			+\Big[
			ry_j(t)+\left(\mu-r\right)\pi_{R_j}(t)+\sum_{i=1}^{m}\left(c^{j}_{i}(t)-c^{j+1}_{i}(t)\right)
			\Big]W_{y_j}
			\\
			&
			+
			\frac{1}{2}\pi_{R_j}^2(t)\sigma^2
			W_{y_jy_j}
			+
			\sum_{i=1}^{m}\lambda_{i} \mathbb{E}\left[W\left(t,y_{i}-l^{j}_{i}(Z^i)\right)-W(t,y_{i})\right]
			\\
			&
			+
			\lambda \mathbb{E}\left[W\left(t,y_{j}-\sum_{i=1}^{m}l^{j}_{i}(Z^i)\right)-W(t,y_{j})\right].
		\end{align*}
	\begin{Theorem}
		(Verification Theorem)
		If real functions $w^{i}(t,x_{i}), g^{i}(t,x_{i}), w^{R_j}(t,y_j), g^{R_j}(t,y_j)$ satisfies the following extended HJB equation: 
		\begin{equation*}
			\left\{
			\begin{aligned}
				&
				\sup_{u_{i}\in\mathcal{A}_{i}}
				\left\{
				\mathcal{L}^{u_{i}, u_{i-}}
				w^{i}(t, x_{i})-\frac{\gamma_{i}}{2}\mathcal{L}^{u_{i}, u_{i-}}g^{i}(t, x_{i})^2+\gamma_{i}
				g^{i}(t, x_{i})\mathcal{L}^{u_{i}, u_{i-}}g^{i}(t, x_{i})
				\right\}=0,
				\\
				&
				w^{i}(T, x_{i})=x_{i},\quad  
				g^{i}(T, x_{i})=x_{i},\quad  
				\mathcal{L}^{u^*_{i}, u^*_{i-}}g^{i}(t, x_{i})=0.  
			\end{aligned}
			\right.
		\end{equation*}
		
		\begin{equation*}
			\left\{
			\begin{aligned}
				&
				\sup_{u_{R_j}\in\mathcal{A}_{R_j}}
				\Bigl\{
				\mathcal{L}^{u_{R_j}, u_{R_j-}}w^{R_j}(t, y_j)
				-\frac{\gamma_{R_j}}{2}\mathcal{L}^{u_{R_j}, u_{R_j-}}g^{R_{j}}(t, y_j)^2
				\\
				&\qquad\qquad
				+\gamma_{{R_j}}
				g^{R_{j}}(t, y_{j})\mathcal{L}^{u_{R_j}, u_{R_j-}}g^{R_j}(t, y_{j})
				\Bigr\}=0,
				\\
				&
				w^{R_j}(T, y_j)=y_j,\quad
				g^{R_j}(T, y_j)=y_j,\quad
				\mathcal{L}^{u^*_{R_j}, u^*_{R_j-}}g^{R_j}(t, y_j)=0. 
			\end{aligned}
			\right.
		\end{equation*}
		and satisfy
		  \begin{equation*}
			\begin{aligned}
			u_{i}^*=&\arg \sup_{u_{i}\in\mathcal{A}_{i}}
			\left\{
			\mathcal{L}^{u_{i}, u_{i-}}w^{i}(t, x_{i})-\frac{\gamma_{i}}{2}\mathcal{L}^{u_{i}, u_{i-}}g^{i}(t, x_{i})^2+\gamma_{i}
			g^{i}(t, x_{i})\mathcal{L}^{u_{i}, u_{i-}}g^{i}(t, x_{i})
			\right\}, 
			\\
			u_{R_j}^*=&\arg\sup_{u_{R_j}
				\in\mathcal{A}_{R_j}}
			\Bigr\{
			\mathcal{L}^{u_{R_j}, u_{R_j-}}w^{R_j}(t, y_{j})
			-
			\frac{\gamma_{R_j}}{2}\mathcal{L}^{u_{R_j}, u_{R_j-}}g^{R_{j}}(t, y_j)^2
			\\
			&\qquad\qquad\quad
			+
			\gamma_{{R_j}}
			g^{R_{j}}(t, y_{j})\mathcal{L}^{u_{R_j}, u_{R_j-}}g^{R_j}(t, y_{j})
			\Bigr\},
		 \end{aligned}
		\end{equation*}
		then $w^{i}(t,x_{i})=V^{i}(t,x_{i}),w^{R_j}(t,y_j)=V^{R_j}(t,y_j)$, $u^*(t)\in\mathcal{A}$ is the equilibrium strategy.
	\end{Theorem}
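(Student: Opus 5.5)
The verification argument follows the standard time-inconsistent mean–variance scheme of Björk and Murgoci, adapted here to jump-diffusions, applied separately to each player while the strategies of all other players are frozen at their candidate equilibrium values. We fix insurer $i$ and treat $u^*_{i-}$ as given, so that $\widehat X_i$ becomes a controlled jump-diffusion driven only by $u_i$ with generator $\mathcal{L}^{u_i,u^*_{i-}}$. The reinsurer $R_j$ is handled in the same way, taking $u^*_{R_{j-1}}$ (and hence the ceded risks) as known, in keeping with the Stackelberg ordering in the definition of Nash equilibrium. The argument is identical for both families, so we write it for insurer $i$.

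\textbf{Step 1 (probabilistic representation of $g^i$).} By Dynkin's formula for jump-diffusions, together with $\mathcal{L}^{u^*_i,u^*_{i-}}g^i=0$ and $g^i(T,x)=x$, we get $g^i(t,x_i)=\mathbb{E}_{t,x_i}[\widehat X_i^{u^*}(T)]$. Applying Dynkin again to $w^i$, the HJB equation evaluated at the maximiser $u^*_i$ gives
\[
\mathcal{L}^{u^*}w^i=\tfrac{\gamma_i}{2}\left(\mathcal{L}^{u^*}(g^i)^2-2g^i\mathcal{L}^{u^*}g^i\right)=\tfrac{\gamma_i}{2}\mathcal{L}^{u^*}(g^i)^2 .
\]
Hence $w^i(t,x)=\mathbb{E}[\widehat X_i(T)]-\tfrac{\gamma_i}{2}\left(\mathbb{E}[\widehat X_i(T)^2]-g^i(t,x)^2\right)=J_i^{u^*}(t,x)$. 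This identifies $w^i$ with the equilibrium value $V^i$ in the paper's sense.

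\textbf{Step 2 (local perturbation).} For the spike control $u^\varepsilon_i$ we condition at $t+\varepsilon$. Since $u^\varepsilon_i=u^*_i$ on $[t+\varepsilon,T]$, the value of the continuation problem at $t+\varepsilon$ is $w^i(t+\varepsilon,\widehat X^{u^\varepsilon}(t+\varepsilon))$, and its conditional mean is $g^i(t+\varepsilon,\cdot)$. The law of total variance then gives
\[
J_i^{u^\varepsilon}(t,x)=\mathbb{E}_{t,x}\left[w^i(t+\varepsilon,\widehat X^{u^\varepsilon}_{t+\varepsilon})\right]-\tfrac{\gamma_i}{2}\left(\mathbb{E}_{t,x}\left[g^i(t+\varepsilon,\cdot)^2\right]-\left(\mathbb{E}_{t,x}\left[g^i(t+\varepsilon,\cdot)\right]\right)^2\right).
\]
Expanding each expectation by Dynkin over $[t,t+\varepsilon]$ with the constant (or right-continuous) control $u_i$ yields
\[
J_i^{u^\varepsilon}=w^i+\varepsilon\left(\mathcal{L}^{u_i}w^i-\tfrac{\gamma_i}{2}\mathcal{L}^{u_i}(g^i)^2+\gamma_i g^i\mathcal{L}^{u_i}g^i\right)+o(\varepsilon).
\]
This uses $(\mathbb{E}g)^2=(g^i)^2+2\varepsilon g^i\mathcal{L}^{u_i}g^i+o(\varepsilon)$. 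Because the supremum in the HJB is zero, the bracket is $\le 0$, so $\liminf_{\varepsilon\to0}\left(J^{u^*}-J^{u^\varepsilon}\right)/\varepsilon\ge0$ and $u^*_i$ is an equilibrium control. The same computation for $R_j$, using $\mathcal{L}^{u_{R_j},u^*_{R_j-}}$, yields the reinsurer statement.

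\textbf{Main obstacle.} The delicate point is justifying Dynkin's formula and the $o(\varepsilon)$ remainders in the presence of compound Poisson jumps: the local-martingale parts must be true martingales, and the error terms must vanish uniformly. We plan to assume that $w,g$ are $C^{1,2}$ with polynomial (quadratic) growth. This will be checked a posteriori, since the explicit solutions are affine or quadratic in the state. Combined with admissibility (iii), namely $\mathbb{E}\sup|X|^2<\infty$, and the finite second moments $b_i^2$ of the claims, this gives integrability of the compensated jump integrals, and dominated convergence then controls the remainders. Existence of the maximiser, which admissibility needs, is taken from the hypothesis.
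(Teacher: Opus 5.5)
Your proposal is correct and follows essentially the same route as the paper, which gives no argument of its own but defers to the Bj\"{o}rk--Murgoci-type verification of \textcite{chen2019stochastic}. That argument is exactly yours: Dynkin's formula applied to $g^i$, $w^i$ and $(g^i)^2$ to get $w^i=J_i^{u^*}$, then the total-variance decomposition at $t+\varepsilon$ and a first-order expansion that reproduces the HJB bracket. The one point to keep straight is that this identifies $w^i$ with the equilibrium value $J_i^{u^*_i,u^*_{i-}}(t,x_i)$, not with the supremum over $\mathcal{A}_i$ by which the paper literally defines $V^i$ (the two generally differ for a time-inconsistent mean--variance criterion), so your phrase ``$V^i$ in the paper's sense'' should be read in that equilibrium sense.
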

	\begin{proof}
		The proof follows the verification argument developed in \textcite{chen2019stochastic}.
	\end{proof}
	\subsection{The case of proportional reinsurance}
	Assume that both insurers and reinsurers adopt proportional reinsurance contracts to manage their risk exposures. The amount of retained claims in the reinsurance chain is given by
	\begin{equation*}
		l^{j}_{i}(Z^{i},t)=\left(q^{j}_{i}(t)-q^{j+1}_{i}(t)\right)Z^{i},
	\end{equation*} 
	where  $q^1_{i}(t)$ denotes the proportion of insurer  $i$'s risk transferred to the first reinsurer $R_{1}$, $q^{j+1}_{i}(t)$ 
	denotes the proportion of risk ceded by reinsurer $R_{j}$ to the next reinsurer $R_{j+1}$.  We set $q^0_{i}(t)=1$.
	
	The corresponding premium rates $c^{0}_{i}$ and $c^{j}_{i}(t)$ become
	
	\begin{equation*}
		\left\{
		\begin{aligned}
			&c^{0}_{i}=\left(1+\eta_{i}\right)\left(\lambda_{i}+\lambda\right)\mu_{i},
			\\
			&c^{j}_{i}(t)=\left(1+\theta^{j}_{i}(t)\right)\left(\lambda_{i}+\lambda\right)q^{j}_{i}(t)\mu_{i}.
		\end{aligned}
		\right.
	\end{equation*}
	For convenience, denote $o_{i}=\left(\lambda_{i}+\lambda\right)\mu_{i}$, $\sigma_{i}=\sqrt{\left(\lambda_{i}+\lambda\right)b_{i}^2}$.
	
	\begin{Theorem}
		\label{Theorem-Proportional reinsurance-equilibrium strategy}
		$\forall i,k\in\left\{1,2,\cdots, m\right\}$ satisfies $i\neq k$. Define m-dimensional matrices $\mathbf{\Psi}_{j}(t), \mathbf{H}_{j}(t), \mathbf{A}_{j}(t) \text{ and } \mathbf{B}_{j}(t)$,  and m-dimensional column vectors $\vec{C}_{j}$ and $\vec{G}_{j}(t)$, such that
		
		\begin{align*}
			& \Lambda^{j+1}_{i}(t) =
			\left\{
			\begin{aligned}
				&
				\frac{1}{\left(1+\omega_{m,i} \frac{\lambda \mu_{i}^2}{\sigma_{i}^2}\right)} 
				\cdot
				\frac{ o_i}{\sigma_{i}^2 \gamma_i \mathrm{e}^{r(T-t)}},\quad j=0,
				\\
				&
				\frac{1}{\left(1-\frac{\lambda \mu_{i}^2}{\sigma_{i}^2}\right)}
				\cdot
				\frac{o_i}{\sigma_{i}^2\gamma_{{R_j}}\mathrm{e}^{r(T-t)}},\quad j=1,\cdots,n-1.
			\end{aligned}
			\right.
			\\
			& \Phi^{j+1}_{i} =
			\left\{
			\begin{aligned}
				&
				\frac{ \omega_{m,i}
					\frac{\lambda \mu_{i}}{\sigma_{i}^2} }{\left(1+\omega_{m,i} \frac{\lambda \mu_{i}^2}{\sigma_{i}^2}\right)}
				,\quad j=0,
				\\
				&
				-
				\frac{1}{\left(1-\frac{\lambda \mu_{i}^2}{\sigma_{i}^2}\right)}
				\cdot
				\frac{\lambda \mu_{i}}{\sigma_{i}^2},\quad j=1,\cdots,n-1.
			\end{aligned}
			\right.
			\\
			& \Psi^{j+1}_{ii}(t) = \Lambda^{j+1}_{i}(t)
			+ \frac{ \mu_{i} \Phi^{j+1}_{i} \Lambda^{j+1}_{i}(t) }
			{1-\sum_{v=1}^{m} \mu_{v} \Phi^{j+1}_{v}} .
			\\
			& \Psi^{j+1}_{ik}(t) = \Phi^{j+1}_{i} 
			\frac{ \mu_{k} \Lambda^{j+1}_{k}(t)}
			{1-\sum_{v=1}^{m} \mu_{v} \Phi^{j+1}_{v}} .
			\\
			& \vec{G}_{j+1}(t) =
			\left\{
			\begin{aligned}
				&0, \quad j=0,
				\\
				&\vec{G}_{j}(t)+\mathbf{H}_{j}(t)\left(\mathbf{A}_{j}\right)^{-1}
				\vec{C}_{j}, \quad j=1, \cdots, n-1.
			\end{aligned}
			\right.
			\\
			& C^{j}_{i}(t) =
			\left\{
			\begin{aligned}
				&(1-G^{j}_{i}(t))o_i, \quad j=1,\cdots,n-1,
				\\
				&(1-G^{n}_{i}(t))o_i
				+
				\mathrm{e}^{r(T-t)}\gamma_{R_n}
				\Bigg[
				\left(1-G^{n}_{i}(t)\right)\left(H^{n}_{ii}(t)\sigma_i^2
				+
				\sum_{k\neq i}^{m}
				\rho_{ki} H^{n}_{ki}(t)\sigma_i \sigma_k\right)
				\\
				&
				+
				\sum_{k\neq i}^{m}
				\left(1-G^{n}_{k}(t)\right)
				\left(H^{n}_{ki}(t)\sigma_k^2+
				\sum_{v\neq k}^{m}\rho_{kv}H^{n}_{vi}(t)
				\sigma_k\sigma_v\right)
				\Bigg], \quad j=n.
			\end{aligned}
			\right.
			\\
			& \mathbf{H}_{j+1}(t) =
			\left\{
			\begin{aligned}
				&\mathbf{\Psi}_{1}(t), 
				\quad j=0,
				\\
				&
				\mathbf{H}_{j}(t)\left(\mathbf{A}_{j}\right)^{-1}\mathbf{B}_{j}+\mathbf{\Psi_{j+1}}(t), 
				\quad j=1,2,\cdots,n-1 .
			\end{aligned}
			\right.
			\\
			& A^{j}_{ii} =
			\left\{
			\begin{aligned}
				&2H^{j}_{ii}(t)o_i, \quad j=1, \cdots, n-1, 
				\\
				&2H^{n}_{ii}(t)o_i+
				\mathrm{e}^{r(T-t)}\gamma_{R_n}
				\Bigg[ H^{n}_{ii}(t)
				\left(H^{n}_{ii}(t)\sigma_i^2
				+
				\sum_{k\neq i}^{m}
				\rho_{ki} H^{n}_{ki}(t)\sigma_i \sigma_k\right)
				\\
				&
				+ 
				\sum_{k\neq i}^{m}
				H^{n}_{ki}(t)
				\left(H^{n}_{ki}(t)\sigma_k^2+
				\sum_{v\neq k}^{m}\rho_{kv}H^{n}_{vi}(t)
				\sigma_k\sigma_v\right)
				\Bigg],\quad j=n.
			\end{aligned}
			\right.
			\\
			& A^{j}_{ik} =
			\left\{
			\begin{aligned}
				&
				H^{j}_{ik}(t)o_{i}+H^{j}_{ki}(t)o_{k}, \quad j=1,\dots, n-1,
				\\
				&
				\left[
				H^{n}_{ik}(t)o_{i}+H^{n}_{ki}(t)o_{k}\right]
				+
				\mathrm{e}^{r(T-t)}\gamma_{R_n}
				\sum_{k \neq i}^{m} 				
				\Bigg[
				H^{n}_{ik}(t)
				\left(H^{n}_{ii}(t)\sigma_i^2
				+
				\sum_{k\neq i}^{m}
				\rho_{ki} H^{n}_{ki}(t)\sigma_i \sigma_k\right)
				\\
				&
				+
				\sum_{v \neq i}^{m} H^{n}_{vk}
				\left(H^{n}_{vi}(t)\sigma_v^2+
				\sum_{l\neq v}^{m}\rho_{vl}H^{n}_{li}(t)
				\sigma_l\sigma_v\right)
				\Bigg], \quad j=n
				.  
			\end{aligned}
			\right.
			\\
			& B^{j}_{ii} =
			\left\{
			\begin{aligned}
				&
				\mathrm{e}^{r(T-t)}\gamma_{R_j}
				\Bigg[ \Psi^{j+1}_{ii}(t)
				\left(H^{j}_{ii}(t)\sigma_i^2
				+
				\sum_{k\neq i}^{m}
				\rho_{ki} H^{j}_{ki}(t)\sigma_i \sigma_k\right)
				\\
				&
				+ 
				\sum_{k\neq i}^{m}
				\Psi^{j+1}_{ki}(t)
				\left(H^{j}_{ki}(t)\sigma_k^2+
				\sum_{v\neq k}^{m}\rho_{kv}H^{j}_{vi}(t)
				\sigma_k\sigma_v\right)
				\Bigg], \quad j=1, \cdots, n-1, 
				\\
				&
				0,\quad j=n.
			\end{aligned}
			\right.
			\\
			& B^{j}_{ik} =
			\left\{
			\begin{aligned}
				&
				\sum_{k \neq i}^{m} 				
				\Bigg[
				\Psi^{j+1}_{ik}(t)
				\left(H^{j}_{ii}(t)\sigma_i^2
				+
				\sum_{k\neq i}^{m}
				\rho_{ki} H^{j}_{ki}(t)\sigma_i \sigma_k\right)
				\\
				&
				+
				\sum_{v \neq i}^{m} \Psi^{j+1}_{vk}
				\left(H^{j}_{vi}(t)\sigma_v^2+
				\sum_{l\neq v}^{m}\rho_{vl}H^{j}_{li}(t)
				\sigma_l\sigma_v\right)
				\Bigg], \quad j=1,\dots, n-1,
				\\
				&
				0
				, \quad j=n
				.  
			\end{aligned}
			\right.
		\end{align*}

		The  equilibrium investment strategy of insurers and reinsurers are given by

		\begin{equation*}
			\left\{
			\begin{aligned}
				&\pi_{i}^*(t)=
				\left[
				\frac{
					\frac{\omega_{m,i}}{1+\omega_{m,i}}\sum_{v=1}^{m}
					\frac{1}{(1+\omega_{m,v})\gamma_{v}}
				}
				{1-\sum_{v=1}^{m}\frac{\omega_{m,v}}{1+\omega_{m,v}}}
				+
				\frac{1}{(1+\omega_{m,i})\gamma_{i}}
				\right]
				\cdot
				\frac{\mu-r}{\mathrm{e}^{r(T-t)}\sigma^2},
				\\
				&\pi_{R_j}^*(t)=\frac{\mu-r}{\gamma_{R_{j}}\mathrm{e}^{r(T-t)}\sigma^2}. 
			\end{aligned}
			\right. 
		\end{equation*}
		
		The equilibrium reinsurance strategy of reinsurer $R_j$ is 
		\begin{equation*}
			\begin{aligned}
				\vec{q}_{j+1}(t)^*=\mathbf{1}-\vec{G}_{j+1}(t)-\mathbf{H}_{j+1}(t)\vec{\theta}_{j+1}(t). 
			\end{aligned}
		\end{equation*}

		The equilibrium safety loading strategy of reinsurer $R_j$ is as follows.
		\begin{equation*}
			\vec{\theta}_{j}(t)^*=\left(\mathbf{A}_{j}\right)^{-1}\vec{C}_{j}
			+
			\left(\mathbf{A}_{j}\right)^{-1}\mathbf{B}_{j}\vec{\theta}_{j+1}(t).		\end{equation*}
	\end{Theorem}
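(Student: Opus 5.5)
We prove the theorem by backward induction along the chain, using the Verification Theorem. First the insurers' Nash game is solved as a function of the loadings $\vec\theta_1$. Then each reinsurer $R_j$, for $j=1,\dots,n$, is solved taking the response of its downstream follower(s) as given. Throughout we take the ansatz $w(t,x)=g(t,x)=\mathrm{e}^{r(T-t)}x+\beta(t)$ for every player, with $\beta(T)=0$. In the extended HJB equation the condition $\mathcal{L}^{u^*}g=0$ fixes $\beta$. Since $w=g$, the terms $-\frac{\gamma}{2}\mathcal{L}g^2+\gamma g\mathcal{L}g$ collapse to $-\frac{\gamma}{2}$ times the quadratic variation part. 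For the insurers this is
\[
-\frac{\gamma_i}{2}\mathrm{e}^{2r(T-t)}\Big[\sigma^2\big(\pi_i-\omega_{m,i}\textstyle\sum_{k\ne i}\pi_k\big)^2+\text{jump second moments}\Big].
\]
For $R_j$ it is the analogous expression in $\pi_{R_j}$ and the layer retentions $(q^j_i-q^{j+1}_i)Z^i$. The jump second moments are computed with $\mathbb{E}[(Z^i)^2]=b_i^2$. Common shocks $N(t)$ produce the cross terms $\lambda\mu_i\mu_k$, which we write as $\rho_{ik}\sigma_i\sigma_k$. Each maximization is then a concave quadratic problem, solved by first-order conditions.

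\textbf{Investment.} The investment part decouples from reinsurance. For $R_j$ the first-order condition gives $\pi_{R_j}^*=(\mu-r)/(\gamma_{R_j}\mathrm{e}^{r(T-t)}\sigma^2)$ directly. For insurer $i$ it gives
\[
\pi_i-\omega_{m,i}\textstyle\sum_{k\ne i}\pi_k=\frac{\mu-r}{\gamma_i\mathrm{e}^{r(T-t)}\sigma^2}.
\]
Writing $\sum_{k\ne i}\pi_k=S-\pi_i$, this becomes $(1+\omega_{m,i})\pi_i=\omega_{m,i}S+\frac{\mu-r}{\gamma_i\mathrm{e}^{r(T-t)}\sigma^2}$. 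Summing over $i$ solves for $S$, and substituting back yields the stated $\pi_i^*$.

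\textbf{Insurers' reinsurance game.} Insurer $i$ maximizes over $q^1_i$. Its objective is linear through the premium term $-c^1_i=-(1+\theta^1_i)o_iq^1_i$. It is quadratic through the retained-claim variance in $1-q^1_i$, which involves $\sigma_i^2$ and, via the common shock, the cross term $-\omega_{m,i}\lambda\mu_i\sum_{k\ne i}\mu_k(1-q^1_k)$. The first-order condition is
\[
1-q^1_i=\Lambda^1_i\theta^1_i+\Phi^1_i\textstyle\sum_{k}\mu_k(1-q^1_k),
\]
which is affine in the aggregate $\sum_k\mu_k(1-q^1_k)$, with $\Lambda^1_i,\Phi^1_i$ as defined in the statement. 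Multiplying by $\mu_i$ and summing determines this aggregate. This rank-one inversion (Sherman--Morrison) produces exactly $\Psi^1_{ii}$ and $\Psi^1_{ik}$, hence $\vec q_1=\mathbf 1-\mathbf\Psi_1\vec\theta_1$. This is the statement with $\vec G_1=0$ and $\mathbf H_1=\mathbf\Psi_1$.

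\textbf{Reinsurer layers.} We proceed by induction. Suppose $\vec q_j=\mathbf 1-\vec G_j-\mathbf H_j\vec\theta_j$. Reinsurer $R_j$ chooses $\vec q_{j+1}$ and $\vec\theta_j$, and its retained layer is $\vec q_j-\vec q_{j+1}$. For fixed $\theta$, the first-order condition in $q^{j+1}_i$ has the same rank-one structure as above. The upstream premium $c^{j+1}$ now enters with a plus sign, and the own-variance term enters with a minus sign, which explains the $1-\lambda\mu_i^2/\sigma_i^2$ factors and the sign of $\Phi^{j+1}_i$. This gives
\[
\vec q_j-\vec q_{j+1}=\mathbf\Psi_{j+1}\vec\theta_{j+1}.
\]
Next, $R_j$ optimizes over $\vec\theta_j$ with $\vec q_j(\vec\theta_j)$ substituted; this is the leader step. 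The revenue $\sum_i(1+\theta^j_i)o_iq^j_i$ is quadratic in $\vec\theta_j$ with Hessian involving $\mathbf H_j^\top+\mathbf H_j$, which gives the matrix $\mathbf A_j$. The variance term couples $\vec\theta_j$ with $\vec\theta_{j+1}$ through $\mathbf\Psi_{j+1}$, which gives $\mathbf B_j$. For $j=n$ there is no cession, so $\mathbf B_n=0$ and the variance enters $\mathbf A_n$ and $\vec C_n$ instead. Solving the resulting linear first-order system gives $\vec\theta_j^*=\mathbf A_j^{-1}(\vec C_j+\mathbf B_j\vec\theta_{j+1})$. Substituting this into $\vec q_{j+1}=\vec q_j-\mathbf\Psi_{j+1}\vec\theta_{j+1}$ yields the recursions for $\vec G_{j+1}$ and $\mathbf H_{j+1}$, which closes the induction.

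\textbf{Remaining checks and main obstacle.} It remains to verify that $\beta$ solves a linear ODE, so the Verification Theorem applies, and to check admissibility. The main obstacle is the bookkeeping in the leader step. The $\vec\theta_j$-gradient of the variance of $\sum_i(\vec q_j-\vec q_{j+1})_iZ^i$ has to be expanded with correlated cross terms and matched to the entries of $\mathbf A_j$ and $\mathbf B_j$. We also need $\mathbf A_j$ to be invertible and the quadratic form to be concave, and these we would assume or impose as parameter conditions.
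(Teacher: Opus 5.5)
Your argument follows the paper's proof step for step. You use the same linear-in-wealth ansatz and the same first-order conditions. You use the same aggregation (multiply by $\mu_i$ and sum) to produce $\mathbf{\Psi}_{j+1}$. You take the same leader condition in $\vec{\theta}_j$: substitute $\vec q_j=\mathbf 1-\vec G_j-\mathbf H_j\vec\theta_j$, differentiate with $\vec q_{j+1}$ held fixed, then insert $\vec q_j-\vec q_{j+1}=\mathbf{\Psi}_{j+1}\vec\theta_{j+1}$. You close with the same recursion.

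The one step that fails as written is the ansatz $w=g=\mathrm{e}^{r(T-t)}x+\beta(t)$. In the extended HJB system, $w$ stands for $\mathbb{E}[X_T]-\frac{\gamma}{2}\mathrm{Var}[X_T]$ and $g$ for $\mathbb{E}[X_T]$ under the equilibrium, so they cannot coincide. If $w=g$, then $\mathcal{L}^{u^*}g=0$ gives $\mathcal{L}^{u^*}w=0$. The HJB equation at $u^*$ then reduces to $-\frac{\gamma}{2}\bigl[\mathcal{L}^{u^*}(g^2)-2g\,\mathcal{L}^{u^*}g\bigr]=0$, which says the quadratic-variation density of $g$ under $u^*$ vanishes. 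It does not vanish: it is at least $\mathrm{e}^{2r(T-t)}\sigma^2$ times the square of the net risky position. That position equals $\frac{\mu-r}{\gamma\mathrm{e}^{r(T-t)}\sigma^2}\neq 0$ for every player. So no single $\beta$ satisfies both equations, and the verification step you defer (``$\beta$ solves a linear ODE'') cannot be carried out.

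The repair is what the paper does: take $w=\mathrm{e}^{r(T-t)}x+D(t)$ and $g=\mathrm{e}^{r(T-t)}x+d(t)$ with $D\neq d$. Then $d$ is fixed by $\mathcal{L}^{u^*}g=0$, which involves the drift only. $D$ is fixed by the HJB equation, which involves the drift minus $\frac{\gamma}{2}\mathrm{e}^{2r(T-t)}$ times the quadratic variation. Both are explicit integrals vanishing at $T$.

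None of your first-order conditions change, because they involve only $w_x=g_x=\mathrm{e}^{r(T-t)}$ and the quadratic variation of $g$. Also, $-\frac{\gamma}{2}\mathcal{L}(g^2)+\gamma g\,\mathcal{L}g=-\frac{\gamma}{2}\bigl[\mathcal{L}(g^2)-2g\,\mathcal{L}g\bigr]$ is an identity for any $g$, not a consequence of $w=g$.

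Two smaller points. First, the linear part of each objective must be taken net of the jump compensator (expected retained claims), so that the effective price is $\theta$ rather than $1+\theta$. Working from the gross revenue $\sum_i(1+\theta^j_i)o_iq^j_i$ alone would add a spurious $-\sum_k H^j_{ki}o_k$ to $C^j_i$.

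Second, your explanation of the factor $1-\lambda\mu_i^2/\sigma_i^2$ and of the sign of $\Phi^{j+1}_i$ is off. The premium $c^{j+1}$ enters $R_j$'s drift as a cost, exactly as $c^1_i$ enters the insurer's. The difference comes from the common-shock covariance. It appears as $-\omega_{m,i}\lambda\mu_i\mu_k$ in insurer $i$'s relative-performance variance, but as $+\lambda\mu_i\mu_k$ in $R_j$'s aggregated layer.
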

	\begin{proof}
		 See Appendix \ref{proof-Proportional reinsurance-equilibrium strategy}. 
	\end{proof}	

	In contrast to existing studies restricted to single-layer reinsurance structures, the  equilibrium strategies established in the theorem above show that the competitive interactions among insurers do not terminate at the initial reinsurance stage. Instead, such interactions propagate along the reinsurance chain, inducing persistent modifications in the structural configuration and equilibrium characterization of subsequent reinsurance contracts.

	\subsection{The case of excess-of-loss reinsurance}
This section considers the case where insurers and reinsurers adopt excess-of-loss contracts. The retained claim amount $l^{j}_{i}(Z^{i},t)$ becomes
	\begin{equation}
		l^{j}_{i}(Z^{i},t)=\min\left(Z^{i}-\sum_{k=0}^{j-1}l^k_{i}(Z^i,t), a^{j+1}_{i}(t)\right),
	\end{equation}
	where $a^{j}_i(t)$ is  the retention level of reinsurer $R_{j-1}$, and $a^{1}_i(t)$ is  the retention level of insurer $i$.  
	
	The premium rates $c^{0}_{i}$ and $c^{j}_{i}(t)$ become
	\begin{equation*}
		\left\{
		\begin{aligned}
			&c^{0}_{i}=\left(1+\eta_{i}\right)\left(\lambda_{i}+\lambda\right)\mu_{i},
			\\
			&c^{j}_{i}(t)=\left(1+\theta^{j}_{i}(t)\right)\left(\lambda_{i}+\lambda\right)\mathbb{E}\left[Z^i-\sum_{k=0}^{j-1}l^k_{i}(Z^i)\right].
		\end{aligned}
		\right.
	\end{equation*}
	
	With reference to \textcite{li2022equilibrium},
	\begin{equation*}
		\begin{aligned}
			\mathbb{E}\left[\min\left(Z^{i},a_1(t)\right)\right]&=\int_{0}^{a}(1-F(z))\mathrm{d}z=\int_{0}^{a}\overline{F}(z)\mathrm{d}z,
			\\
			\mathbb{E}\left[\min(Z^{i}, a_1(t))^2\right] &= \int_{0}^{a}2z\overline{F}(z)\mathrm{d}z,
		\end{aligned} 	
	\end{equation*}
	where $\overline{F}(z)=1-F(z)$. For ease of notations, denote $h^{j}_{i}(t)=\sum_{k=1}^{j}a^k_{i}(t)$. Unlike the case with only one reinsurer, 
	$l^{j-1}_{i}(Z^{i})(j=2, \cdots, n)$ satisfies the following lemma:
	\begin{Lemma}
		\label{Lemma 3.1} For each \(j = 1, \dots, n\),
		\begin{equation*}
			\begin{aligned}
				\mathbb{E}\left[l^{j}_{i}(Z^i, t)\right]
				&=\int_{h^{j}_{i}(t)}^{h^{j+1}_{i}(t)}\overline{F}(z)\mathrm{d}z,
				\\
				\mathbb{E}\left[l^{j}_{i}(Z^i, t)^2\right]
				&=
				\int_{h^{j}_{i}(t)}^{h^{j+1}_{i}(t)}2\left(z-h^{j}_{i}(t)\right)\overline{F}(z)
				\mathrm{d}z.
			\end{aligned}
		\end{equation*}
	\end{Lemma}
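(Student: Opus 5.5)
Write $h^{0}_{i}(t)=0$, drop $t$ and the index $i$, and let $a^{k}$ denote the retention levels with $h^{j}=\sum_{k=1}^{j}a^{k}$. The plan is to show, by induction on $j$, the pathwise identity
\begin{equation*}
	Z-\sum_{k=0}^{j-1}l^{k}(Z)=(Z-h^{j})^{+},\qquad j=1,\dots,n,
\end{equation*}
and then read $l^{j}(Z)$ off as a layer of $Z$. For $j=1$ the identity says $Z-\min(Z,a^{1})=(Z-a^{1})^{+}$, which is immediate. For the inductive step, suppose the residual claim entering layer $j$ is $(Z-h^{j})^{+}$. By the definition of $l^{j}$ (the excess-of-loss specification above),
\begin{equation*}
	l^{j}(Z)=\min\bigl((Z-h^{j})^{+},a^{j+1}\bigr)=\min\bigl((Z-h^{j})^{+},\,h^{j+1}-h^{j}\bigr).
\end{equation*}
Subtracting this from $(Z-h^{j})^{+}$ gives $(Z-h^{j+1})^{+}$, which closes the induction. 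As a byproduct we get the layer representation $l^{j}(Z)=(Z-h^{j})^{+}-(Z-h^{j+1})^{+}$. Equivalently, $l^{j}(Z)=0$ on $\{Z\le h^{j}\}$, $l^{j}(Z)=Z-h^{j}$ on $\{h^{j}<Z\le h^{j+1}\}$, and $l^{j}(Z)=a^{j+1}$ on $\{Z>h^{j+1}\}$.

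For the moments I will write $l^{j}(Z)=\phi(Z)$ with $\phi(z)=\min\bigl((z-h^{j})^{+},a^{j+1}\bigr)$. This $\phi$ is absolutely continuous, satisfies $\phi(0)=0$ (since $h^{j}\ge 0$), and has derivative $\phi'(z)=\mathbf 1_{(h^{j},h^{j+1})}(z)$ a.e. For a nonnegative random variable and such a $\phi$, Fubini gives
\begin{equation*}
	\mathbb{E}[\phi(Z)]=\mathbb{E}\int_{0}^{Z}\phi'(z)\,\mathrm{d}z=\int_{0}^{\infty}\phi'(z)\,\overline{F}(z)\,\mathrm{d}z,
\end{equation*}
which yields the first formula. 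I apply the same identity to $\phi^{2}$, whose derivative is $2\phi(z)\phi'(z)=2(z-h^{j})\mathbf 1_{(h^{j},h^{j+1})}(z)$; this yields the second formula. This is exactly the device behind the cited single-layer formulas of \textcite{li2022equilibrium}, now applied with the shifted lower limit $h^{j}$.

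The only genuine obstacle is the first step: establishing the residual-claim identity, so that the nested $\min$ recursion collapses into a single layer $[h^{j},h^{j+1}]$. Once that is in place, the moment computations are routine tail-integration. Admissibility condition (ii) guarantees that all quantities are nonnegative and the layers are ordered. Finite second moments $b_i^2<\infty$ justify the use of Fubini (Tonelli, in fact, since the integrands are nonnegative).
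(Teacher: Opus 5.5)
Your proof is correct and follows essentially the paper's route: both identify $l^{j}_{i}$ as the layer of $Z^{i}$ between $h^{j}_{i}$ and $h^{j+1}_{i}$, and both convert its first and second moments into integrals of $\overline{F}$ by a Fubini interchange. The paper assumes the layer form without proof and, using a density $f_i$, performs the interchange by hand on the explicit integrals, whereas you prove the residual identity $Z-\sum_{k<j}l^{k}(Z)=(Z-h^{j})^{+}$ by induction and apply $\mathbb{E}[\phi(Z)]=\int_{0}^{\infty}\phi'(z)\overline{F}(z)\,\mathrm{d}z$, which requires no density.
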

	\begin{proof}
		See Appendix \ref{proof of Lemma 3.1}.
	\end{proof}

	\begin{Theorem}
		\label{Theorem-excess-of-loss reinsurance-n reinsurer}
		Under an excess‑of‑loss reinsurance arrangement, the equilibrium investment strategies for both insurers and reinsurers coincide with those given in  \cref{Theorem-Proportional reinsurance-equilibrium strategy}. The equilibrium retention levels and safety loadings are characterized as follows.
        
        (i) The equilibrium  retention level $a^{1}_{i}(t)^*$ of insurer $i$ is the solution to the following equation
		\begin{equation*}
			a^{1}_{i}(t)^*=\frac{\theta^{1}_{i}(t)}{\gamma_{i}\mathrm{e}^{r(T-t)}}+\frac{\lambda \omega_{m,i}}{\lambda+\lambda_{i}}
			\sum_{k\neq i}^{m}
			\int_{0}^{a^{1}_{k}(t)}\overline{F}_{k}(z)\mathrm{d}z.
		\end{equation*}
		
		(ii) The equilibrium  retention level $a^{j+1}_{i}(t)^*$ of reinsurer $R_{j}$ is the solution to
		\begin{equation*}
			\begin{aligned}
				a^{j+1}_{i}(t)^*
				&=\frac{\theta^{j+1}_{i}(t)}{\gamma_{R_j}\mathrm{e}^{r(T-t)}}
				-
				\frac{\lambda}{\lambda+\lambda_{i}}\int_{h^{j}_{k}(t)}^{h^{j+1}_{k}(t)}\overline{F}_{k}(z)\mathrm{d}z.
			\end{aligned}
		\end{equation*}
		
		(iii) The safety loading  $\theta^{j}_{i}(t)$ of reinsurer $R_{j} (j=1, 2, \cdots, n-1)$ is determined by
		\begin{align*}
			&\Bigg[
			(\lambda+\lambda_{i})\mu_{i}-(\lambda+\lambda_{i})\int_{0}^{h^{j}_{i}}\overline{F}_{i}(z)\mathrm{d}z
			\\
			&
			-
			\sum_{v=1}^{m}\theta^{j}_{v}(t)(\lambda+\lambda_{v})
			\overline{F}_{v}(h^{j}_{v})
			\frac{\partial h^{j}_{v}}{\partial \theta^{j}_{i}(t)}
			+
			\sum_{v=1}^{m}
			\theta^{j+1}_{v}(t)(\lambda+\lambda_{v})\overline{F}_{v}(h^{j+1}_{v})
			\frac{\partial h^{j}_{v}}{\partial \theta^{j}_{i}(t)}
			\Bigg]
			\\
			&
			-\mathrm{e}^{r(T-t)}\gamma_{R_j}
			\sum_{v=1}^{m}(\lambda_{v}+\lambda)
			\left[
			a^{j+1}_{v}\overline{F}_{v}(h^{j+1}_{v})
			-\int_{h^{j}_{v}}^{h^{j+1}_{v}}\overline{F}_{v}(z)\mathrm{d}z
			\right]\frac{\partial h^{j}_{v}}{\partial \theta^{j}_{i}(t)}
			\\
			&
			-\mathrm{e}^{r(T-t)}\gamma_{R_j}\lambda
			\Bigg[
			\sum_{k<p}^{m}
			\left(
			\overline{F}_{k}(h^{j+1}_{k})-\overline{F}_{k}(h^{j}_{k})\mathrm{d}z
			\right)
			\frac{\partial h^{j}_{k}}{\partial \theta^{j}_{i}(t)}
			\int_{h^{j}_{p}}^{h^{j+1}_{p}}\overline{F}_{p}(z)\mathrm{d}z
			\\
			&
			+
			\sum_{k<p}^{m}
			\int_{h^{j}_{k}}^{h^{j+1}_{k}}\overline{F}_{k}(z)\mathrm{d}z
			\left(
			\overline{F}_{p}(h^{j+1}_{p})-\overline{F}_{p}(h^{j}_{p})\mathrm{d}z
			\right)
			\frac{\partial h^{j}_{p}}{\partial \theta^{j}_{i}(t)}
			\Bigg]
			=0.
		\end{align*}
		
		(iv) The safety loading $\theta^{n}_{i}(t)$ of reinsurer $R_{n}$ is the solution to
		\begin{align*}
			&\Bigg[
			(\lambda+\lambda_{i})\mu_{i}-(\lambda+\lambda_{i})\int_{0}^{h^{n}_{i}}\overline{F}_{i}(z)\mathrm{d}z
			-
			\sum_{v=1}^{m}\theta^{n}_{v}(t)(\lambda+\lambda_{v})
			\overline{F}_{v}(h^{n}_{v})
			\frac{\partial h^{n}_{v}}{\partial \theta^{n}_{i}(t)}
			\Bigg]
			\\
			&
			-\mathrm{e}^{r(T-t)}\gamma_{R_1}
			\sum_{v=1}^{m}(\lambda_{v}+\lambda)
			\left[
			-
			\int_{h^{n}_{v}}^{\infty}\overline{F}_{v}(z)\mathrm{d}z
			\right]\frac{\partial h^{n}_{v}}{\partial \theta^{n}_{i}(t)}
			\\
			&
			-
			{e}^{r(T-t)}\gamma_{R_1}\lambda
			\Bigg[
			\sum_{k<p}^{m}
			\left(
			-\overline{F}_{k}(h^{n}_{k})
			\right)
			\frac{\partial h^{n}_{k}}{\partial \theta^{n}_{i}(t)}
			\int_{h^{n}_{p}}^{\infty}\overline{F}_{p}(z)\mathrm{d}z
			\\
			&
			+
			\sum_{k<p}^{m}
			\left(
			-\overline{F}_{p}(h^{n}_{p})
			\right)
			\frac{\partial h^{n}_{k}}{\partial \theta^{n}_{i}(t)}
			\int_{h^{n}_{k}}^{\infty}\overline{F}_{k}(z)\mathrm{d}z
			\Bigg]
			\\
			&
			=0.
		\end{align*}
	\end{Theorem}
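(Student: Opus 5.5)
We follow the same backward-induction scheme used for \cref{Theorem-Proportional reinsurance-equilibrium strategy}, now with excess-of-loss retentions. The extended HJB system of the Verification Theorem is solved with the linear ansatz
\[
w^{i}(t,x_i)=A_i(t)x_i+B_i(t),\qquad g^{i}(t,x_i)=\alpha_i(t)x_i+\beta_i(t),
\]
and analogous forms for $w^{R_j},g^{R_j}$, with terminal conditions $A_i(T)=\alpha_i(T)=1$ and $B_i(T)=\beta_i(T)=0$. The conditions $\mathcal{L}^{u^*}g=0$ and the $x$-coefficient equations give $A_i(t)=\alpha_i(t)=\mathrm{e}^{r(T-t)}$, and likewise for every $R_j$. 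Substituting, the operator $\mathcal{L}w-\frac{\gamma}{2}\mathcal{L}g^2+\gamma g\mathcal{L}g$ reduces to the drift times $\mathrm{e}^{r(T-t)}$ minus $\frac{\gamma}{2}\mathrm{e}^{2r(T-t)}$ times the quadratic variation, which includes both diffusion and jump second moments. The jump parts are evaluated using the $\min$-formulas quoted from \textcite{li2022equilibrium} and \Cref{Lemma 3.1}.

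\emph{Investment part.} The problem in $\pi$ separates from the reinsurance variables. For each reinsurer it yields $\pi_{R_j}^*$ directly. For the insurers it yields the coupled linear system $\pi_i-\omega_{m,i}\sum_{k\ne i}\pi_k=\frac{\mu-r}{\gamma_i\mathrm{e}^{r(T-t)}\sigma^2}$. This system is identical to the one in the proportional case, so its solution is the same, and we simply invoke that computation.

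\emph{Followers' retentions.} For insurer $i$, the terms depending on $a^1_i$ are
\[
\mathrm{e}^{r(T-t)}\theta^1_i(\lambda_i+\lambda)\int_0^{a^1_i}\overline F_i\,\mathrm{d}z-\tfrac{\gamma_i}{2}\mathrm{e}^{2r(T-t)}\Big[(\lambda_i+\lambda)\int_0^{a^1_i}2z\overline F_i\,\mathrm{d}z-2\lambda\,\omega_{m,i}\textstyle\sum_{k\ne i}\mathbb E[\min(Z^i,a^1_i)]\,\mathbb E[\min(Z^k,a^1_k)]\Big].
\]
The last term is the common-shock cross term in the variance. Differentiating in $a^1_i$ and dividing by $(\lambda_i+\lambda)\overline F_i(a^1_i)$ gives (i). For reinsurer $R_j$ choosing $a^{j+1}_i$, we use \Cref{Lemma 3.1}. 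The derivative of $\int_{h^j}^{h^{j+1}}2(z-h^j)\overline F\,\mathrm{d}z$ with respect to $a^{j+1}_i$ is $2a^{j+1}_i\overline F_i(h^{j+1}_i)$. The cross term from $\lambda\,\mathbb E[(\sum_i l^j_i)^2]$ enters with a plus sign, which produces the minus sign in (ii). Since $\mathbb E[l^j_i]$ depends on $a^{j+1}_i$ only through the upper limit $h^{j+1}_i$, dividing by $\overline F_i(h^{j+1}_i)$ gives (ii). We then apply the implicit function theorem to the systems (i) and (ii), given the upstream loadings. This shows that $a^{j+1}_i$, and hence $h^{j}_v$, are differentiable functions of $\theta^j$, so the sensitivities $\partial h^j_v/\partial\theta^j_i$ are well defined.

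\emph{Leaders' loadings.} Reinsurer $R_j$ anticipates the response $h^j(\theta^j)$ of the layer below. Its objective in $\theta^j$ has three parts:
\begin{itemize}
\item the premium income $\sum_v(1+\theta^j_v)(\lambda+\lambda_v)\int_{h^{j}_v}^\infty\overline F_v$, with $\int_{h^j}^\infty\overline F=\mu-\int_0^{h^j}\overline F$;
\item the outgoing premium to $R_{j+1}$, which depends on $h^j$ through $h^{j+1}=h^j+a^{j+1}$;
\item the variance penalty, consisting of individual second moments from \Cref{Lemma 3.1} and common-shock cross products.
\end{itemize}
Differentiating with respect to $\theta^j_i$ by the chain rule through $h^j_v$ yields (iii). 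Each term corresponds as follows: the $\mathbb E$-term differentiates to $-\overline F_v(h^j_v)\,\partial h^j_v$, the term $\partial_{h^j}\int_{h^j}^{h^{j+1}}2(z-h^j)\overline F$ produces $a^{j+1}\overline F(h^{j+1})-\int\overline F$, and the cross products produce the $k<p$ sums. For $R_n$ there is no further cession, so $h^{n+1}=\infty$, which gives (iv).

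The main obstacle is this final step. The chain-rule bookkeeping across layers is delicate: whether $a^{j+1}$ is held fixed or responds must be tracked consistently with the backward-induction order. In addition, the first-order conditions are only implicit, so existence, uniqueness and concavity of the solution (needed to apply the Verification Theorem) cannot be shown in closed form. We would therefore present (iii)–(iv) as first-order characterizations and verify concavity only locally.
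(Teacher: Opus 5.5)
Your proposal is correct and follows the paper's route: the linear ansatz $\mathrm{e}^{r(T-t)}x+D(t)$, the first-order conditions in $a^1_i$ and $a^{j+1}_i$ that give (i)--(ii), and then the leader's condition in $\theta^j_i$. That last condition comes from the chain rule through $h^j_v$ with $a^{j+1}_v$ held fixed, and from letting $h^{n+1}=\infty$ for $R_n$. The bookkeeping worry you raise is harmless: $\theta^j$ and $a^{j+1}$ are both $R_j$'s own controls, and no downstream agent's problem involves $a^{j+1}$. So (ii) and (iii) are just the two partial first-order conditions of $R_j$'s joint problem, which is exactly how the paper implicitly treats them.
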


	\begin{proof}
		See Appendix \ref{proof-excess-of-loss reinsurance-n reinsurer}. 
	\end{proof}
	
	Given the analytical challenges in establishing the existence and uniqueness of explicit solutions for excess-of-loss reinsurance within complex reinsurance chain structures, \Cref{Excess-of-loss reinsurance-economy analysis} introduces a simplified chain framework for further analysis.
	
\section{Economic implications}

    This section presents explicit forms of the equilibrium strategies for the special cases considered in \Cref{Theorem-Proportional reinsurance-equilibrium strategy} and \Cref{Theorem-excess-of-loss reinsurance-n reinsurer}. Under the assumption that all participants in the reinsurance chain behave as rational agents, we then conduct a systematic analysis of its structural characteristics and intrinsic properties, which reveals the mechanisms and dynamic patterns that govern its operation in equilibrium. 

    \subsection{Proportional reinsurance
    }

   The following analysis focuses on the structural characteristics and intrinsic properties of the reinsurance chain under proportional reinsurance.

    \subsubsection{The case of a single insurer}
    Assume only insurer $1$ exists in the market, the wealth process is as follows.
    \begin{equation*}
    	\begin{aligned}
    		\mathrm{d}X_{1}(t)
    		&=\left[(\mu-r)\pi_{1}(t)+rX_{1}(t)
    		+  \left(1+\eta_{1}\right)o_{1}
    		- \left(1+\theta_{1}(t)\right)q_{1}(t)o_{1}
    		\right] \mathrm{d}t 
    		\\
    		&\quad
    		-
    		\mathrm{d}\left( \sum_{k=1}^{N_{1}(t)+N(t)}(1-q_{1}(t))Z^{1}_{k} \right)
    		+\pi_{1}(t)\sigma\mathrm{d}B(t). 
    	\end{aligned}
    \end{equation*}
    
    The wealth process  $Y_j(t)$ of reinsurer $R_j$ satisfies the following stochastic differential equation
    \begin{equation*}
    	\begin{aligned}
    		\mathrm{d}Y_j(t)
    		&=\left[rY_j(t)+\left(\mu-r\right)\pi_{R_j}(t)
    		+\left(1+\theta_{j}(t)\right)q_{j}(t)o_{1}
    		-\left(1+\theta_{j+1}(t)\right)q_{j+1}(t)o_{1}
    		\right]\mathrm{d}t
    		\\
    		&\quad-\mathrm{d}\left(\sum_{k=1}^{N_{1}(t)+N(t)}
    		\left(q_{j}(t)-q_{j+1}(t)\right)Z^{1}_{k}\right)
    		+\pi_{R_j}(t)\sigma\mathrm{d}B(t). 
    	\end{aligned}
    \end{equation*}

    \begin{Theorem}
    	\label{one insurer-theorem}
    	In the special case where the market contains a single insurer, $H_{j+1}(t)$ and $G_{j+1}(t)$ reduce to 
    	\begin{equation}
    		\label{one insurer-H}
    		H_{j+1}(t)=\left\{
    		\begin{aligned}
    			&
    			\frac{o_1}{\gamma_1\mathrm{e}^{r(T-t)}\sigma_1^2}
    			,       \quad j=0,
    			\\
    			&\left[\frac{1}{2^{j}\gamma_{1}}+\sum_{k=1}^{j}\frac{1}{2^{j-k}\gamma_{R_k}}\right]\frac{o_1}{\mathrm{e}^{r(T-t)}\sigma_1^2},\quad j=1, 2, \dots, n-1.
    		\end{aligned}
    		\right. 
    	\end{equation}
    	
    	\begin{equation*}
    		G_{j+1}(t)=
    		\frac{2^{j}-1}{2^{j}}, \quad j=0, 1, 2, \dots, n-1.
    	\end{equation*}
    	
    	The equilibrium investment strategies of insurers and reinsurers are
    	\begin{equation*}
    		\left\{
    		\begin{aligned}
    			\pi_1^*(t)&=
    			\frac{\mu-r}{\gamma_1
    				\mathrm{e}^{r(T-t)}\sigma^2},
    			\\
    			\pi_{R_j}^*(t)&=\frac{\mu-r}{\gamma_{R_{j}}\mathrm{e}^{r(T-t)}\sigma^2}. 
    		\end{aligned}
    		\right. 
    	\end{equation*}

    	The equilibrium reinsurance strategy is
    	\begin{equation}
    		\label{one insurer-reinsurance}
    		q_{j+1}^*(t)=
    		\begin{aligned}
    			&\frac{1}{2^{j}}
    			-H_{j+1}(t)
    			\theta_{j+1}(t), \quad j=0, 1, \dots, n-1.  
    		\end{aligned}
    	\end{equation}

    	The equilibrium safety loading of reinsurer $R_j$ is
    	\begin{equation}
    		\label{one insurer-safety loadings}
    		\theta_{j}^*(t)=
    		\begin{aligned}
    			\frac{1}{2^{j}}\cdot\frac{1}{H_{j}(t)}+\frac{\theta_{j+1}(t)}{2}, \quad j=1, 2, \dots, n-1. 
    		\end{aligned}
    	\end{equation}
    	\begin{equation}
    		\label{one insurer-safety loadings-R_{n}}
    		\begin{aligned}
    			\theta_{n}^*(t)=&\frac{\left(1-G_{n}(t)\right)o_1+\mathrm{e}^{r(T-t)}\gamma_{{R_n}}
    				\left(1-G_{n}(t)\right)H_{n}(t)\sigma_1^2
    			}
    			{2H_{n}(t)o_{1}+\mathrm{e}^{r(T-t)}\gamma_{{R_n}}
    				H_{n}(t)^2\sigma_1^2}.
    		\end{aligned}
    	\end{equation}	
    	
    \end{Theorem}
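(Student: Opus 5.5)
The plan is to obtain the statement as the specialization $m=1$ of \Cref{Theorem-Proportional reinsurance-equilibrium strategy}. With only one insurer there is no competitor, so the relative performance term disappears. We take the competition weight to be zero, i.e.\ we set $\omega_{m,1}\equiv 0$; this avoids the undefined quotient $\omega_1/(m-1)$ and makes \eqref{Relative performance} reduce to the wealth process of insurer $1$. All matrices in \Cref{Theorem-Proportional reinsurance-equilibrium strategy} become scalars, and every sum over $k\neq i$ is empty. In particular $\rho_{ki}$ never enters, and the investment formula reduces immediately to $\pi_1^*=(\mu-r)/(\gamma_1\mathrm{e}^{r(T-t)}\sigma^2)$. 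The reinsurers' investment strategy is unchanged.

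\textbf{Step 1: the scalar $\Psi_{j+1}$.} I first compute $\Psi_{j+1}=\Psi^{j+1}_{11}$. The general formula collapses to $\Psi^{j+1}_{11}=\Lambda^{j+1}_1/(1-\mu_1\Phi^{j+1}_1)$.
\begin{itemize}
\item For $j=0$ we have $\Phi^1_1=0$, so $\Psi_1=o_1/(\sigma_1^2\gamma_1\mathrm{e}^{r(T-t)})$. Since $\mathbf H_1=\mathbf\Psi_1$, this is already the case $j=0$ of \eqref{one insurer-H}.
\item For $j\ge1$, write $\kappa=\lambda\mu_1^2/\sigma_1^2$. Then $1-\mu_1\Phi^{j+1}_1=1+\kappa/(1-\kappa)=1/(1-\kappa)$. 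The factor $1/(1-\kappa)$ in $\Lambda^{j+1}_1$ therefore cancels, and
\[
\Psi_{j+1}=\frac{o_1}{\sigma_1^2\gamma_{R_j}\mathrm{e}^{r(T-t)}}.
\]
\end{itemize}

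\textbf{Step 2: the ratio $A_j^{-1}B_j$ for $j\le n-1$.} Here $A_j=2H_jo_1$ and $B_j=\mathrm{e}^{r(T-t)}\gamma_{R_j}\Psi_{j+1}H_j\sigma_1^2$. Substituting Step 1 gives
\[
A_j^{-1}B_j=\frac{\mathrm{e}^{r(T-t)}\gamma_{R_j}\Psi_{j+1}\sigma_1^2}{2o_1}=\frac12 .
\]
The recursion $\mathbf H_{j+1}=\mathbf H_j\mathbf A_j^{-1}\mathbf B_j+\mathbf\Psi_{j+1}$ then becomes $H_{j+1}=H_j/2+o_1/(\sigma_1^2\gamma_{R_j}\mathrm{e}^{r(T-t)})$. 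Starting from $H_1$, a straightforward induction on $j$ yields the bracket $\frac{1}{2^j\gamma_1}+\sum_{k=1}^{j}\frac{1}{2^{j-k}\gamma_{R_k}}$ in \eqref{one insurer-H}.

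\textbf{Step 3: $G_{j+1}$ and the contract strategies.} For $j\le n-1$ we have $C_j=(1-G_j)o_1$, hence $A_j^{-1}C_j=(1-G_j)/(2H_j)$. The recursion for $\vec G$ then reads $G_{j+1}=G_j+(1-G_j)/2=(1+G_j)/2$, with $G_1=0$. Induction gives $1-G_{j+1}=2^{-j}$, i.e.\ $G_{j+1}=(2^j-1)/2^j$. Substituting into the formulas of \Cref{Theorem-Proportional reinsurance-equilibrium strategy}:
\begin{itemize}
\item The reinsurance rule $q^*_{j+1}=1-G_{j+1}-H_{j+1}\theta_{j+1}$ becomes \eqref{one insurer-reinsurance}.
\item The loading rule $\theta_j^*=A_j^{-1}C_j+A_j^{-1}B_j\theta_{j+1}$ becomes $\frac{2^{-(j-1)}}{2H_j}+\frac{\theta_{j+1}}2$, which is \eqref{one insurer-safety loadings}.
\item For $j=n$ we have $B_n=0$. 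The expressions for $C_n$ and $A_n$ keep only their diagonal terms, $(1-G_n)H_n\sigma_1^2$ and $H_n^2\sigma_1^2$ respectively, so $\theta_n^*=C_n/A_n$ is exactly \eqref{one insurer-safety loadings-R_{n}}.
\end{itemize}

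\textbf{Main obstacle.} The computations are routine once Step 1 is in place; the delicate point is justifying the degenerate specialization. The quantity $\omega_{m,i}$ is undefined at $m=1$, so I will argue directly from the single-insurer wealth process that the insurer's HJB in Appendix \ref{proof-Proportional reinsurance-equilibrium strategy} coincides with the general one at $\omega=0$ with all cross terms absent. I would check this by re-deriving insurer $1$'s first-order condition, $q_1=1-H_1\theta_1$, directly from the extended HJB system of the Verification Theorem.
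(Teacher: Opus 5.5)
Your proposal is correct and takes the same route as the paper, which simply states that the result follows by specializing the general proportional-reinsurance theorem to $m=1$. Your explicit computations ($\Psi_{j+1}$, $A_j^{-1}B_j=\tfrac12$, $G_{j+1}=(1+G_j)/2$, and the scalar $A_n$, $C_n$) fill in the details the paper leaves implicit. Your concern about $\omega_{m,1}$ is harmless: at $m=1$ the $\omega$-dependence cancels identically in both $\Psi_1=\Lambda^1_1/(1-\mu_1\Phi^1_1)$ and $\pi_1^*$, so any value, in particular $0$, gives the same result.
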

    \begin{proof}
    	The proof follows directly from \cref{Theorem-Proportional reinsurance-equilibrium strategy}.
    \end{proof}
     Assuming that the reinsurers adopt the variance premium principle, and the wealth process of insurer $1$ satisfies the following stochastic differential equation：
    \begin{equation*}
    	\begin{aligned}
    		\mathrm{d}X_1(t)
    		&=\left[(\mu-r)\pi_1(t)+rX_1(t)+  \left(\eta_1o_1-\theta_1(t)q_{1}(t)^2\sigma_1^2\right) \right] \mathrm{d}t
    		\\
    		&\quad 
    		+ \left(1-q_{1}(t)\right)\sigma_1\mathrm{d}W_1(t)
    		+\pi_1(t)\sigma\mathrm{d}B(t). 
    	\end{aligned}
    \end{equation*}
    
   The wealth process of reinsurer $R_{j}$ satisfies    
    \begin{equation*}
    	\begin{aligned}
    		\mathrm{d}Y_j(t)
    		&=\left[rY_j(t)+\left(\mu-r\right)\pi_{R_j}(t)
    		+\left(\theta_{j}(t)q_{j}(t)^2-\theta_{j+1}(t)q_{j+1}(t)^2\right)\sigma_1^2
    		\right]\mathrm{d}t
    		\\
    		&\quad+\left(q_{j}(t)-q_{j+1}(t)\right)\sigma_1\mathrm{d}W_1(t)
    		+\pi_{R_j}(t)\sigma\mathrm{d}B(t). 
    	\end{aligned}
    \end{equation*}	
    
    \begin{Theorem}
    	\label{Theorem-one insurance-Var}
    	Under the variance premium principle, the equilibrium investment strategy remains unaffected and takes the same form as in the general equilibrium setting. Define
    	\begin{equation*}
    		H_j(t)=\frac{1}{\mathrm{e}^{r(T-t)}\gamma_{1}} + \sum_{k=1}^{j-1}\frac{1}{\mathrm{e}^{r(T-t)}\gamma_{R_k}}.
    	\end{equation*}
    	
    	The equilibrium reinsurance strategy is 
    	
    	\begin{equation*}
    		q_j(t)=\frac{1}{2^{j-1}}\cdot\frac{1}{2H_{j}(t)\theta_j(t)+1}, \quad j=1, \cdots, n.
    	\end{equation*}
    	
    	The equilibrium safety loading strategy is given by
    	
    	\begin{equation*}
    		\left\{
    		\begin{aligned}
    			&\theta_{j}^*(t)=\frac{1}{2H_{j}(t)}+
    			\frac{1}{ 
    				\frac{1}{\gamma_{R_j}\mathrm{e}^{r(T-t)}}
    				+\frac{1}{2\theta_{j+1}(t)} 
    			}
    			\quad j=1, \dots, n-1,
    			\\
    			&\theta_{n}^*(t)=\frac{1}{2H_{n}(t)}+\gamma_{R_n}\mathrm{e}^{r(T-t)}. 
    		\end{aligned}
    		\right.
    	\end{equation*}
    	
    \end{Theorem}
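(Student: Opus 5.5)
We would follow the same route as for \cref{Theorem-Proportional reinsurance-equilibrium strategy}, now in the simpler pure-diffusion setting. We invoke the Verification Theorem with the ansatz $w^{1}=g^{1}=\mathrm{e}^{r(T-t)}x_1+\phi_1(t)$ and $w^{R_j}=g^{R_j}=\mathrm{e}^{r(T-t)}y_j+\phi_{R_j}(t)$. Since there are no jumps, the extended HJB operator for each agent reduces to $\mathrm{e}^{r(T-t)}\times(\text{drift})-\frac{\gamma}{2}\mathrm{e}^{2r(T-t)}\times(\text{diffusion coefficient squared})$ plus terms independent of the controls. Hence the problem separates into an investment part and a reinsurance/pricing part. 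The investment part is identical to the single-insurer case of \cref{one insurer-theorem}, giving $\pi^*=(\mu-r)/(\gamma\mathrm{e}^{r(T-t)}\sigma^2)$. This settles the first claim.

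For the reinsurance part we proceed by backward (Stackelberg) induction along the chain. \emph{Follower step.} For the insurer, maximizing $-\theta_1q_1^2\sigma_1^2\mathrm{e}^{r(T-t)}-\frac{\gamma_1}{2}\mathrm{e}^{2r(T-t)}(1-q_1)^2\sigma_1^2$ over $q_1$ gives $q_1=1/(1+2H_1\theta_1)$ with $H_1=1/(\gamma_1\mathrm{e}^{r(T-t)})$. This is the case $j=1$ of the claimed formula.

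For $R_j$, the objective in $q_{j+1}$ for given $q_j$ and $\theta_{j+1}$ is the concave quadratic
\[
\mathrm{e}^{r(T-t)}\bigl(\theta_jq_j^2-\theta_{j+1}q_{j+1}^2\bigr)\sigma_1^2-\tfrac{\gamma_{R_j}}{2}\mathrm{e}^{2r(T-t)}(q_j-q_{j+1})^2\sigma_1^2 .
\]
Its first-order condition gives the linear response $q_{j+1}=\kappa_{j+1}q_j$, where $\kappa_{j+1}=1/\bigl(1+2\theta_{j+1}/(\gamma_{R_j}\mathrm{e}^{r(T-t)})\bigr)$.

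\emph{Leader step.} Reinsurer $R_j$ chooses $\theta_j$ anticipating $q_j=q_j(\theta_j)$ from the layer below. It also anticipates its own subsequent choice $q_{j+1}=\kappa_{j+1}q_j$. Substituting this, its objective becomes
\[
q_j(\theta_j)^2\Bigl[\theta_j-\theta_{j+1}\kappa_{j+1}^2-\tfrac{\gamma_{R_j}}{2}\mathrm{e}^{r(T-t)}(1-\kappa_{j+1})^2\Bigr],
\]
up to the factor $\mathrm{e}^{r(T-t)}\sigma_1^2$. A short computation should show the bracket equals $\theta_j-D_{j+1}$ with $D_{j+1}=1/\bigl(1/(\gamma_{R_j}\mathrm{e}^{r(T-t)})+1/(2\theta_{j+1})\bigr)$, the harmonic-type combination appearing in the statement. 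We then differentiate in $\theta_j$ using the inductive form $q_j=c_j/(1+2H_j\theta_j)$. Maximizing $(\theta_j-D_{j+1})/(1+2H_j\theta_j)^2$ gives $\theta_j^*=\frac{1}{2H_j}+D_{j+1}$, which is exactly the stated safety loading. At the last layer $\theta_{n+1}$ is absent, so $\kappa=0$ is replaced by full retention. The bracket is then $\theta_n-\frac{\gamma_{R_n}}{2}\mathrm{e}^{r(T-t)}$, and the optimization should yield $\theta_n^*=\frac{1}{2H_n}+\gamma_{R_n}\mathrm{e}^{r(T-t)}$. We will double-check the constant here, since it depends on how $R_n$'s variance term enters.

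\emph{Induction on $q_j$.} The main obstacle is to close the recursion for $q_{j+1}$ as a function of $\theta_{j+1}$ in the claimed form $2^{-j}/(1+2H_{j+1}\theta_{j+1})$, with $H_{j+1}=H_j+1/(\gamma_{R_j}\mathrm{e}^{r(T-t)})$. The plan is to substitute $\theta_j^*=\frac{1}{2H_j}+D_{j+1}$ into $q_j=c_j/(1+2H_j\theta_j)$, which gives $q_j=c_j/(2+2H_jD_{j+1})$. We then multiply by $\kappa_{j+1}$ and simplify with the identity $D_{j+1}\kappa_{j+1}^{-1}\cdot(\cdot)$ relating $D_{j+1}$, $\kappa_{j+1}$ and $1/(\gamma_{R_j}\mathrm{e}^{r(T-t)})$. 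The target is $q_{j+1}=\frac{c_j}{2}\cdot\frac{1}{1+2\theta_{j+1}(H_j+1/(\gamma_{R_j}\mathrm{e}^{r(T-t)}))}$. This would give $c_{j+1}=c_j/2$, hence $c_j=2^{-(j-1)}$, and the additive update of $H_j$. Finally, we check second-order conditions (concavity in $q$, and a unique maximizer in $\theta_j$ on $\theta_j>D_{j+1}$) and admissibility ($0\le q_{j+1}\le q_j\le 1$ and ordered premia), and conclude by the Verification Theorem.
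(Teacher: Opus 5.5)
Your route is the right one: a linear ansatz, the follower's first-order condition $q_{j+1}=\kappa_{j+1}q_j$, then the leader's choice of $\theta_j$ against the downstream response $q_j(\theta_j)=c_j/(1+2H_j\theta_j)$ built by backward induction. This is the scheme of Appendix~\ref{proof-Proportional reinsurance-equilibrium strategy}. The paper itself only says the result follows from Theorem~\ref{Theorem-Proportional reinsurance-equilibrium strategy}, which is stated for the expected value principle, so a computation like yours is what is actually needed.

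All your final formulas are correct. However, two explicit claims in the leader step are false, and they merely cancel each other. Write $g_j:=\gamma_{R_j}\mathrm{e}^{r(T-t)}$.

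\emph{The bracket.} It is not $\theta_j-D_{j+1}$. Since $\min_x\{ax^2+b(1-x)^2\}=ab/(a+b)$,
\[
\theta_{j+1}\kappa_{j+1}^2+\tfrac{g_j}{2}(1-\kappa_{j+1})^2=\frac{\theta_{j+1}g_j}{2\theta_{j+1}+g_j}=\tfrac12 D_{j+1}.
\]

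\emph{The maximizer.} The maximizer of $(\theta-E)/(1+2H\theta)^2$ is not $\frac{1}{2H}+E$. The derivative has the sign of $1+2H\theta-4H(\theta-E)=1-2H\theta+4HE$, so the unique maximizer is $\frac{1}{2H}+2E$. With $E=\frac12D_{j+1}$ you recover $\theta_j^*=\frac{1}{2H_j}+D_{j+1}$.

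This corrected rule also settles your doubt at the last layer. There the bracket is $\theta_n-\frac{g_n}{2}$, which gives $\theta_n^*=\frac{1}{2H_n}+g_n$, as you wrote. The rule you stated would instead give $\frac{1}{2H_n}+\frac{g_n}{2}$, and that inconsistency is the symptom of the two errors.

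You can avoid the bracket entirely, as the paper does. Differentiate $R_j$'s objective in $\theta_j$ while holding its own control $q_{j+1}$ fixed (envelope theorem). Then impose $q_{j+1}=\kappa_{j+1}q_j$ and $q_j'=-2H_jq_j/(1+2H_j\theta_j)$. This gives $q_j+\bigl(2\theta_j-g_j(1-\kappa_{j+1})\bigr)q_j'=0$, i.e.\ $\theta_j^*=\frac{1}{2H_j}+g_j(1-\kappa_{j+1})$, and $g_j(1-\kappa_{j+1})=2\theta_{j+1}g_j/(g_j+2\theta_{j+1})=D_{j+1}$.

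The induction then closes as you predicted. From $1+2H_j\theta_j^*=2(1+H_jD_{j+1})$,
\[
q_{j+1}=\kappa_{j+1}q_j=\frac{c_j}{2}\cdot\frac{g_j}{g_j+2\theta_{j+1}+2H_jg_j\theta_{j+1}}=\frac{c_j}{2}\cdot\frac{1}{1+2\theta_{j+1}\bigl(H_j+1/g_j\bigr)},
\]
so $c_{j+1}=c_j/2$ and $H_{j+1}=H_j+1/g_j$.

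One minor slip in the setup: $w$ and $g$ cannot share the same time part $\phi$. The HJB and the condition $\mathcal{L}g=0$ differ by $\frac{\gamma}{2}\mathrm{e}^{2r(T-t)}$ times the squared diffusion coefficient, so a common $\phi$ is inconsistent. Take $w=\mathrm{e}^{r(T-t)}x+D(t)$ and $g=\mathrm{e}^{r(T-t)}x+d(t)$ with $D\neq d$. Only $w_x=g_x=\mathrm{e}^{r(T-t)}$ enters the optimization, so none of the equilibrium controls change.
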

     \begin{proof}
     	The proof follows directly from \cref{Theorem-Proportional reinsurance-equilibrium strategy}.
     \end{proof}

    \subsubsection{Ordering properties of equilibrium strategies}
    In practice, leveraging informational and scale advantages, leading reinsurance firms often offer more favorable retrocession contracts to reinsurers to improve profitability and promote market stability. For analytical tractability, this subsection examines the ordering properties of equilibrium strategies within the framework of a single insurer. The following analysis relies on the explicit expressions derived in \Cref{one insurer-theorem}. 
    
    \begin{Theorem}
    	\label{Theorem 4.1}
    	The equilibrium reinsurance strategy satisfies $q_{j}^*(t)>q_{j+1}^*(t)$, $j= 1, \dots, n-1$. 
    \end{Theorem}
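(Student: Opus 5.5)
The plan is to compute the difference $q_j^*(t)-q_{j+1}^*(t)$ in closed form from the explicit expressions of \Cref{one insurer-theorem}, and then show that it has the sign of a safety loading, which is positive.

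\emph{Step 1: rewrite $q_j^*$ in terms of $\theta_{j+1}$.} Fix $j\in\{1,\dots,n-1\}$. Applying \eqref{one insurer-reinsurance} with index $j-1$ gives $q_j^*=\frac{1}{2^{j-1}}-H_j\theta_j$. Substituting the recursion \eqref{one insurer-safety loadings}, which is valid because $j\le n-1$, namely $\theta_j=\frac{1}{2^jH_j}+\frac{\theta_{j+1}}{2}$, yields
\begin{equation*}
q_j^*=\frac{1}{2^{j-1}}-\frac{1}{2^{j}}-\frac{H_j\theta_{j+1}}{2}=\frac{1}{2^{j}}-\frac{H_j}{2}\theta_{j+1}.
\end{equation*}
Since $q_{j+1}^*=\frac{1}{2^j}-H_{j+1}\theta_{j+1}$, the constant terms cancel. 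This leaves
\begin{equation*}
q_j^*-q_{j+1}^*=\theta_{j+1}\Bigl(H_{j+1}-\tfrac{1}{2}H_j\Bigr).
\end{equation*}

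\emph{Step 2: evaluate $H_{j+1}-\frac12 H_j$.} Write $c(t)=\frac{o_1}{\mathrm{e}^{r(T-t)}\sigma_1^2}>0$. From \eqref{one insurer-H},
\begin{equation*}
\tfrac12 H_j=\Bigl[\tfrac{1}{2^{j}\gamma_1}+\sum_{k=1}^{j-1}\tfrac{1}{2^{j-k}\gamma_{R_k}}\Bigr]c(t).
\end{equation*}
This also covers $j=1$, where $H_1=c/\gamma_1$ and the sum is empty. The bracket is exactly $H_{j+1}/c(t)$ with its last term $k=j$ removed. Hence
\begin{equation*}
H_{j+1}-\tfrac12H_j=\frac{c(t)}{\gamma_{R_j}}>0,
\qquad\text{so}\qquad
q_j^*-q_{j+1}^*=\frac{o_1\,\theta_{j+1}(t)}{\gamma_{R_j}\mathrm{e}^{r(T-t)}\sigma_1^2}.
\end{equation*}

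\emph{Step 3: positivity of the equilibrium safety loadings.} It remains to show $\theta_{j+1}^*>0$, which I do by backward induction on the layer index.
\begin{itemize}
\item Base case $\theta_n^*$: in \eqref{one insurer-safety loadings-R_{n}}, $G_n=1-2^{-(n-1)}<1$ and $H_n>0$, so the numerator and denominator are both positive, giving $\theta_n^*>0$.
\item Inductive step: if $\theta_{j+1}^*>0$, then $\theta_j^*=\frac{1}{2^jH_j}+\frac{\theta_{j+1}^*}{2}>0$ because $H_j>0$.
\end{itemize}
Thus every $\theta_{j+1}^*$ with $j\le n-1$ is positive, and $q_j^*>q_{j+1}^*$ follows.

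The main obstacle is bookkeeping rather than a substantive estimate. One must be careful with the index shift between $q_{j+1}$ and $\theta_{j+1}$ in \eqref{one insurer-reinsurance}, and with the $j=1$ case of \eqref{one insurer-H}, where $H_1$ comes from the separate $j=0$ branch. Both are checked directly above.
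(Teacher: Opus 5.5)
Your proof is correct and takes essentially the paper's route: both rest on the identity $q_j^*-q_{j+1}^*=\frac{o_1}{\gamma_{R_j}\mathrm{e}^{r(T-t)}\sigma_1^2}\,\theta_{j+1}$. The paper reads this off directly from $R_j$'s first-order condition in $q_{j+1}$, which is the recursion \eqref{Reinsurance ratio-recursion formula} with $m=1$ and holds for any $q_j$ and $\theta_{j+1}$; you instead rebuild it from the closed forms of \Cref{one insurer-theorem} via $H_{j+1}-\tfrac12H_j=\frac{o_1}{\gamma_{R_j}\mathrm{e}^{r(T-t)}\sigma_1^2}$, and you add a backward-induction proof that $\theta_{j+1}^*>0$, which the paper only asserts.
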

    
    \begin{proof}
    	
    	 Under the single-insurer market assumption, equation (\ref{Reinsurance ratio-recursion formula}) reduces to 
    	\begin{equation}
    		\begin{aligned}
    			q_{j+1}(t)&=
    			q_{j}(t)
    			-\frac{o_1}{\gamma_{R_j}\sigma_1^2\mathrm{e}^{r(T-t)}}\cdot
    			\theta_{j+1}(t).
    		\end{aligned}
    	\end{equation}
    	Since the second term on the right-hand side is strictly positive, the inequality $q_{j}^*(t)> q_{j+1}^*(t)$ follows immediately.
    	
        \end{proof}
        
   Based on the expected premium principle, reinsurers achieve an equilibrium between risk and return when determining their safety loading strategies. To attract downstream reinsurers, upstream reinsurers therefore need to offer more favorable reinsurance terms.
   
    \begin{Theorem}
    	The equilibrium safety loading strategy satisfies $\theta_{j}^*(t)>\theta_{j+1}^*(t)$, $j=1, \dots, n-1$.
    \end{Theorem}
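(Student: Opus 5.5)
The plan is to work entirely inside the single-insurer setting of \Cref{one insurer-theorem} and use its explicit recursion for the safety loadings together with the admissibility (non-negativity) of the ceded proportions. By (\ref{one insurer-safety loadings}), for $j=1,\dots,n-1$,
\begin{equation*}
\theta_j^*(t)-\theta_{j+1}^*(t)=\frac{1}{2^{j}H_j(t)}-\frac{\theta_{j+1}^*(t)}{2},
\end{equation*}
so the claim is equivalent to the upper bound $\theta_{j+1}^*(t)<\dfrac{1}{2^{j-1}H_j(t)}$. The whole proof therefore reduces to controlling $\theta_{j+1}^*$ from above.

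To get this bound I will use the equilibrium reinsurance strategy (\ref{one insurer-reinsurance}), $q_{j+1}^*(t)=\frac{1}{2^{j}}-H_{j+1}(t)\theta_{j+1}^*(t)$. Admissibility (condition (ii) of the admissible-strategy definition, and $q_{j+1}^*\ge 0$, with strictness whenever $R_j$ genuinely cedes to $R_{j+1}$) gives $H_{j+1}(t)\theta_{j+1}^*(t)\le \frac{1}{2^j}$. Since $H_{j+1}>0$, this yields $\theta_{j+1}^*(t)\le \dfrac{1}{2^{j}H_{j+1}(t)}$.

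Next I will compare $H_{j+1}$ with $H_j$ using (\ref{one insurer-H}). Writing $c(t)=\frac{o_1}{\mathrm{e}^{r(T-t)}\sigma_1^2}$ and halving the bracket defining $H_j$, one checks termwise that
\begin{equation*}
H_{j+1}(t)=\frac{H_j(t)}{2}+\frac{c(t)}{\gamma_{R_j}},\qquad j=1,\dots,n-1,
\end{equation*}
including $j=1$, where $H_1=c/\gamma_1$. Hence $H_{j+1}(t)>H_j(t)/2$ strictly. Combining this with the bound from the previous step gives
\begin{equation*}
\theta_{j+1}^*(t)\le \frac{1}{2^{j}H_{j+1}(t)}<\frac{2}{2^{j}H_j(t)}=\frac{1}{2^{j-1}H_j(t)},
\end{equation*}
which is exactly the required inequality, so $\theta_j^*(t)>\theta_{j+1}^*(t)$.

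The main point needing care is the sign condition on $q_{j+1}^*$. It must be justified from admissibility, namely that the proportions are non-negative; the strict inequality is then carried entirely by the strict gap $c/\gamma_{R_j}>0$ in the $H$-recursion. The rest is routine algebra with the closed forms from \Cref{one insurer-theorem}.
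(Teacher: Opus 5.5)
Your proof is correct, but it takes a different route from the paper's. The paper argues by contradiction at the same layer. Writing $\theta_{j+1}^*=\theta_j^*+\varepsilon$ with $\varepsilon\ge 0$, the safety-loading recursion forces $\theta_j^*=\frac{1}{2^{j-1}H_j(t)}+\varepsilon$. The formula for $q_j^*$ then gives $q_j^*=-H_j(t)\varepsilon\le 0$, which contradicts the requirement that $q_j^*$ be strictly positive. In effect the paper uses the identity $q_j^*(t)=H_j(t)\bigl(\theta_j^*(t)-\theta_{j+1}^*(t)\bigr)$, so the ordering is equivalent to strict positivity of the cession at layer $j$.

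You instead use the cession one layer further up, need only the weak inequality $q_{j+1}^*\ge 0$, and obtain strictness from the gap $H_{j+1}(t)-H_j(t)/2=\frac{o_1}{\gamma_{R_j}\mathrm{e}^{r(T-t)}\sigma_1^2}>0$. Your route therefore rests on a weaker hypothesis, and it is exactly what admissibility condition (ii) delivers, since $Z^1-\sum_{k=0}^{j}l^k=q_{j+1}Z^1$. By contrast, strict positivity of $q_j^*$ is not part of the admissibility definition. The paper's route gives the sharper ``if and only if'' characterization and does not need the $H$-recursion.

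Combining the two ideas, you can drop the appeal to admissibility altogether. The explicit $\theta_n^*$ gives $H_n\theta_n^*=\frac{1}{2^{n-1}}\cdot\frac{o_1+\mathrm{e}^{r(T-t)}\gamma_{R_n}H_n\sigma_1^2}{2o_1+\mathrm{e}^{r(T-t)}\gamma_{R_n}H_n\sigma_1^2}<\frac{1}{2^{n-1}}$, so $q_n^*>0$. Your step followed by the identity above then propagates $q_{j+1}^*\ge 0\Rightarrow\theta_j^*>\theta_{j+1}^*\Rightarrow q_j^*>0$ backward down the chain.
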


        \begin{proof}
    	
    	From \eqref{one insurer-reinsurance} we observe that $q_{j}^*(t)$ is strictly decreasing in $\theta_{j}^*(t)$. 
    	Suppose, for contradiction, that $\theta_{j}^*(t) \leq \theta_{j+1}^*(t)$. 
    	Then there exists $\varepsilon \geq 0$ such that $\theta_{j+1}^*(t) = \theta_{j}^*(t) + \varepsilon$. 
    	Substituting this relation into (\ref{one insurer-safety loadings}) yields 
    	$\theta_{j}^*(t) = \frac{1}{2^{j-1}H_{j}(t)} + \varepsilon$. 
    	Inserting this expression for $\theta_{j}^*(t)$ back into (\ref{one insurer-reinsurance}) gives 
    	$q_{j}(t) = -H_{j}(t) \varepsilon \leq 0$, 
    	which contradicts the requirement that the retained proportion be strictly positive. 
    	Hence, the strict ordering $\theta_{j}^*(t) > \theta_{j+1}^*(t)$ must hold.
    	
    \end{proof}

    \begin{Corollary}
    	For all $j \in \left\{1, 2, \cdots, n\right\}$, if any risk aversion coefficient in the set  $\big\{\gamma_1, \gamma_{R_1},$ $ \cdots ,\gamma_{R_n}\big\}$ increases, the corresponding optimal safety loading $\theta_j(t)^*$ also increases.
    \end{Corollary}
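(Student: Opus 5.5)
The plan is to work entirely with the explicit single-insurer formulas of \Cref{one insurer-theorem} and to run a backward induction from the top of the chain, $j=n$, down to $j=1$.

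First, I will record how each risk aversion coefficient enters. By \eqref{one insurer-H}, $e^{r(T-t)}\sigma_1^2 H_j(t)/o_1$ is a positive combination of $1/\gamma_1,1/\gamma_{R_1},\dots,1/\gamma_{R_{j-1}}$. So each $H_j(t)$ is nonincreasing in every coefficient in $\{\gamma_1,\gamma_{R_1},\dots\}$, and strictly decreasing in $\gamma_1$ and in $\gamma_{R_k}$ for $k<j$. It does not depend on $\gamma_{R_k}$ for $k\ge j$. Also, $G_n(t)=(2^{n-1}-1)/2^{n-1}$ is constant, so $1-G_n=2^{-(n-1)}>0$.

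Base case $j=n$. I will rewrite \eqref{one insurer-safety loadings-R_{n}} as
\begin{equation*}
\theta_n^*(t)=\frac{1-G_n}{H_n(t)}\cdot\frac{o_1+\kappa H_n(t)\sigma_1^2}{2o_1+\kappa H_n(t)\sigma_1^2},\qquad \kappa=e^{r(T-t)}\gamma_{R_n}.
\end{equation*}
Put $x=\kappa H_n\sigma_1^2$. The second factor $(o_1+x)/(2o_1+x)$ increases in $x$, which gives monotonicity in $\gamma_{R_n}$: $H_n$ is fixed and $x$ increases. For $\gamma_1$ and $\gamma_{R_k}$ with $k<n$, $H_n$ decreases, so the first factor $1/H_n$ increases while the second decreases. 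Here I will compute the product directly as
\begin{equation*}
\frac{(1-G_n)}{H_n}\cdot\frac{o_1+\kappa\sigma_1^2H_n}{2o_1+\kappa\sigma_1^2H_n}=(1-G_n)\,\frac{o_1/H_n+\kappa\sigma_1^2}{2o_1+\kappa\sigma_1^2 H_n}.
\end{equation*}
The numerator increases and the denominator decreases as $H_n$ decreases, so the product increases.

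Inductive step. For $j<n$, I will use \eqref{one insurer-safety loadings}: $\theta_j^*=2^{-j}/H_j+\theta_{j+1}^*/2$. Assume $\theta_{j+1}^*$ is nondecreasing in every coefficient and strictly increasing in the relevant one. The term $1/H_j$ is nondecreasing in every coefficient, strictly so for $\gamma_1$ and for $\gamma_{R_k}$ with $k<j$, so $\theta_j^*$ inherits the monotonicity. For a coefficient $\gamma_{R_k}$ with $k\ge j$, the first term is unchanged and strict increase comes through $\theta_{j+1}^*$ and, ultimately, the base case. This settles every coefficient that actually appears in $\theta_j^*$.

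The main obstacle is the base case in the variables with competing effects, namely the simultaneous dependence of numerator and denominator on $H_n$. The rewriting above handles it cleanly. A secondary point is interpretation. For $\gamma_{R_k}$ with $k>n$ there is no such coefficient. For $\gamma_{R_j}$ itself, the statement should be read as \emph{nondecreasing}, with strict increase when the parameter genuinely enters $\theta_j^*$. I will note this explicitly rather than claim strictness where the formula is constant.
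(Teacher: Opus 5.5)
Your proposal is correct and follows essentially the paper's route: start from the explicit $\theta_n^*$, use that $H_n$ is decreasing in $\gamma_1,\gamma_{R_1},\dots,\gamma_{R_{n-1}}$ and independent of $\gamma_{R_n}$, then propagate through $\theta_j^*=2^{-j}/H_j+\theta_{j+1}^*/2$. The one difference is that your rewriting $(1-G_n)\frac{o_1/H_n+\kappa\sigma_1^2}{2o_1+\kappa\sigma_1^2H_n}$ replaces the paper's computation of $\partial\theta_n^*/\partial H_n<0$, and your closing hedge is unnecessary: every coefficient in $\{\gamma_1,\gamma_{R_1},\dots,\gamma_{R_n}\}$ enters $\theta_n^*$ strictly, so your own induction already gives strict increase of every $\theta_j^*$ in every coefficient, including $\gamma_{R_j}$.
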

    \begin{proof}
    	From (\ref{one insurer-H}), we note that $H_{j+1}(t)$ depends on the values of$\left\{\gamma_1, \gamma_{R_1}, \cdots, \gamma_{R_j}\right\}$, and is independent of    $\left\{\gamma_{R_{j+1}}, \cdots, \gamma_{R_n}\right\}$. 
    	 Moreover, 
    	 $H_{j+1}(t)$ is inversely proportional to each risk aversion coefficient in $\left\{\gamma_1, \gamma_{R_1}, \cdots, \gamma_{R_j}\right\}$.

    	First, consider $\gamma_{R_n}$, Only $\theta_{n}(t)$ is directly related to $\gamma_{R_n}$. From ($\ref{one insurer-safety loadings}$),
    	\begin{equation*}
    		\theta_{n}^*(t)=\frac{1}{2^{n-1}H_{n}(t)}\left(1-\frac{o_{1}}{2o_{1}+\mathrm{e}^{r(T-t)}\gamma_{R_n}H_{n}(t)\sigma_{1}^2}\right), 
    	\end{equation*}
    	where $\theta_{n}(t)$ is directly proportional to $\gamma_{R_{n}}$. Due to the direct proportion  relationship among the safety loading strategies, $\forall j\in\left\{1,2,\cdots,n\right\}, \theta_{j}(t)$ is directly proportional to $\gamma_{R_{n}}$.
    	 
    	  Consider $\gamma_{R_{n-1}}$, $\left\{H_{1}(t), H_{R_{1}}(t), \cdots, H_{R_{n-1}}(t)\right\}$ is not related to it, $H_{n}(t)$ is inversely proportional to $\gamma_{R_{n-1}}$. 
    	  
    	  \begin{equation*}
    	  	\frac{\partial \theta_{n}(t)}{\partial H_{n}(t)}
    	  	=
    	  	\frac{\mathrm{e}^{r(T-t)}\gamma_{R_n}H_{n}(t)\sigma_{1}^2o_1-\left(o_{1}+\mathrm{e}^{r(T-t)}\gamma_{R_n}H_{n}(t)\sigma_{1}^2\right)\left(2o_{1}+\mathrm{e}^{r(T-t)}\gamma_{R_n}H_{n}(t)\sigma_{1}^2\right)}
    	  	{2^{n-1}H_{n}(t)^2\left(2o_{1}+\mathrm{e}^{r(T-t)}\gamma_{R_n}H_{n}(t)\sigma_{1}^2\right)^2},
    	  \end{equation*}
    	It follows that $\frac{\partial \theta_{n}(t)}{\partial H_{n}(t)}<0$. Therefore,  $\theta_{n}(t)$ is directly proportional to $\gamma_{R_{n-1}}$. Due to the direct proportion  relationship among the safety loading strategies, for any $j\in\left\{1, 2, \cdots,n\right\}, \theta_{j}(t)$ is directly proportional to $\gamma_{R_{n-1}}$. 
    	
    	Similarly, it can be concluded that 
    	$\forall j\in\left\{1, 2, \cdots,n\right\}, \theta_{j}$ is directly proportional to  $\left\{\gamma_{1},\gamma_{R_{1}},\cdots,
    		\gamma_{R_{n}}\right\}$.
    \end{proof}

     \subsubsection{Optimal structure of reinsurance chains}
 In the preceding analysis of the reinsurance chain, the positions of reinsurers are assumed to be predetermined. We now relax this assumption to investigate how equilibrium strategies vary with the chain structure. For analytical simplicity, this subsection studies the optimal structure of the reinsurance chain under the expected value premium principle in the case of a single insurer. According to \cref{reinsurance-value function}, all reinsurers  follow a similar optimization problem, with the only difference being the risk aversion coefficient $\gamma_{R_j}$. Consequently, the problem reduces to identifying the appropriate ordering of these coefficients to attain the optimal chain structure. By treating the reinsurance chain as an integrated whole, the optimal chain is such that it achieves the highest possible risk-bearing capacity and imposes the lowest safety loading on the insurers.
      
      Assume that the risk aversion coefficients $\left\{\gamma_1, \gamma_{{R_1}}, \cdots, \gamma_{R_j}, \gamma_{R_n}\right\}$ are heterogeneous. To compare different chain structures, we construct two chains. Let \( l \in \{1, 2\} \) and \( j \in \{1, 2, \dots, n\} \). Denote by $\gamma_{R_{lj}}$ the risk aversion coefficient of the \(j\)-th reinsurer in the \(l\)-th chain. For any fix $k\in\left\{2,3,\cdots n\right\}$,  the coefficients are specified as follows:     $\gamma_{1j}=\gamma_{2j}=\gamma_j(j=1, 2, \cdots, k-2, k+1, \cdots, n)$, $\gamma_{1{k-1}}=\gamma_{2k}=\gamma_{k-1}$, $\gamma_{1{k}}=\gamma_{2k-1}=\gamma_{k}$.  The first chain has the same structure as the original reinsurance chain. Meanwhile, the second chain is constructed by interchanging the positions of the 
      	$k$-th and the 
      	$(k-1)$-th reinsurers in the original reinsurance chain.
      
      \begin{Theorem}
      	\label{Theorem-ChainChange}
      	Fix $k\in\left\{2, 3,\cdots ,n\right\}$, $\forall j \in\left\{1, 2, \cdots, k\right\}$,   $\theta_{1j}(t)<\theta_{2j}(t)$, if and only if $\gamma_{R_{k-1}}<\gamma_{R_k}$. In this case, $\forall j\in\left\{k+1, \cdots, n\right\}$, $\theta_{1j}(t)>\theta_{2j}(t)$.
      \end{Theorem}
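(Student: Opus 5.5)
The plan is to work entirely with the explicit single-insurer formulas of \Cref{one insurer-theorem}. The first step is to compare the sequences $H_j(t)$ in the two chains, and the second is to push that comparison through the backward recursion (\ref{one insurer-safety loadings}) and the terminal formula (\ref{one insurer-safety loadings-R_{n}}) for the safety loadings.

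\emph{Step 1 (comparison of the $H_j$).} By (\ref{one insurer-H}), $H_{j}(t)$ depends only on $\gamma_1,\gamma_{R_1},\dots,\gamma_{R_{j-1}}$, and the coefficient of $1/\gamma_{R_i}$ in it is $2^{-(j-1-i)}$. Write $\kappa=o_1/(\mathrm{e}^{r(T-t)}\sigma_1^2)$ and $\delta=\frac1{\gamma_{R_{k-1}}}-\frac1{\gamma_{R_k}}$. The swap of positions $k-1$ and $k$ then gives:
\begin{itemize}
\item $H_{1j}=H_{2j}$ for $j\le k-1$;
\item $H_{1k}-H_{2k}=\kappa\,\delta$, since only the weight-one term changes;
\item $H_{1j}-H_{2j}=\kappa\,(2^{-(j-k)}-2^{-(j-k-1)})\,\delta=-2^{-(j-k)}\kappa\,\delta$ for $j\ge k+1$.
\end{itemize}
Hence $\gamma_{R_{k-1}}<\gamma_{R_k}$ (i.e.\ $\delta>0$) is equivalent to $H_{1k}>H_{2k}$ together with $H_{1j}<H_{2j}$ for every $j\ge k+1$. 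When $\delta<0$ every inequality reverses, and when $\delta=0$ all $H_j$ coincide.

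\emph{Step 2 (upper layers $j\ge k+1$).} I would unroll the recursion as
\[
\theta_j=\sum_{i=j}^{n-1}\frac{2^{-(i-j)}}{2^{i}H_i}+2^{-(n-j)}\theta_n .
\]
By the derivative computation in the Corollary, $\theta_n$ is strictly decreasing in $H_n$, and every other summand is decreasing in its $H_i$. For $j\ge k+1$ only $H_{k+1},\dots,H_n$ enter, and these are strictly smaller in chain 1 when $\delta>0$. Therefore $\theta_{1j}>\theta_{2j}$ for $j\ge k+1$, which is the second assertion of the theorem.

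\emph{Step 3 (layer $k$, the main obstacle).} At layer $k$ two opposite effects compete:
\[
\theta_{1k}-\theta_{2k}=\frac1{2^k}\Big(\frac1{H_{1k}}-\frac1{H_{2k}}\Big)+\frac12\big(\theta_{1,k+1}-\theta_{2,k+1}\big).
\]
The first term is negative and the second is positive. My plan is to bound the second term from above using Step 1 and the mean value theorem:
\[
|\theta_{1,k+1}-\theta_{2,k+1}|\le \sum_{i\ge k+1} 2^{-(i-k-1)}\,\frac{2^{-(i-k)}\kappa\delta}{2^{i}\,\underline H_i^{\,2}}+(\text{analogous term for }\theta_n).
\]
Here $\underline H_i=\min(H_{1i},H_{2i})$. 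I would then use the lower bound $\underline H_i\ge 2^{-(i-k)}H_{\cdot k}+\kappa/\gamma_{R_{i-1}}$, which follows from the structure of (\ref{one insurer-H}). For the $\theta_n$ piece I would use the explicit bound $|\partial\theta_n/\partial H_n|\le 1/(2^{n-1}H_n^2)$, which is read off from the displayed derivative. The hope is that the geometric factors $2^{-2(i-k)}$, together with the growth of $\underline H_i$ relative to $H_k$, make this sum strictly smaller than $2^{-k}\kappa\delta/(H_{1k}H_{2k})$. This is the delicate estimate. If the crude bound fails, I would instead compare $\theta_k$ directly as a function of the pair $(\gamma_{R_{k-1}},\gamma_{R_k})$ and exploit the exchange symmetry: show that $\theta_{1k}-\theta_{2k}$ vanishes only at $\delta=0$ and determine its sign by differentiating at $\delta=0$.

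\emph{Step 4 (lower layers $j<k$ and the converse).} For $j\le k-1$ the quantities $H_j$ agree in both chains. The recursion then gives $\theta_{1j}-\theta_{2j}=2^{-(k-j)}(\theta_{1k}-\theta_{2k})$, so the sign established at layer $k$ propagates to every $j\le k$. The ``only if'' direction follows by running the same argument with $\delta<0$, which reverses all inequalities, and with $\delta=0$, which gives equality. This contradicts the strict inequality $\theta_{1j}<\theta_{2j}$ in both cases.
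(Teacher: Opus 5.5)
There is a genuine gap: the case $k=n$ is not covered, and your method cannot reach it as written. Step 3 splits $\theta_{1k}-\theta_{2k}$ using the recursion (\ref{one insurer-safety loadings}) at layer $k$. For $k=n$ no such recursion exists, because $\theta_n^*$ is given by the terminal formula (\ref{one insurer-safety loadings-R_{n}}), and Step 2 is vacuous. In the terminal formula the swap changes two things at once. It changes $H_n$, which is larger in chain 1 when $\gamma_{R_{n-1}}<\gamma_{R_n}$. It also changes the risk-aversion coefficient that appears explicitly: $\gamma_{R_n}$ in chain 1, $\gamma_{R_{n-1}}$ in chain 2. Since $\theta_n^*$ decreases in $H_n$ but increases in the terminal risk aversion, the two effects pull in opposite directions, so no monotonicity argument decides the sign. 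This is why the paper treats $k=n$ by a separate explicit computation.

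A short way to do that computation is as follows. Set $\kappa=o_1/(\mathrm{e}^{r(T-t)}\sigma_1^2)$, write $H_n=\kappa x$, and let $w$ be the reciprocal of the terminal risk aversion. The terminal formula becomes $\theta_n^*=\frac{1}{2^{n-1}\kappa}\cdot\frac{x+w}{x(x+2w)}$. Put $a=H_{n-1}/(2\kappa)$, $u=1/\gamma_{R_{n-1}}$, $v=1/\gamma_{R_n}$. Chain 1 has $(x,w)=(a+u,v)$ and chain 2 has $(x,w)=(a+v,u)$. Both numerators equal $a+u+v$, and the denominators satisfy $(a+u)(a+u+2v)-(a+v)(a+v+2u)=u^2-v^2$. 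Hence $\theta_{1n}<\theta_{2n}$ if and only if $\gamma_{R_{n-1}}<\gamma_{R_n}$, with equality if and only if the two coincide. Your Step 4 then propagates the sign to all $j<n$.

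For $k\le n-1$, your route is the paper's $k=n-1$ argument made uniform in $k$. That argument bounds the upper-layer gain via $|\partial\theta_n/\partial H_n|<1/(2^{n-1}H_n^2)$ and compares $H_n$ with $H_{n-1}/2$. The paper writes out only $k=n-1$ and says the remaining $k$ follow ``similarly''. Your crude bound does close, so you do not need the fallback of differentiating at $\delta=0$, which would only give a local sign anyway. You must, however, anchor at the larger value $H_{1k}$:

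For $\delta>0$ we have $\min(H_{1i},H_{2i})=H_{1i}>2^{-(i-k)}H_{1k}$. So the $i$-th summand of $\frac12(\theta_{1,k+1}-\theta_{2,k+1})$ is below $2^{-(i-k)}\kappa\delta/(2^kH_{1k}^2)$. The $\theta_n$ term is below $2^{-(n-k-1)}\kappa\delta/(2^kH_{1k}^2)$. These sum to less than $\kappa\delta/(2^kH_{1k}^2)<\kappa\delta/(2^kH_{1k}H_{2k})$, which beats the negative first term.

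Anchoring at $H_{2k}$ instead would leave you with $\kappa\delta/(2^kH_{2k}^2)$, which is too weak. The extra $\kappa/\gamma_{R_{i-1}}$ terms in your lower bound are not needed.
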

      \begin{proof}
        	See Appendix \ref{proof-ChainChange}. 
      \end{proof}
     
     The entire reinsurance chain can be viewed as a whole, with its sole output being the strategy $\theta_{1}(t)$. $\theta_{1}(t)$ represents the overall risk-bearing capacity of the chain. The lower $\theta_{1}(t)$ is, the greater the reinsurance chain's risk-taking capacity is.
     If an optimal structure of the reinsurance chain exists such that the reinsurance capacity is maximized, i.e., the safety loading strategy $\theta_{1}(t)$ is minimized, then the optimal structure of the reinsurance chain is as follows:
      \begin{Theorem}
      	\label{Theorem-optimal reinsurance chain structure}
      	Under the setting of \cref{one insurer-theorem}, the optimal structure of the reinsurance chain satisfies $\gamma_{R_1}<\gamma_{R_2}<\cdots<\gamma_{R_n}$.
      \end{Theorem}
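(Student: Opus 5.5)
The plan is an adjacent-transposition (bubble sort) argument built on \cref{Theorem-ChainChange}. The objective is $\theta_1(t)$, viewed as a function of the ordering of the reinsurers' risk aversion coefficients $(\gamma_{R_1},\dots,\gamma_{R_n})$, with $\gamma_1$ and the other primitives held fixed. Since there are finitely many orderings (permutations), a minimizing ordering exists. It then suffices to show that any ordering with an adjacent ``inversion'' can be strictly improved. From that, the unique minimizer must be the increasing ordering $\gamma_{R_1}<\gamma_{R_2}<\cdots<\gamma_{R_n}$.

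\textbf{Key step.} Take an arbitrary chain and suppose there is some $k\in\{2,\dots,n\}$ with $\gamma_{R_{k-1}}>\gamma_{R_k}$. Call this chain the ``first chain'' in the sense of \cref{Theorem-ChainChange}, and let the ``second chain'' be the one obtained by swapping the $(k-1)$-th and $k$-th reinsurers. Apply \cref{Theorem-ChainChange} to the second chain regarded as the original; in it the $(k-1)$-th coefficient is now smaller than the $k$-th. The theorem then gives $\theta_{2,1}(t)<\theta_{1,1}(t)$, using $j=1\le k$. Equivalently, the ``only if'' direction of the theorem shows that when $\gamma_{R_{k-1}}>\gamma_{R_k}$ one cannot have $\theta_{1,1}\le\theta_{2,1}$. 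Strictness needs a little care: if $\gamma_{R_{k-1}}\neq\gamma_{R_k}$, one direction of the equivalence applied to the swapped pair gives the strict inequality. So any adjacent inversion can be removed while strictly lowering $\theta_1(t)$, which strictly raises the chain's overall risk-bearing capacity.

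\textbf{Conclusion.} Start from an optimal ordering. If it were not increasing, it would contain an adjacent inversion, and by the key step swapping that pair would give a strictly smaller $\theta_1(t)$, a contradiction. Hence the optimal chain is sorted increasingly. As a constructive alternative, bubble sort reaches the increasing order from any ordering in finitely many adjacent swaps, each strictly decreasing $\theta_1$. This shows the increasing order beats every other ordering and is therefore the unique optimum, given the assumed heterogeneity of the coefficients.

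\textbf{Main obstacle.} The main obstacle is making sure \cref{Theorem-ChainChange} is applied with the roles of the two chains correctly aligned, since its statement is phrased as an ``if and only if'' with respect to the first chain's ordering. Doing this requires checking that the comparison for $j=1$ really covers the insurer-facing loading $\theta_1$. If that direction were ambiguous, I would instead argue directly from the recursion in \cref{one insurer-theorem}. The swap only changes $H_k$, which by (\ref{one insurer-H}) depends on the ordered weights $2^{-(k-1-i)}/\gamma_{R_i}$. Placing the larger $1/\gamma$ (the smaller $\gamma$) earlier lowers $H_k$ and leaves $H_{k+1},\dots$ unchanged. I would then propagate through $\theta_j=\frac{1}{2^{j}H_j}+\frac{\theta_{j+1}}{2}$ down to $j=1$.
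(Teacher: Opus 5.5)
Your main argument is the paper's own: assume a minimizing chain that is not increasing, locate an adjacent inversion $\gamma_{R_{k-1}}>\gamma_{R_k}$, and invoke \cref{Theorem-ChainChange} with the swapped chain in the role of the ``first'' chain (your alignment is right) to obtain a strictly smaller $\theta_1(t)$; your additions (existence of a minimizer by finiteness, the care about strictness, and the bubble-sort route to uniqueness) just make explicit what the paper leaves implicit. Do not use your fallback, however, because its premises are false and, taken at face value, would reverse the conclusion: putting the smaller $\gamma$ at position $k-1$ raises $H_k$ (which has weight $1$ on $1/\gamma_{R_{k-1}}$ and none on $1/\gamma_{R_k}$) and lowers every $H_j$ with $j>k$ (which weights the swapped pair by $2^{-(j-k)}$ and $2^{-(j-k-1)}$), and for $k=n$ the coefficient $\gamma_{R_n}$ entering $\theta_n$ also changes, so the swap has competing effects on $\theta_1(t)$ whose net sign comes from the quantitative estimate in the proof of \cref{Theorem-ChainChange}, not from monotone propagation through $\theta_j=\frac{1}{2^jH_j}+\frac{\theta_{j+1}}{2}$.
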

      \begin{proof}
      See Appendix \ref{proof of optimal reinsurance chain structure}. 
      \end{proof}

     In contrast to the variance‑premium framework of \textcite{chen2020continuous}, which adopts the variance premium principle, the adoption of the expected value premium principle leads to a distinctive outcome: when the risk aversion coefficient satisfies
      $\gamma_{R_{k-1}}>\gamma_{R_k}$
      , exchanging the positions of  reinsurer $R_{k}$ and  reinsurer $R_{k-1}$ results in an increase in the equilibrium safety loading strategy $\theta_{j}^{*}(t)$ for all  $j\in\left\{k+1,\cdots,n\right\}$. Therefore, in the optimal reinsurance chain, strategy  
      $\theta_{1}^{*}(t)$ 
      achieves the minimum and strategy 
      $\theta_{n}^{*}(t)$ 
      achieves the maximum among all admissible reinsurance chains.

      \begin{Remark}
      	It is assumed that each reinsurer has the discretion to select the subsequent-tier reinsurer to accept its reinsurance business, such that the final structure of the reinsurance chain still satisfies:
      	
      	\begin{equation*}
      		\gamma_{R_1}<\gamma_{R_2}<\cdots<\gamma_{R_n}.
      	\end{equation*}
      	
      	Fix $ k \in \{1, \cdots, n-1\} $, and assume that the risk aversion coefficients of the reinsurance companies $ R_j $ for $ j \in \{1, \cdots, k\} $ are already determined. In order to obtain the minimum safety load $ \theta_{k+1}(t) $ and maximize its value function, reinsurance company $ R_k $ will choose a reinsurance company $ R_{k+1} $ whose risk aversion coefficient $ \gamma_{R_{k}} $ satisfies:
      	\[
      	\gamma_{R_{k+1}} < \gamma_l \quad \text{for} \quad l \in \{k+2, \cdots, n\}.
      	\]
      \end{Remark}
      
     \subsection{Excess-of-loss reinsurance}
       \label{Excess-of-loss reinsurance-economy analysis}
       This section examines  a reinsurance chain consisting of the insurer $1$ and two reinsurers under excess of loss reinsurance.
       Assume $Z^1$ follows an exponential distribution with parameter $\delta_1$, denote $\gamma_{R_{0}}=\gamma_1$. At this point, the equilibrium strategy of the reinsurance chain is as follows:
       \begin{Theorem}
       	\label{Theorem-excess-of-loss reinsurance-two reinsurer}
       	The equilibrium excess-of-loss retention level	is as follows.
       	\begin{equation*}
       		a_{j}^*(t)=\frac{\theta_j(t)}{\gamma_{R_{j-1}}\mathrm{e}^{r(T-t)}}.
       	\end{equation*}
       	
       	The  equilibrium safety loading strategy $\theta_1(t)$ of reinsurer $R_1$ is
       	\begin{equation*}
       		\theta_{1}^*(t)=\frac{\gamma_1\mathrm{e}^{r(T-t)}}{\delta_1}
       		+\theta_2(t)\mathrm{e}^{-\delta_1a_2}
       		-\gamma_{R_1}\mathrm{e}^{r(T-t)}
       		\left(a_2\mathrm{e}^{-\delta_1a_2}
       		-\frac{(1-\mathrm{e}^{-\delta_1a_2})}{\delta_1}\right).
       	\end{equation*}
       	
       	The equilibrium equilibrium safety loading strategy $\theta_2(t)$ of reinsurer $R_2$ is the unique solution to the following equation.
       	\begin{equation*}
       		\label{excess-of-loss-safe load}
       		\begin{aligned}
       			\frac{1}{\delta_1}
       			+\left(\frac{\gamma_{R_{2}}\mathrm{e}^{r(T-t)}}{\delta_{1}}-\theta_{2}(t)\right)
       			\frac{\partial h_2}{\partial \theta_2}=0, 
       		\end{aligned}
       	\end{equation*}
       	where $\frac{\partial h_2}{\partial \theta_2}=
       	\frac{\mathrm{e}^{-\delta_1a_2}}{\gamma_{1}\mathrm{e}^{r(T-t)}}
       	+\frac{1}{\gamma_{R_{1}}\mathrm{e}^{r(T-t)}}$.

       \end{Theorem}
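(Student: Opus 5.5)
The plan is to specialize \cref{Theorem-excess-of-loss reinsurance-n reinsurer} to $m=1$, $n=2$. For the computations I will redo the reduced HJB optimization directly, since the exponential law makes every integral explicit.

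With a single insurer the competition weight vanishes, $\omega_{m,1}=0$, and the common-shock and idiosyncratic claims merge into one compound Poisson stream with intensity $\lambda+\lambda_1$. As in the proof of \cref{Theorem-Proportional reinsurance-equilibrium strategy}, I take the ansatz
\[
w^{R_j}(t,y)=\mathrm{e}^{r(T-t)}y+\alpha_j(t),\qquad g^{R_j}(t,y)=\mathrm{e}^{r(T-t)}y+\beta_j(t),
\]
and the analogous ansatz for the insurer. This separates investment from reinsurance, so the investment strategies coincide with those of \cref{one insurer-theorem}. For the reinsurance part, each agent maximizes a drift term minus $\tfrac{\gamma}{2}\mathrm{e}^{r(T-t)}$ times the jump variance, and I evaluate every first and second moment with \Cref{Lemma 3.1}.

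\textbf{Step 1 (followers' retentions).} For the insurer the reduced objective in $a_1$ is
\[
-(1+\theta_1)(\lambda+\lambda_1)\int_{a_1}^\infty \overline F_1\,\mathrm{d}z-(\lambda+\lambda_1)\int_0^{a_1}\overline F_1\,\mathrm{d}z-\gamma_1\mathrm{e}^{r(T-t)}(\lambda+\lambda_1)\int_0^{a_1}z\overline F_1\,\mathrm{d}z .
\]
Differentiating gives the first-order condition $\overline F_1(a_1)\bigl(\theta_1-\gamma_1\mathrm{e}^{r(T-t)}a_1\bigr)=0$, hence $a_1^*=\theta_1/(\gamma_1\mathrm{e}^{r(T-t)})$.

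For $R_1$, I fix $h_1=a_1$ and optimize over $a_2$ with $h_2=h_1+a_2$. Using \Cref{Lemma 3.1}, the derivative in $a_2$ is $\overline F_1(h_2)\bigl(\theta_2-\gamma_{R_1}\mathrm{e}^{r(T-t)}a_2\bigr)$. This yields $a_2^*=\theta_2/(\gamma_{R_1}\mathrm{e}^{r(T-t)})$. Since $m=1$, the cross-insurer covariance term in (ii) of \cref{Theorem-excess-of-loss reinsurance-n reinsurer} disappears.

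\textbf{Step 2 ($\theta_1$).} $R_1$ maximizes over $\theta_1$ the expression
\[
(\lambda+\lambda_1)\Big[\theta_1\!\int_{h_1}^\infty\!\overline F_1-\theta_2\!\int_{h_2}^\infty\!\overline F_1\Big]-\gamma_{R_1}\mathrm{e}^{r(T-t)}(\lambda+\lambda_1)\int_{h_1}^{h_2}(z-h_1)\overline F_1\,\mathrm{d}z,
\]
with $h_1=\theta_1/(\gamma_1\mathrm{e}^{r(T-t)})$ and $a_2$ fixed, so that $\partial h_2/\partial\theta_1=\partial h_1/\partial\theta_1$. I substitute $\overline F_1(z)=\mathrm{e}^{-\delta_1 z}$. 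The integrals then become $\mathrm{e}^{-\delta_1h_1}/\delta_1$, $\mathrm{e}^{-\delta_1h_1}(1-\mathrm{e}^{-\delta_1a_2})/\delta_1$ and $\mathrm{e}^{-\delta_1h_1}\bigl[(1-\mathrm{e}^{-\delta_1a_2})/\delta_1^2-a_2\mathrm{e}^{-\delta_1a_2}/\delta_1\bigr]$.

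The common factor $\mathrm{e}^{-\delta_1h_1}$ factors out of the first-order condition. What remains is linear in $\theta_1$, because $\partial h_1/\partial\theta_1\cdot\delta_1$ multiplies a linear combination. Solving it gives the displayed formula for $\theta_1^*$. I will also check the second-order condition, which should follow from the sign of the coefficient of $\theta_1$.

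\textbf{Step 3 ($\theta_2$).} $R_2$ is the top of the chain. Its objective is $(\lambda+\lambda_1)\bigl[\theta_2\int_{h_2}^\infty\overline F_1-\gamma_{R_2}\mathrm{e}^{r(T-t)}\int_{h_2}^\infty(z-h_2)\overline F_1\bigr]$, which equals $(\lambda+\lambda_1)\mathrm{e}^{-\delta_1h_2}\bigl[\theta_2/\delta_1-\gamma_{R_2}\mathrm{e}^{r(T-t)}/\delta_1^2\bigr]$. Here $h_2=a_1(\theta_1(\theta_2))+a_2(\theta_2)$.

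The key computation is $\partial\theta_1/\partial\theta_2$, which I obtain by implicitly differentiating the Step 2 equation. I expect it to equal $\mathrm{e}^{-\delta_1a_2}$, because the $a_2$-derivative terms cancel by the Step 1 first-order condition $\theta_2=\gamma_{R_1}\mathrm{e}^{r(T-t)}a_2$. This is an envelope-type cancellation, and it gives the stated $\partial h_2/\partial\theta_2$. Differentiating the objective and dividing by $\delta_1\mathrm{e}^{-\delta_1h_2}$ then produces the stated equation.

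\textbf{Main obstacle.} I expect the hardest part to be uniqueness of $\theta_2$. I would show that the left-hand side, viewed as a function of $\theta_2\ge0$, is strictly decreasing once $\theta_2>\gamma_{R_2}\mathrm{e}^{r(T-t)}/\delta_1$. At $\theta_2=\gamma_{R_2}\mathrm{e}^{r(T-t)}/\delta_1$ it equals $1/\delta_1>0$. As $\theta_2\to\infty$ it tends to $-\infty$, since $\partial h_2/\partial\theta_2\ge 1/(\gamma_{R_1}\mathrm{e}^{r(T-t)})$. For smaller $\theta_2$ the left-hand side is positive, so no root lies there. The intermediate value theorem together with this monotonicity then gives exactly one root. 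Monotonicity needs control of $\mathrm{e}^{-\delta_1a_2}$ times the growing factor; I would try to verify it by direct differentiation.
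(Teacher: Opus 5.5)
Your Steps 1--3 follow the paper's route: the same first-order conditions for $a_1$, $a_2$, $\theta_1$ and $\theta_2$ under the linear ansatz. Your envelope-type cancellation $\partial\theta_1/\partial\theta_2=\mathrm{e}^{-\delta_1a_2}$, which the paper only asserts, is correct. The gap is in the uniqueness step.

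Write $E=\mathrm{e}^{r(T-t)}$ and let $K(\theta_2)$ be the left-hand side of the $\theta_2$-equation. Set $s=\theta_2-\gamma_{R_2}E/\delta_1$ and $u=\delta_1 s/(\gamma_{R_1}E)$, so that $\delta_1a_2=u+\gamma_{R_2}/\gamma_{R_1}$. Differentiating $K=1/\delta_1-s\,\partial h_2/\partial\theta_2$ gives
\[
K'(\theta_2)=-\frac{1}{\gamma_{R_1}E}+\frac{\mathrm{e}^{-\delta_1a_2}}{\gamma_1E}\,(u-1).
\]
For $u>1$ the second term is positive. At $u=2$ one gets $K'>0$ as soon as $\gamma_1<\gamma_{R_1}\mathrm{e}^{-2-\gamma_{R_2}/\gamma_{R_1}}$, for instance $\gamma_1=\gamma_{R_2}=0.01$, $\gamma_{R_1}=1$. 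So your argument rests on the claim that $K$ is strictly decreasing on the whole half-line $\theta_2>\gamma_{R_2}E/\delta_1$, and that claim is false in general. Direct differentiation will not establish it.

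The missing idea is to localize the root before using monotonicity, which is how the paper proceeds. Your own bound $\partial h_2/\partial\theta_2>1/(\gamma_{R_1}E)$ gives $K(\theta_2)<1/\delta_1-s/(\gamma_{R_1}E)\le 0$ whenever $u\ge1$, i.e.\ on $[(\gamma_{R_1}+\gamma_{R_2})E/\delta_1,\infty)$. So no root lies there; this is the paper's case (ii), and you never need $K\to-\infty$.

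On the bounded interval $0<u<1$, the displayed formula gives $K'<-1/(\gamma_{R_1}E)<0$ at once. Together with $K>0$ for $u\le0$, this yields exactly one root, located in $\bigl(\gamma_{R_2}E/\delta_1,(\gamma_{R_1}+\gamma_{R_2})E/\delta_1\bigr)$. The paper gets the same middle-interval monotonicity less directly, by showing $K''>0$ there and evaluating $K'=-1/(\gamma_{R_1}E)$ at the right endpoint.

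A minor slip: the $\theta_2$-equation comes from dividing the derivative by $(\lambda+\lambda_1)\mathrm{e}^{-\delta_1h_2}$, not by $\delta_1\mathrm{e}^{-\delta_1h_2}$.
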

       \begin{proof}
       	See	Appendix \ref{proof of Excess-of-loss reinsurance-economy analysis}.
       \end{proof}

	\section{Numerical analysis}
	
	 \subsection{The case under proportional reinsurance
	}
	This section provides a numerical illustration of the effects of relevant parameter on 
	equilibrium reinsurance-investment strategy of the reinsurance chain between two insurers and five reinsurers 
	under proportional reinsurance. 
	According to \textcite{hao2022} and \textcite{li2023optimal}, unless otherwise stated, the relevant parameters are set as follows:
	$\gamma_{1}=\gamma_{2}=0.8, w_{1}=0.6,  w_{2}=0.8, \gamma_{R_{1}}=0.7,  \gamma_{R_{2}}=0.6, \gamma_{R_{3}}=0.5,  \gamma_{R_{4}}=0.4, \gamma_{R_{5}}=0.3,  \gamma_{R_{6}}=0.2,  \lambda=1.5, \lambda_{1}=3, \lambda_{2}=4,  \mu_{1}=\mu_{2}=0.1, b_{1} =b_{2}=0.2, r=0.07,  \mu=0.2, \sigma=0.5, t=5, T=10$.

	  \subsubsection{The impact of swapping the positions of reinsurer in the reinsurance chain}
	 In this section, we examine the impact of swapping the positions of reinsurers $R_2$ and $R_3$ within the reinsurance chain.

	 \begin{figure}[h]
	 	\begin{subfigure}[b]{0.5\textwidth}
	 		\includegraphics[width=\textwidth]{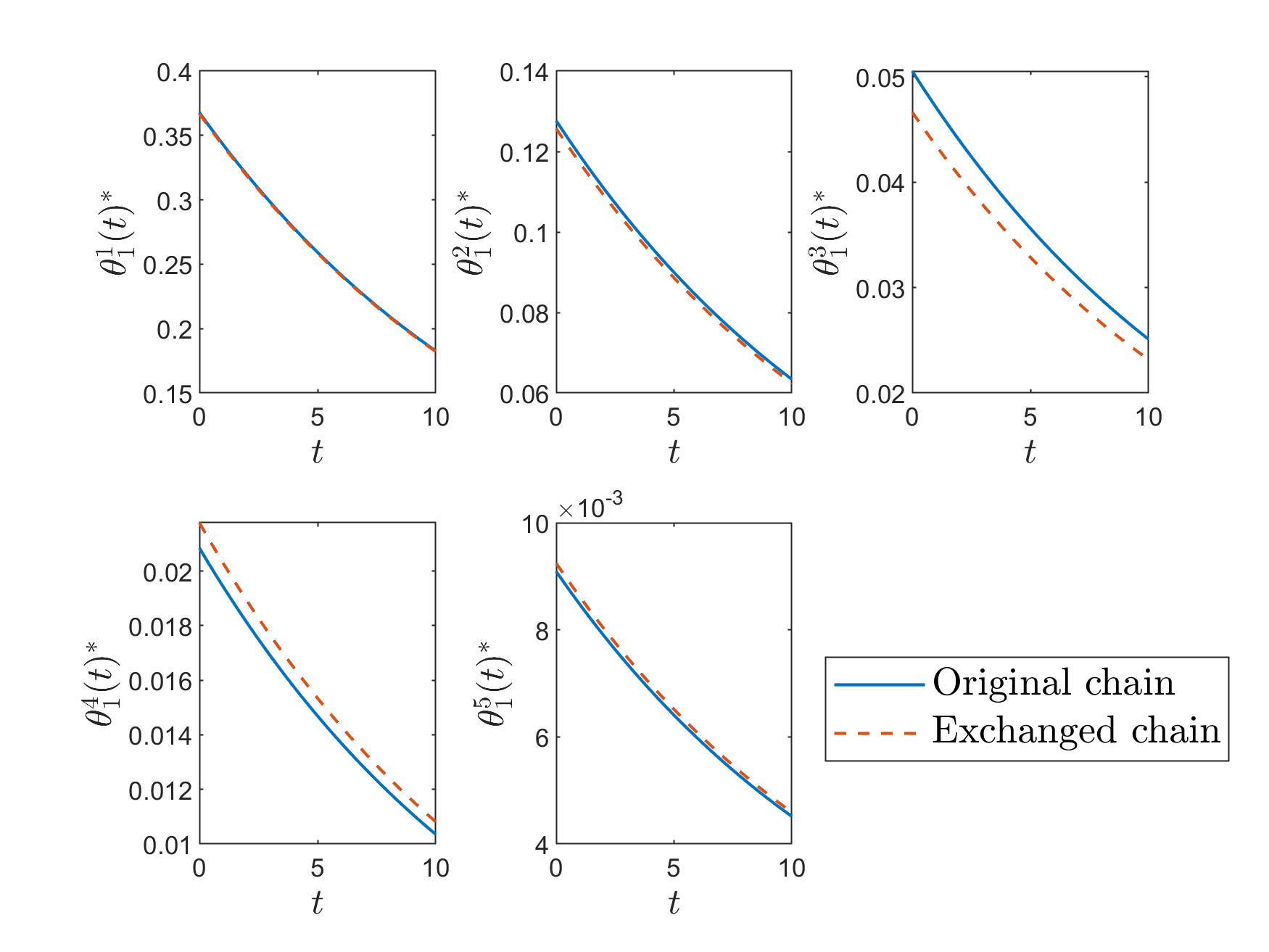}
	 		\caption{First chain}
	 		\label{fig:Reinsurance-chainchange-safetyloadings1}
	 	\end{subfigure}
	 	\begin{subfigure}[b]{0.5\textwidth}
	 		\includegraphics[width=\textwidth]{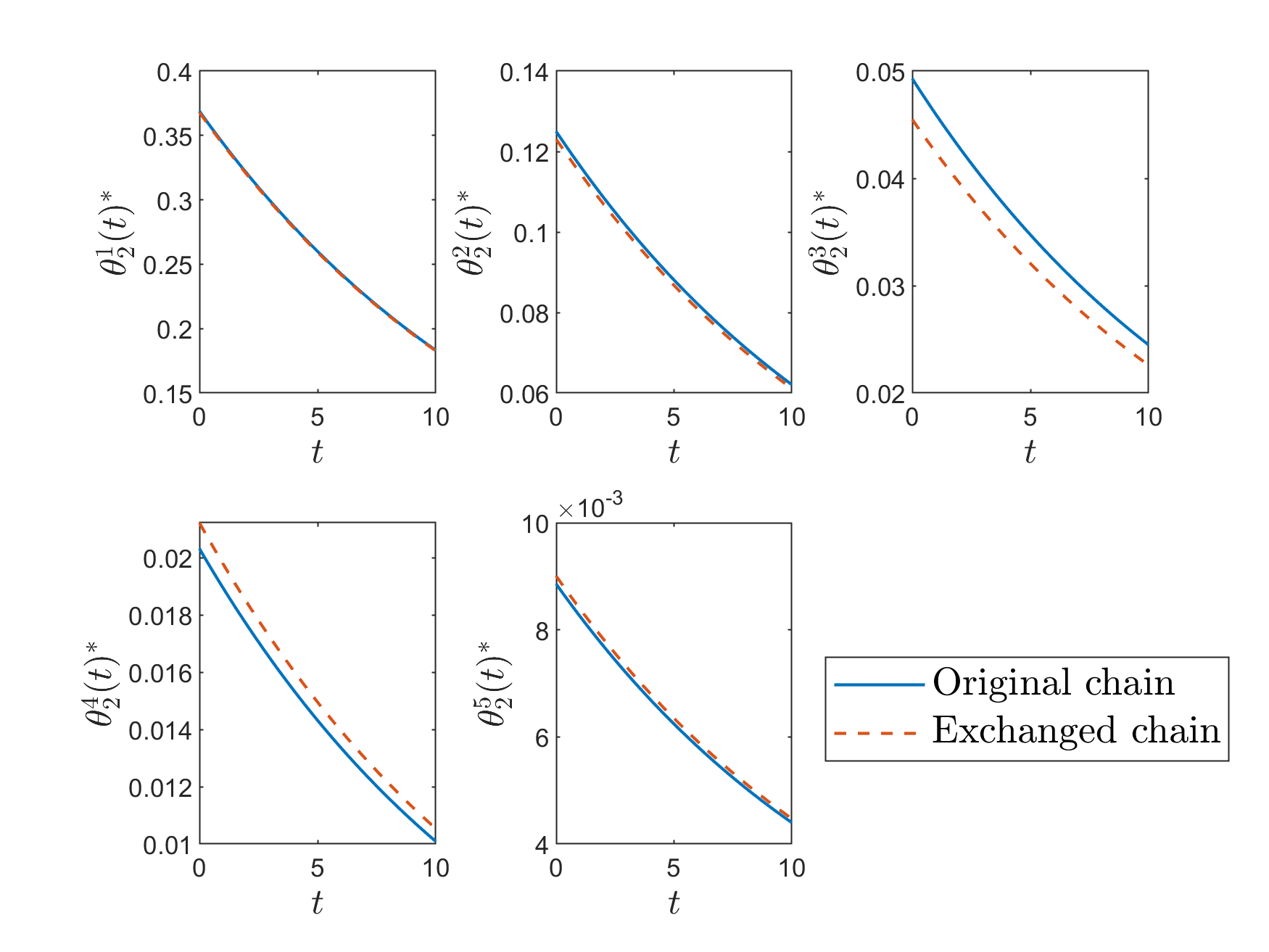}
	 		\caption{Second chain}
	 		\label{fig:Reinsurance-chainchange-safetyloadings2}
	 	\end{subfigure}		
	 	\caption{The impact of swapping the positions of  reinsurer $R_2$ and $R_3$ on the safety loading strategies}	
	 	\label{fig:Reinsurance-chainchange-safetyloadings}
	 \end{figure}

	    \Cref{fig:Reinsurance-chainchange-safetyloadings} depicts the variation of safety loading strategies under different reinsurance chain structures. After the exchange of the reinsurer $R_2$ and reinsurer $R_3$, the safety loading strategy of  reinsurer $R_1,R_2,R_3$ decreases,  whereas those of $R_4$ and $R_5$ increase. From the analysis of \cref{Theorem-ChainChange}, when $\gamma_{R_2}>\gamma_{R_3}$, the reinsurer $R_4$ is confronted with a more risk averse reinsurer after the exchange. At this point, the reinsurer $R_4$ will tend to increase its reinsurance business, which causes that reinsurer $R_5$ will increase its safety load in order to hedge its risk. 
	   And the exchange of $\gamma_{R_2},\gamma_{R_3}$ does not directly affect the relevant parameters in the expressions of  $\theta^{1}_1(t)^{*},\theta^{1}_2(t)^{*},  \theta^{2}_1(t)^{*}, \theta^{2}_2(t)^{*}$. Thus, the decline in $\theta_3(t)^{*}$ leads to a decrease in these equilibrium strategies, which in turn increases the corresponding reinsurance strategies.
	   
	   \Cref{fig:Reinsurance-chainchange-Reinsurance} depicts the variation of reinsurance strategies under different reinsurance chain structures. 
	   The  reinsurer $R_2$ tends to increase its reinsurance strategy $q^3_1(t)^*,q^3_2(t)^*$ due to a decrease in its own risk aversion.
	   On the other hand, The  reinsurer $R_3$ will increase its reinsurance strategy $q^4_1(t)^{*},q^4_2(t)^{*}$ due to an  increase in its own risk aversion.
	   At this point, in order to hedge risk, reinsurer $R_4$ will increase  the safety load strategy $\theta^{4}_{1}(t)^{*}$ and $\theta^{4}_{2}(t)^{*}$ accordingly, which results in the increase of  $\theta^{5}_{1}(t)^{*}$ and $\theta^{5}_{2}(t)^{*}$.

	   \begin{figure}[H]
	   	\begin{subfigure}[b]{0.5\textwidth}
	   		\includegraphics[width=\textwidth]{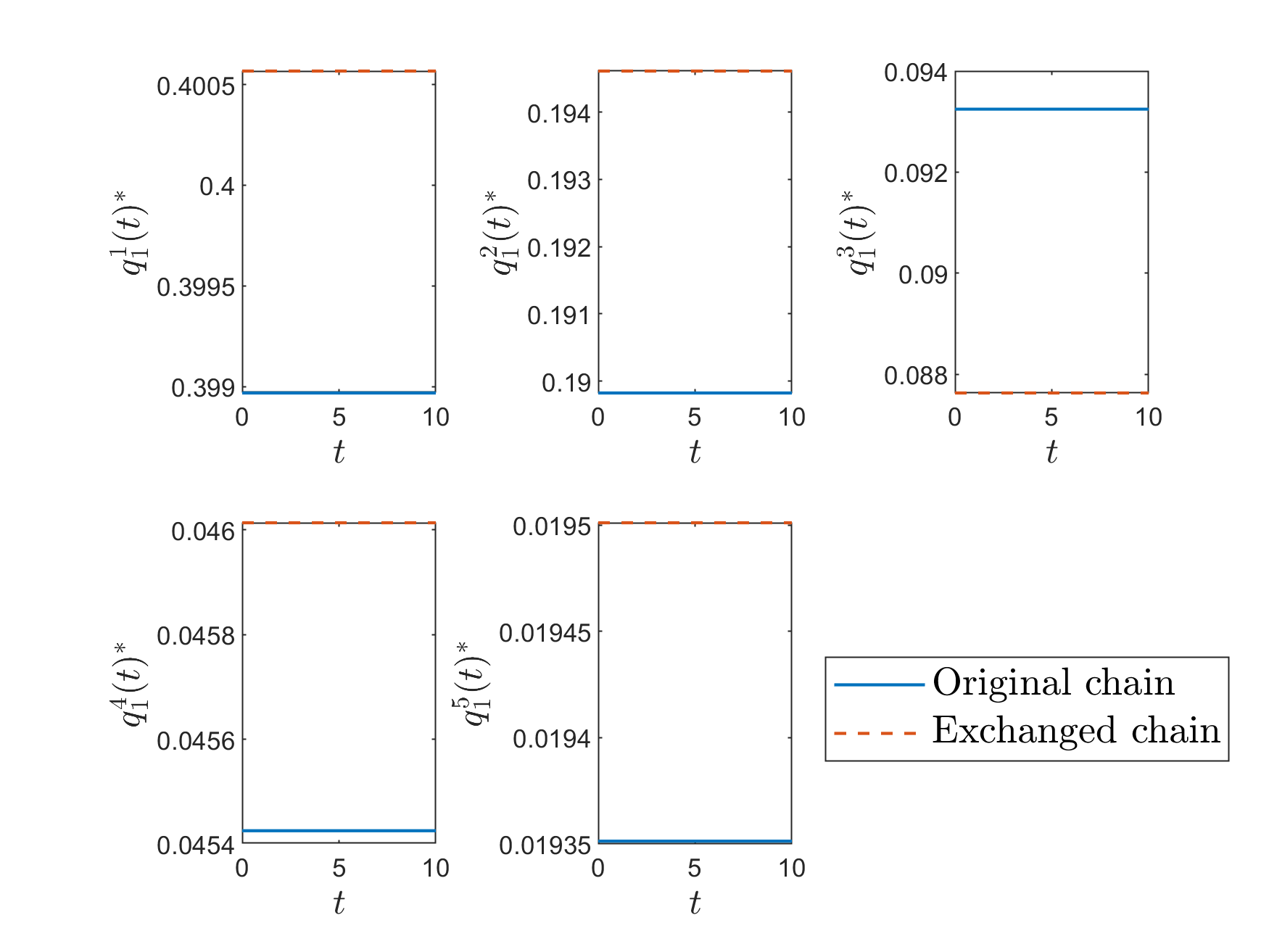}
	   		\caption{First chain}
	   		\label{fig:Reinsurance-chainchange-Reinsurance1}
	   	\end{subfigure}
	   	\begin{subfigure}[b]{0.5\textwidth}
	   		\includegraphics[width=\textwidth]{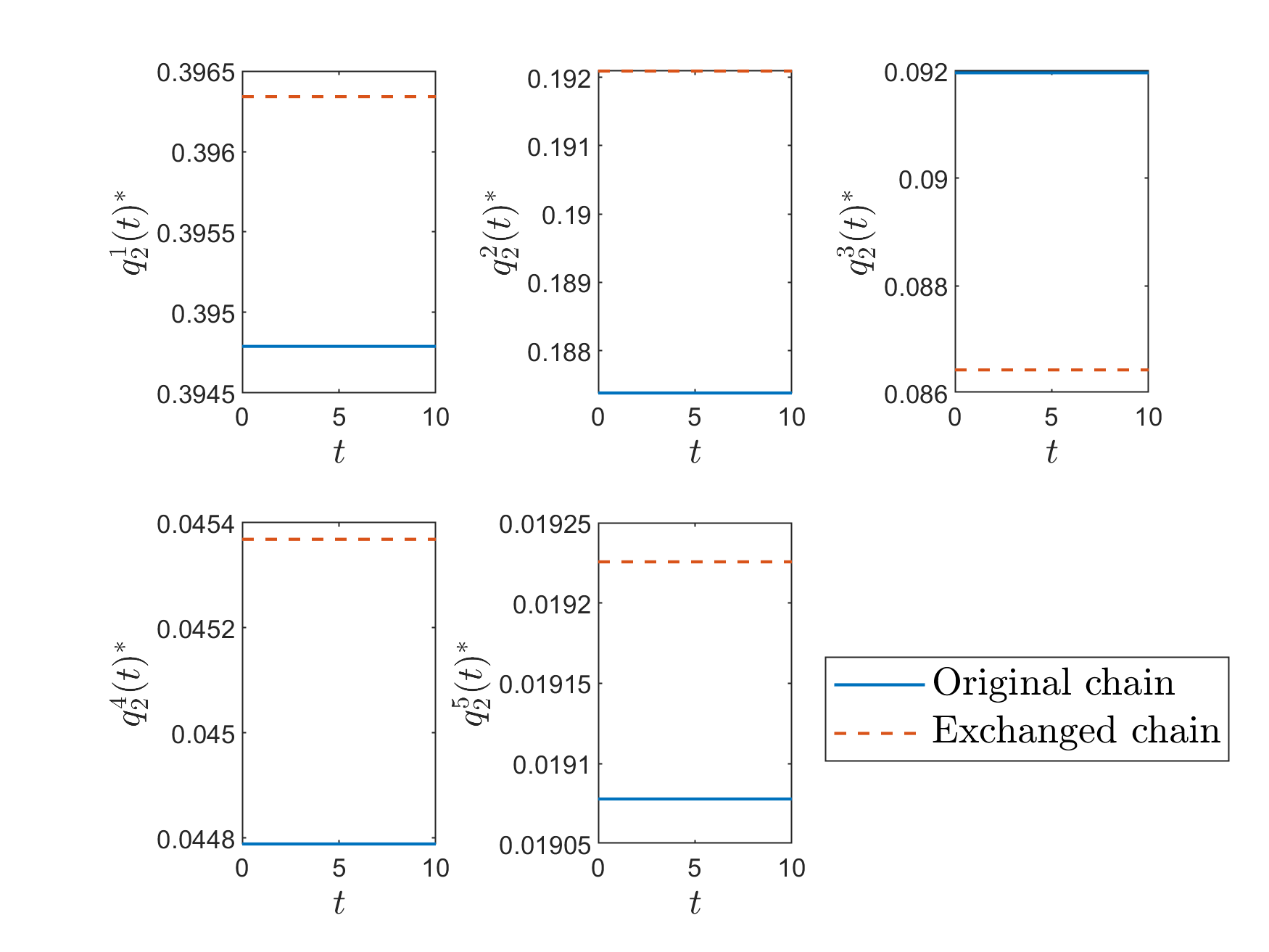}
	   		\caption{Second chain}
	   		\label{fig:Reinsurance-chainchange-Reinsurance2}
	   	\end{subfigure}		
	   	\caption{The impact of swapping the positions of  reinsurer $R_2$ and $R_3$ on the reinsurance strategies}	
	   	\label{fig:Reinsurance-chainchange-Reinsurance}
	   \end{figure}

	  \subsubsection{The effect of increasing the length of the reinsurance chain on the equilibrium strategy}
	  
	  \begin{figure}[h]
	  	\begin{subfigure}[b]{0.5\textwidth}
	  		\includegraphics[width=\textwidth]{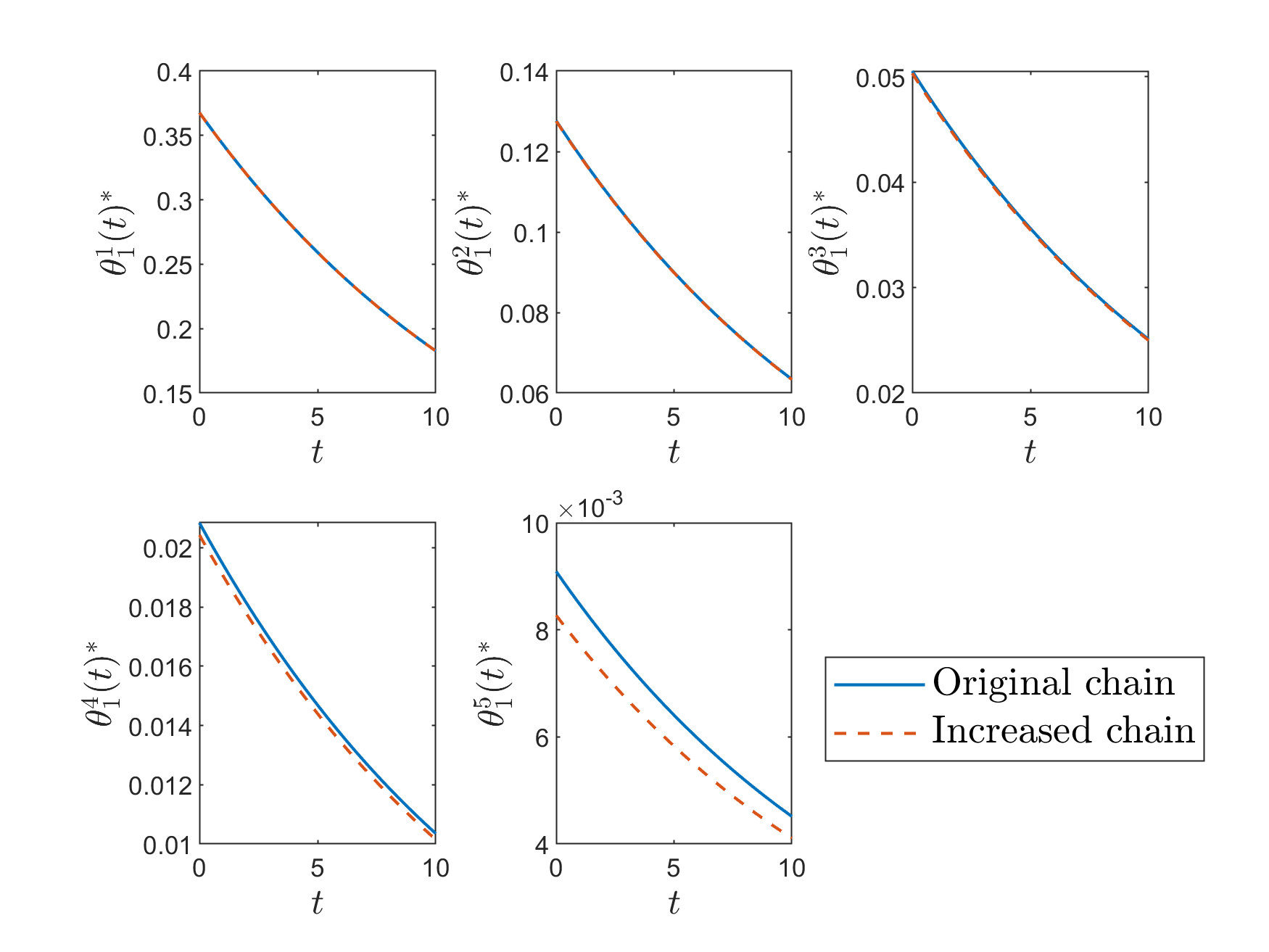}
	  		\caption{First chain}
	  		\label{fig:Reinsurance-chainplus-Safeloadings1}
	  	\end{subfigure}
	  	\begin{subfigure}[b]{0.5\textwidth}
	  		\includegraphics[width=\textwidth]{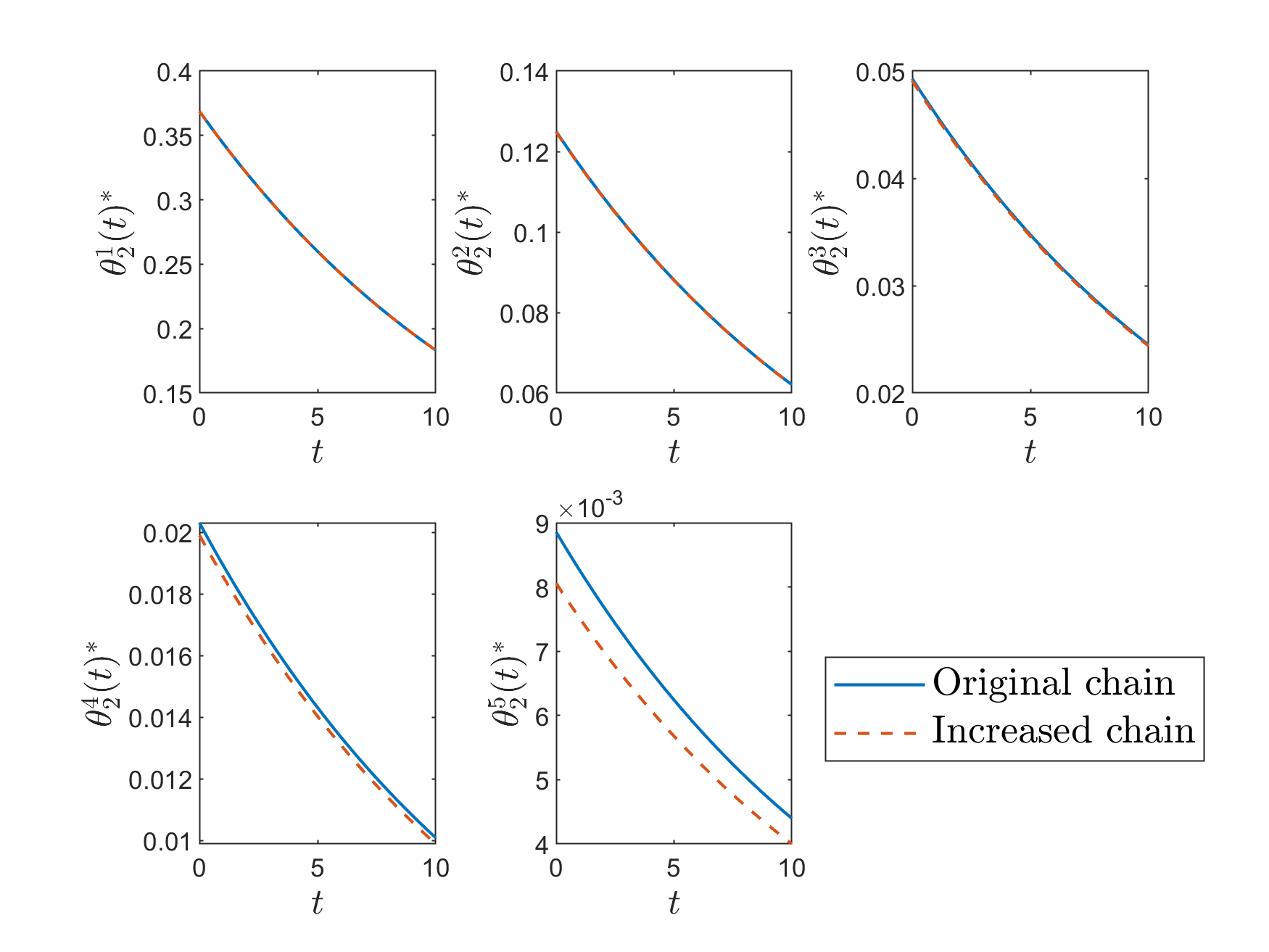}
	  		\caption{Second chain}
	  		\label{fig:Reinsurance-chainplus-Safeloadings2}
	  	\end{subfigure}		
	  	\caption{The effect of increasing the length of the reinsurance chain on the equilibrium safety loading strategy}
	  	\label{fig:Reinsurance-chainplus-Safeloadings}
	  \end{figure}

	  \Cref{fig:Reinsurance-chainplus-Safeloadings} shows that increasing the length of the reinsurance chain generally lowers the safety loadings. 
	  The impact is strongest for reinsurers closest to the newly added reinsurer. 
	  The new reinsurer $R_6$ directly affects $R_5$. 
	  As $R_6$ takes over part of the reinsurance business, $R_5$ assumes less risk and will reduce its safety loading. 
	  This adjustment propagates to other reinsurers, and the effect diminishes as the distance from $R_6$ increases.

	  \Cref{fig:Reinsurance-chainplus-Reinsurance} depicts the impact of increasing reinsurance chain length on equilibrium reinsurance strategies.
	    As shown in  \Cref{fig:Reinsurance-chainplus-Reinsurance}, the equilibrium reinsurance strategy increases as the length of the reinsurance chain extends. When the length of the reinsurance chain increases, the safety loading strategy of reinsurers decreases. At this point, the increased length of the reinsurance chain enhances the overall risk-bearing capacity, prompting insurance companies and reinsurers to expand their reinsurance business.
	  
	 \begin{figure}[H]
		\begin{subfigure}[b]{0.5\textwidth}
			\includegraphics[width=\textwidth]{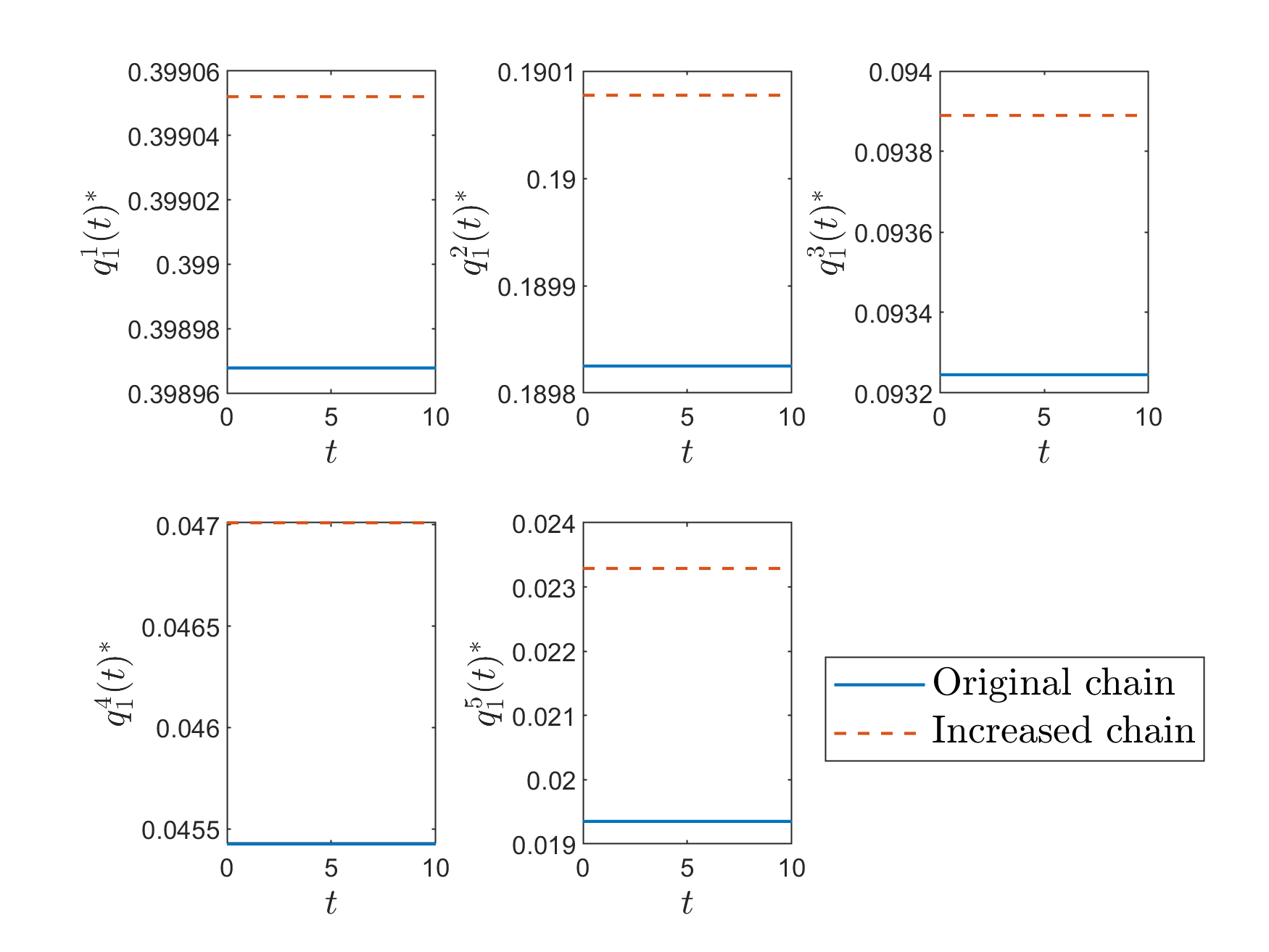}
			\caption{First chain}
			\label{fig:Reinsurance-chainplus-Reinsurance1}
		\end{subfigure}
		\begin{subfigure}[b]{0.5\textwidth}
			\includegraphics[width=\textwidth]{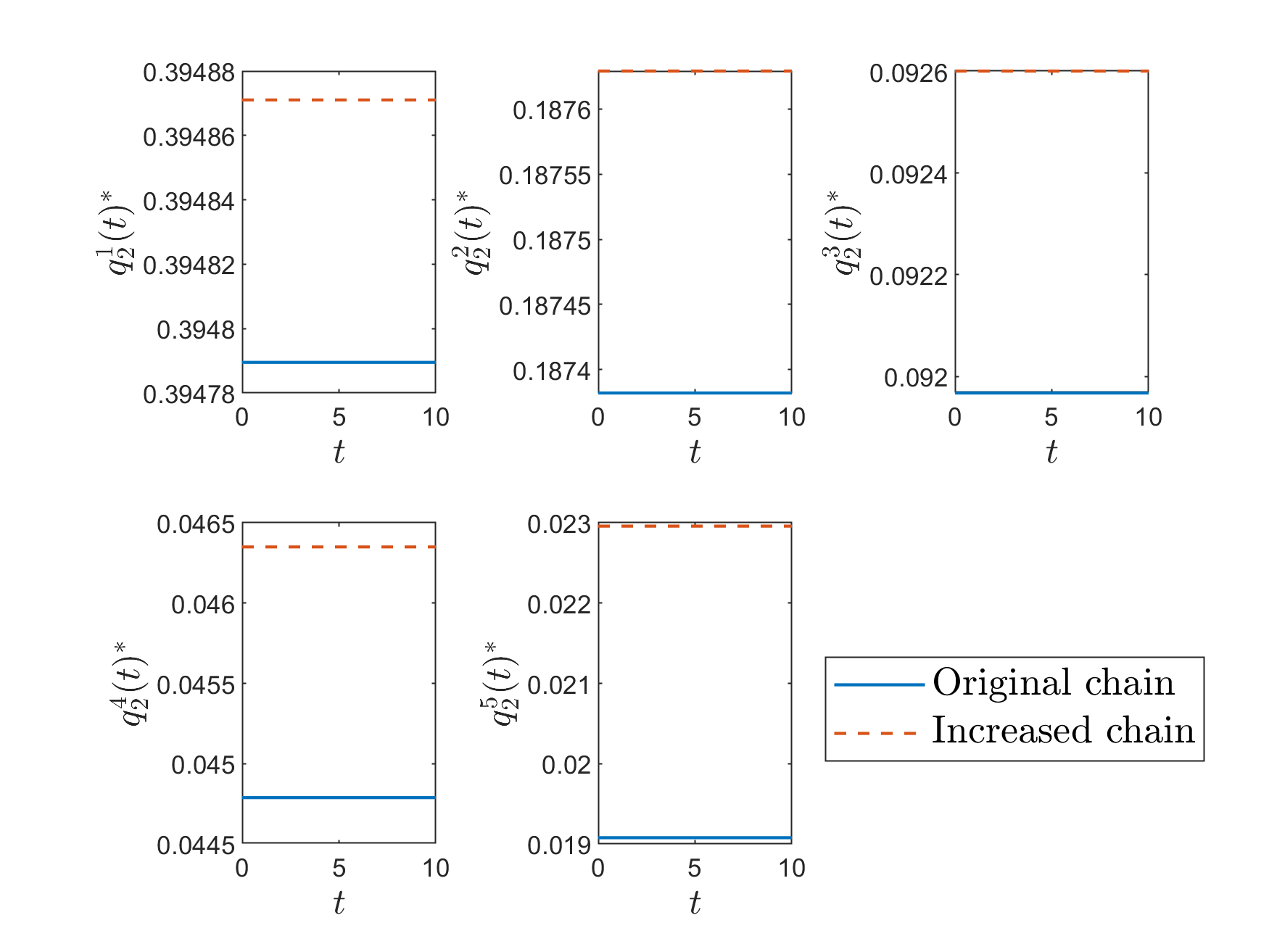}
			\caption{Second chain}
			\label{fig:Reinsurance-chainplus-Reinsurance2}
		\end{subfigure}		
		\caption{The impact of increasing the length of the reinsurance chain on the equilibrium reinsurance strategy}	
		\label{fig:Reinsurance-chainplus-Reinsurance}
	\end{figure}
	
	\subsubsection{The impact of the insurance companies' competition weights 
		$w_{1},w_{2}$
		on the equilibrium strategies}
	
	\Cref{fig:Reinsurance-Competion-Reinsurance} shows that $q^{1}_{i}(t)^{*}$ is directly proportional to $w_k$,  and inversely proportional to $w_i$. Taking $q^1_1(t)^{*}$ as an example, when $w_1$ increases, the competition weight of the first insurance company rises. To maintain its competitive advantage, it tends to retain risk, thus reducing $q^{1}_{1}(t)^{*}$. On the other hand, when $w_2$ rises, $q^{1}_{2}(t)^{*}$ decreases. To attract reinsurance business, reinsurer $R_{1}$ will accordingly reduce the safety load $\theta^{1}_{1}(t)^{*}$ and $\theta^{1}_{2}(t)^{*}$, which leads to an increase in $q^{1}_{1}(t)^{*}$. 
	
	\begin{figure}[H]
		\begin{subfigure}[b]{0.5\textwidth}
			\includegraphics[width=\textwidth]{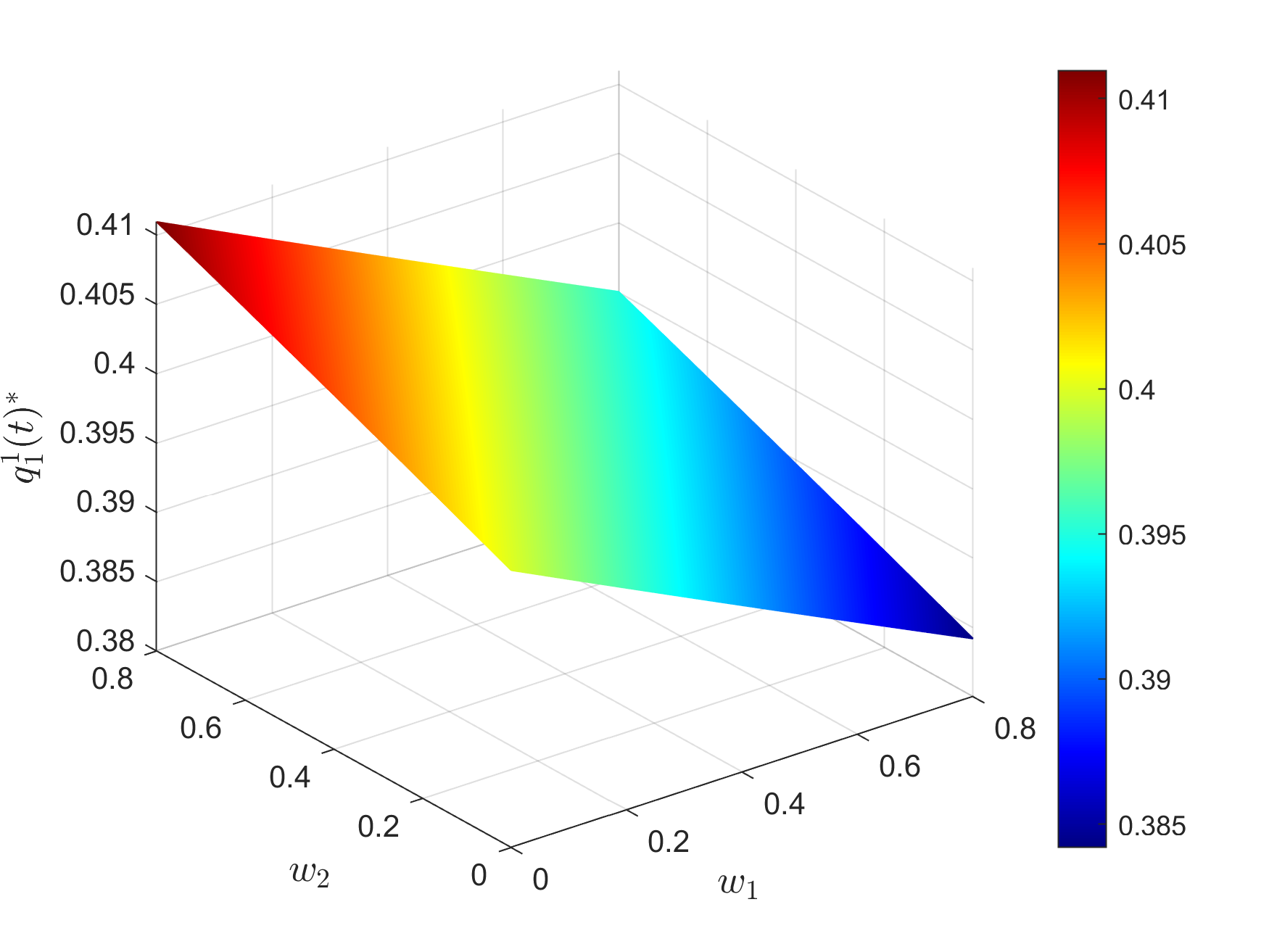}
			\caption{}
			\label{Reinsurance-Competion-Reinsurance-q11}
		\end{subfigure}
		\begin{subfigure}[b]{0.5\textwidth}
			\includegraphics[width=\textwidth]{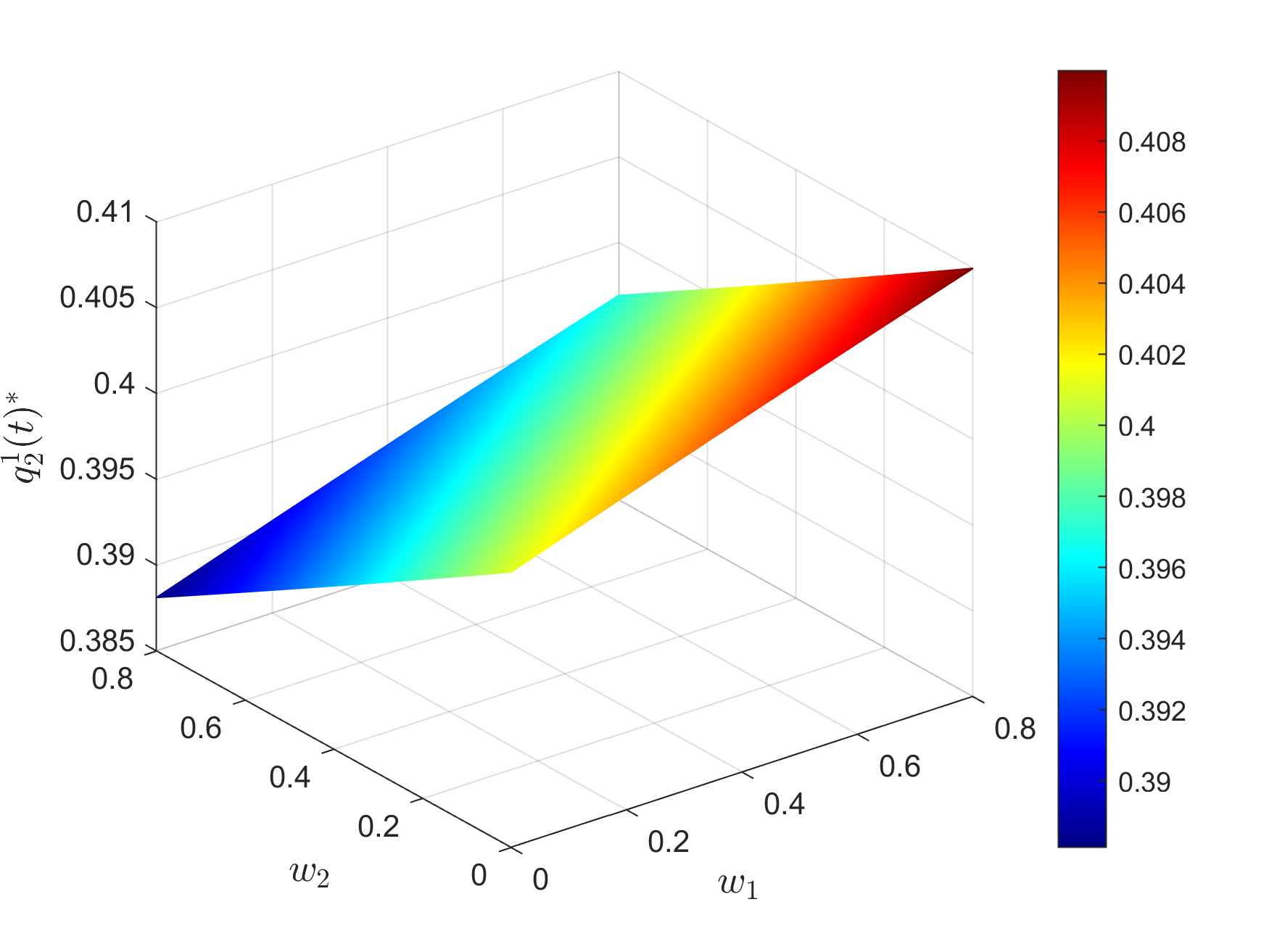}
			\caption{}
			\label{Reinsurance-Competion-Reinsurance-q12}
		\end{subfigure}
				
		\caption{The impact of competitive weights $w_1$ and $w_2$ on reinsurance strategies}	
		\label{fig:Reinsurance-Competion-Reinsurance}
	\end{figure}
	
	\begin{figure}[h]
		\begin{subfigure}[b]{0.5\textwidth}
			\includegraphics[width=\textwidth]{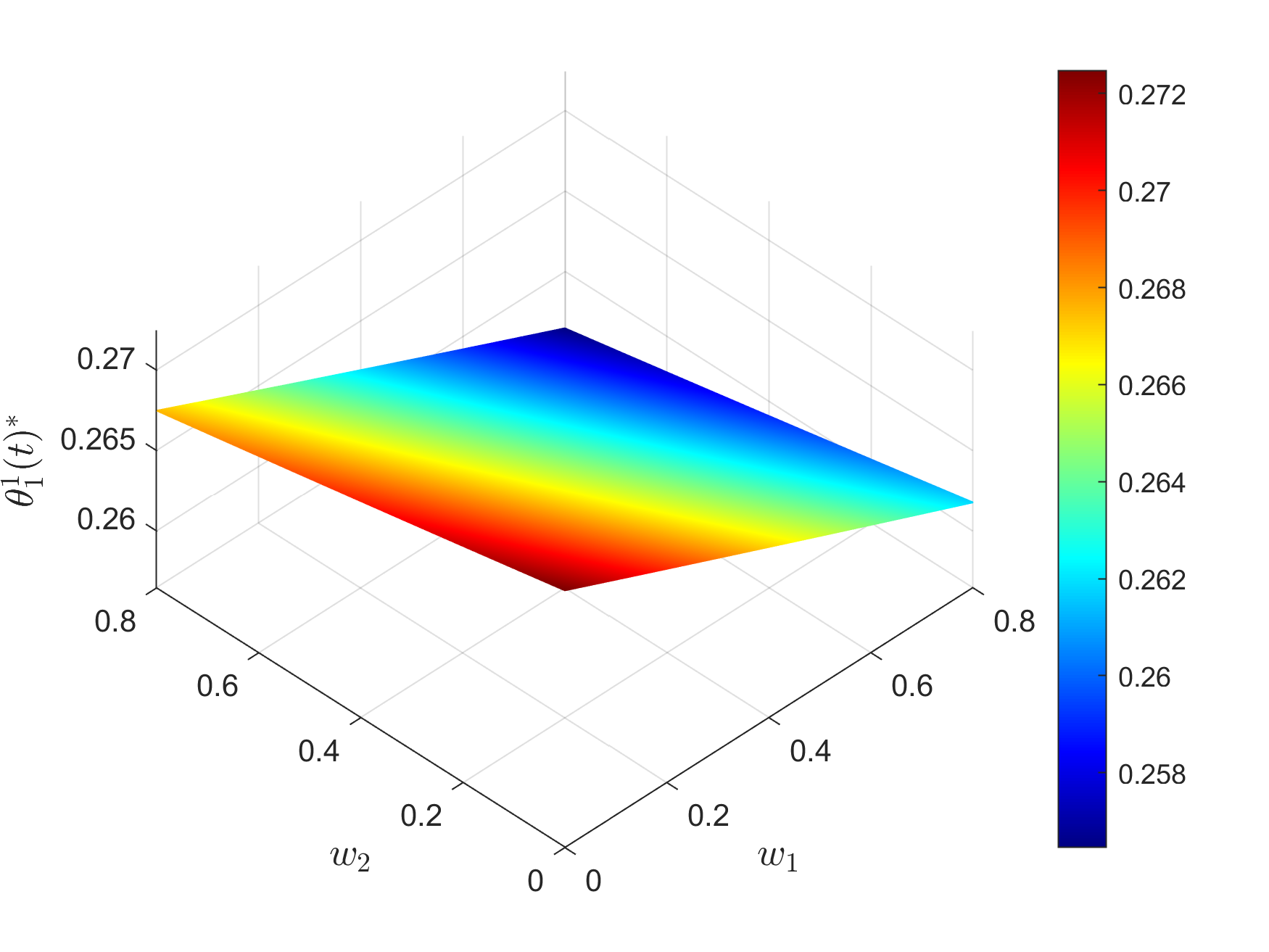}
			\caption{}
			\label{Reinsurance-Competion-Safeloadings1}
		\end{subfigure}
		\begin{subfigure}[b]{0.5\textwidth}
			\includegraphics[width=\textwidth]{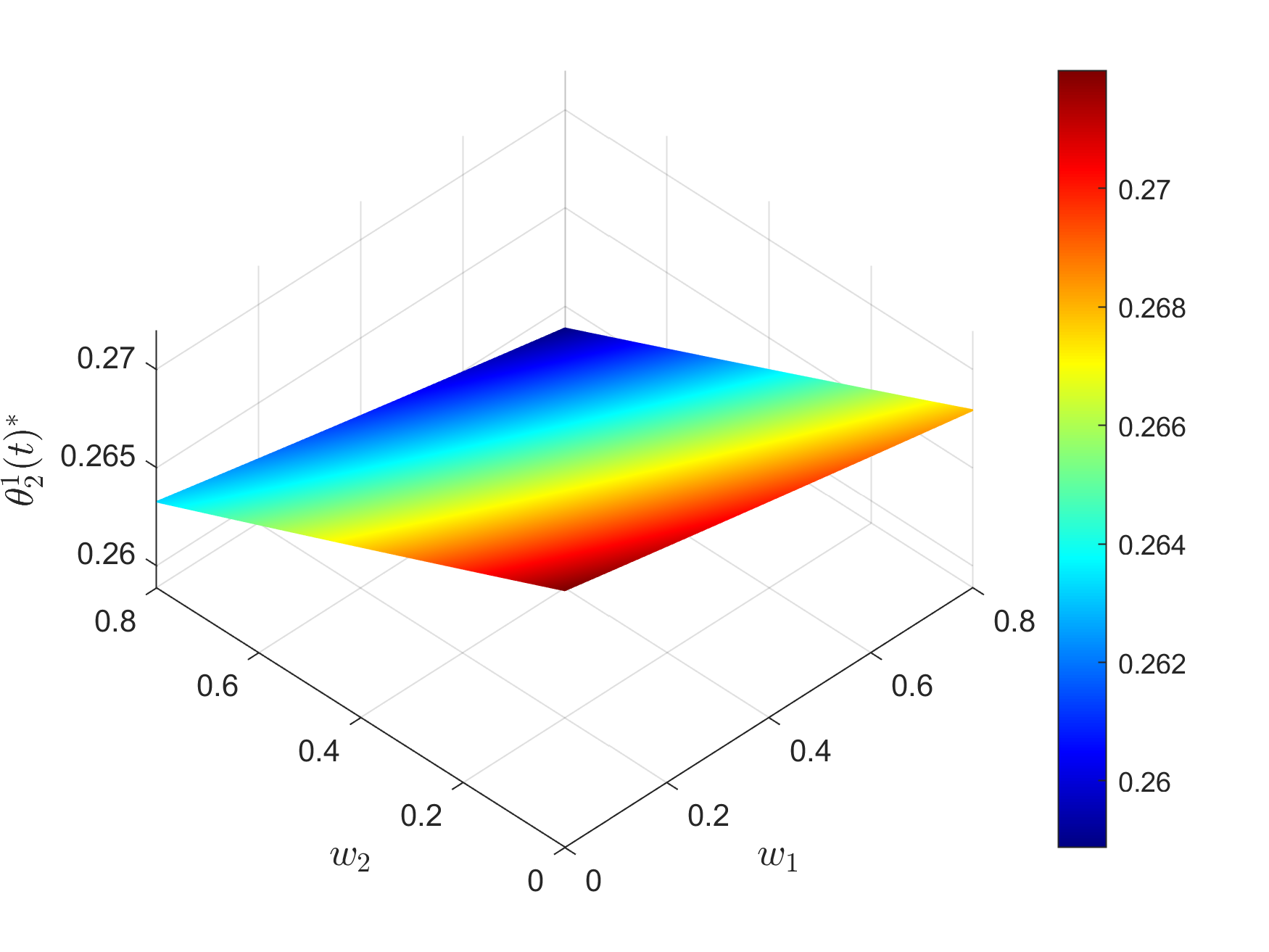}
			\caption{}
			\label{Reinsurance-Competion-Safeloadings2}
		\end{subfigure}		
		\caption{The impact of the competition weights $w_1$ and $w_2$	​
			on the safety loading strategy}	
		\label{fig:Reinsurance-Competion-Safeloadings}
	\end{figure}

	\Cref{fig:Reinsurance-Competion-Safeloadings} shows that the safety load strategy is inversely proportional to the values of 
	$w_1$ and $w_2$
	. When the insurance market is in a highly competitive environment, companies tend to take on more risk to gain an advantage in the market, which reduces reinsurance business. To attract reinsurance business, reinsurance companies at all levels in the reinsurance chain will lower their safety load strategies. The more intense the competition in the market, the greater the reduction in safety load.

	\begin{figure}[h]
		\begin{subfigure}[b]{0.5\textwidth}
			\includegraphics[width=\textwidth]{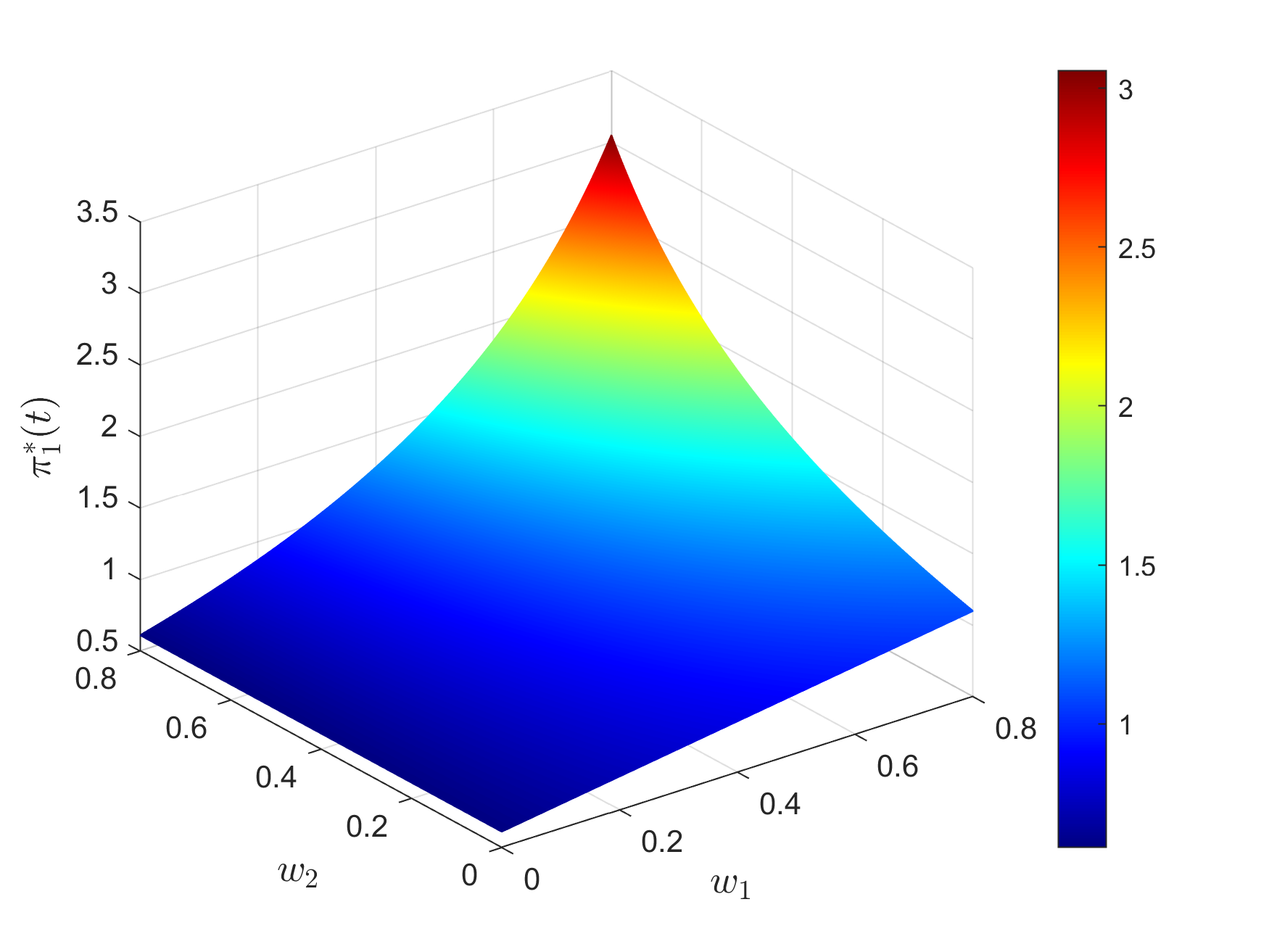}
			\caption{}
			\label{fig:Investment-Competion-pi1}
		\end{subfigure}
		\begin{subfigure}[b]{0.5\textwidth}
			\includegraphics[width=\textwidth]{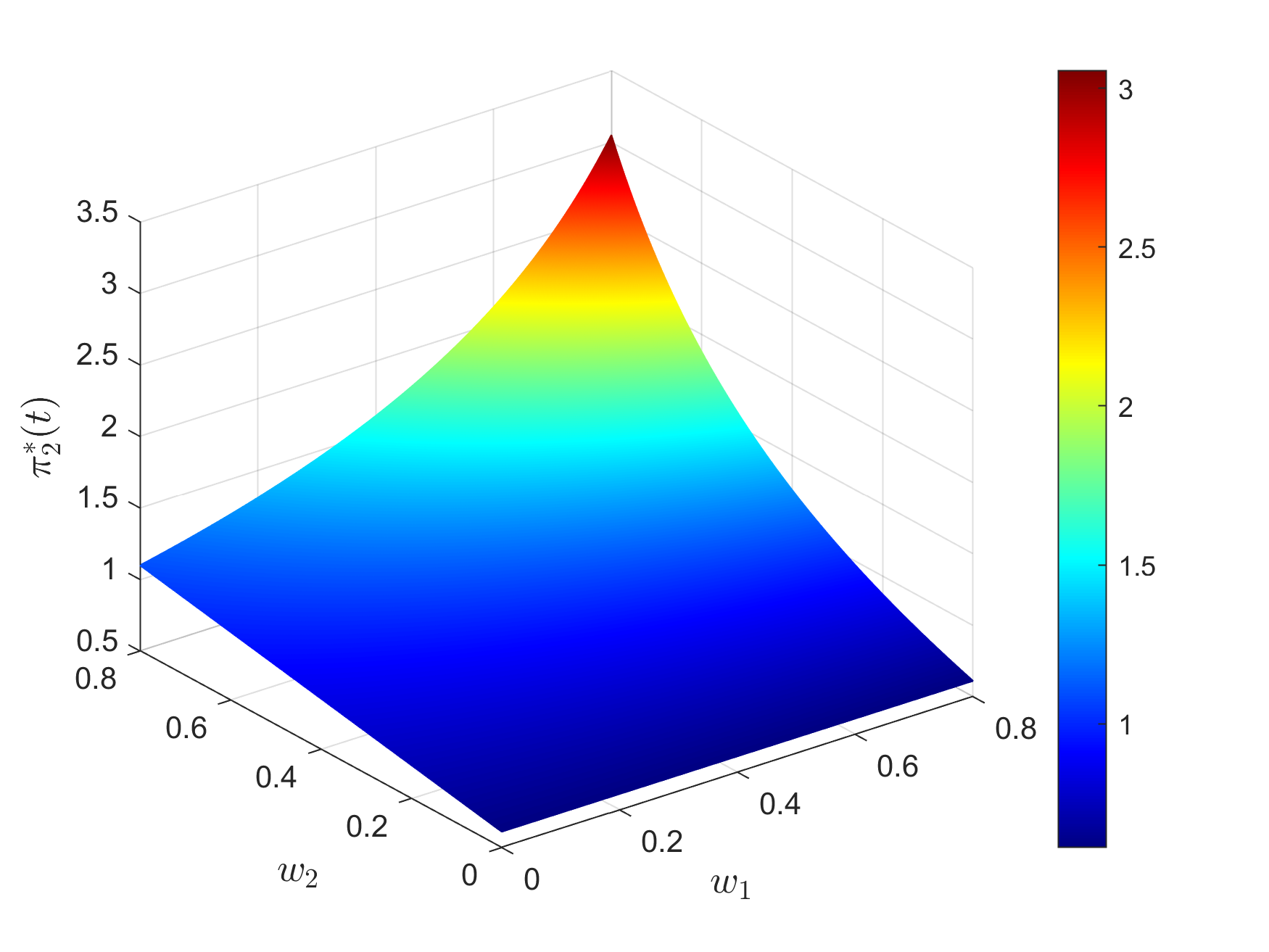}
			\caption{}
			\label{fig:Investment-Competion-pi2}
		\end{subfigure}		
		\caption{The impact of the competition weights $w_1$ and $w_2$	​
			on the equilibrium investment strategies of insurers}	
		\label{fig:Investment-Competion-pi}
	\end{figure}

	It can be seen from  \Cref{fig:Investment-Competion-pi},  when $w_1,w_2$ are close to 1, $\pi_i^*(t)$ will increase sharply. On
	the one hand, the equilibrium investment strategy $\pi_{i}^*(t)$ is proportional to the competition
	parameter $w_i$. This suggests that when an insurer focuses more on its competitors, it tends to increase its investment in risky assets in an effort to achieve higher returns, thereby aiming to gain a competitive advantage. On the other hand, the equilibrium investment strategy $\pi_{i}^*(t)$ is positively correlated with
	$w_{k}$. When the competition weight of  competitor  increases, the competitor will invest more in risk asset. In order to sustain their competitive advantage, insurance companies tend to increase their investments.  This strategic move is essential for enhancing financial stability, improving profitability, and ensuring long-term growth.

	\subsection{The case under excess-of-loss reinsurance}
	This section presents a numerical illustration to examine the impact of relevant parameters on the equilibrium reinsurance strategy within the reinsurance chain, as discussed in \cref{Excess-of-loss reinsurance-economy analysis}. In this section, unless otherwise stated, the relevant parameters are set as follows: $\gamma_{1}=0.8,\gamma_{R_{1}}=0.7,\gamma_{R_{2}}=0.6, \lambda=3,\lambda_{1}=4,  \mu_{1}=0.1,  b_{1}=0.2, r=0.07, \mu=0.2,\sigma=0.5, \delta_{1}=1$.
	
		\begin{figure}[H]
		\begin{subfigure}[b]{0.5\textwidth}
			\includegraphics[width=\textwidth]{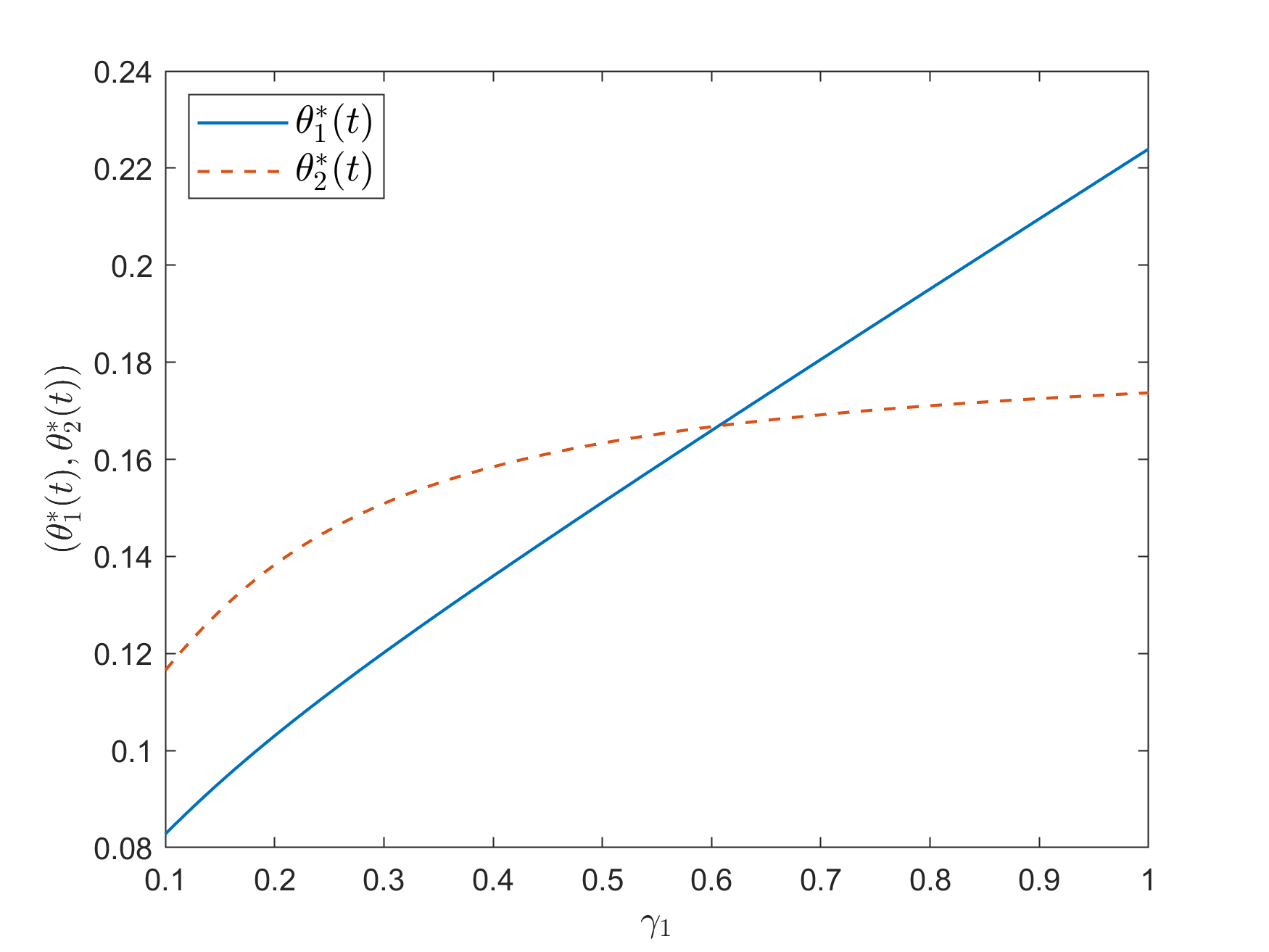}
			\caption{}
			\label{Excess_of_loss_reinsurance-gamma-theta}
		\end{subfigure}
		\begin{subfigure}[b]{0.5\textwidth}
			\includegraphics[width=\textwidth]{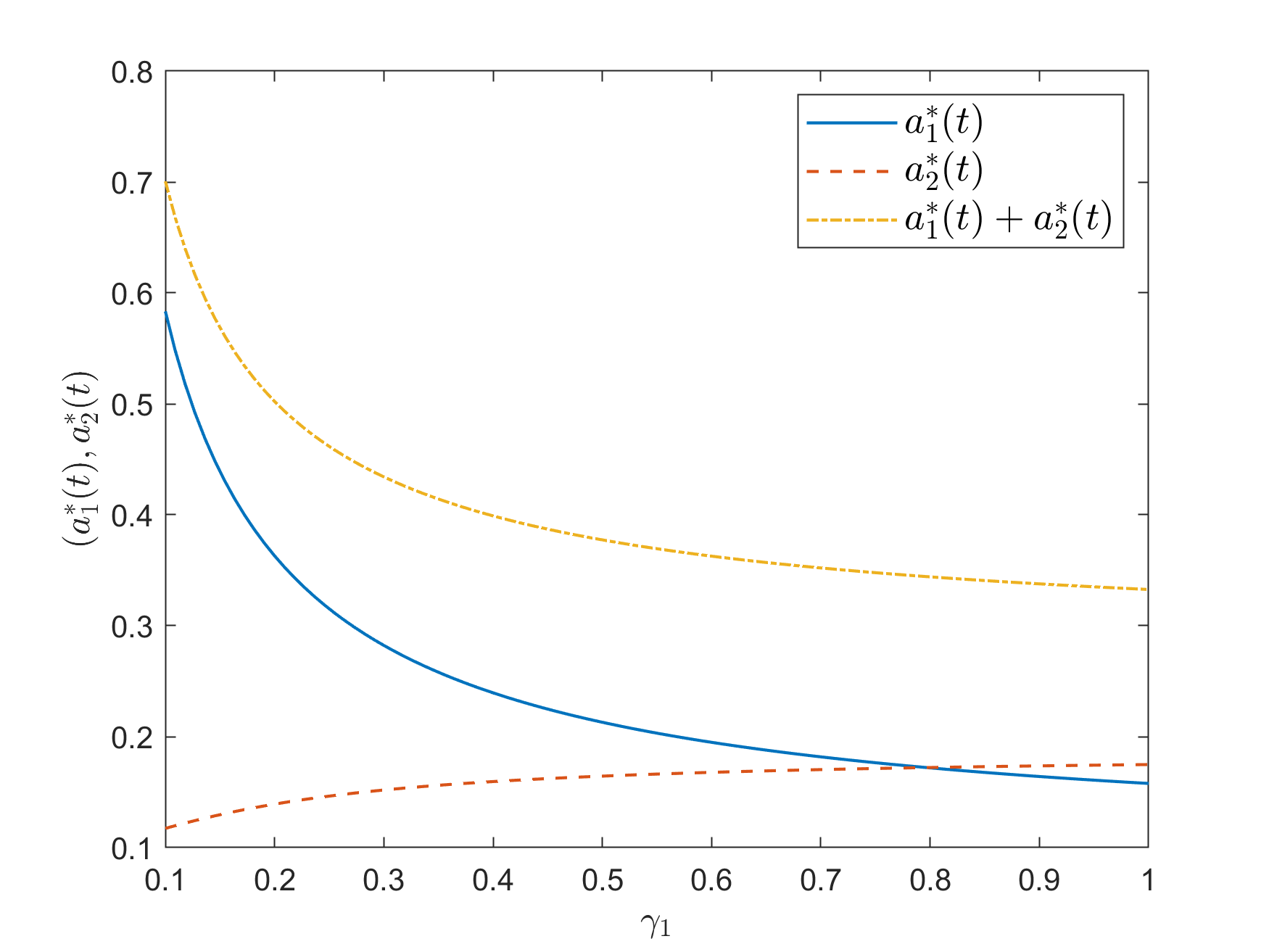}
			\caption{}
			\label{Excess_of_loss_reinsurance-gamma-a}
		\end{subfigure}		
		\caption{Impact of $\gamma_{1}$ on optimal reinsurance contract}	
		\label{fig:Excess_of_loss_reinsurance-gamma}
	\end{figure}

	 \Cref{fig:Excess_of_loss_reinsurance-gamma} illustrates the influences of $\gamma_{1}$ on the optimal reinsurance contract $\theta_1^*(t), \theta_2^*(t),a_1^*(t)$, $a_2^*(t)$, respectively.
	It can be seen from \Cref{fig:Excess_of_loss_reinsurance-gamma}, when $\gamma_{1}$ rises, $\theta_1^*(t), \theta_2^*(t),  a_1^*(t),a_2^*(t),  a_1^*(t)+a_2^*(t)$ are  increasing, increasing, decreasing, increasing, decreasing, respectively. The more risk-averse the insurer $1$ is, the less risk it retains, which means a smaller $a_1^*(t)$. In this case, reinsurer $R_1$ adopts a higher $\theta_1^*(t)$ to hedge risks from more reinsurance. Due to receiving reinsurance at higher premium rates, reinsurer $R_1$ tends to have a higher excess-of-loss retention level $a_2^*(t)$. As a result of the reduction in $a_1^*(t)+a_2^*(t)$, reinsurer $R_2$  will reduce $\theta_2^*(t)$ in order to attract more reinsurance business.

    \begin{figure}[h]
    	\begin{subfigure}[b]{0.5\textwidth}
    		\includegraphics[width=\textwidth]{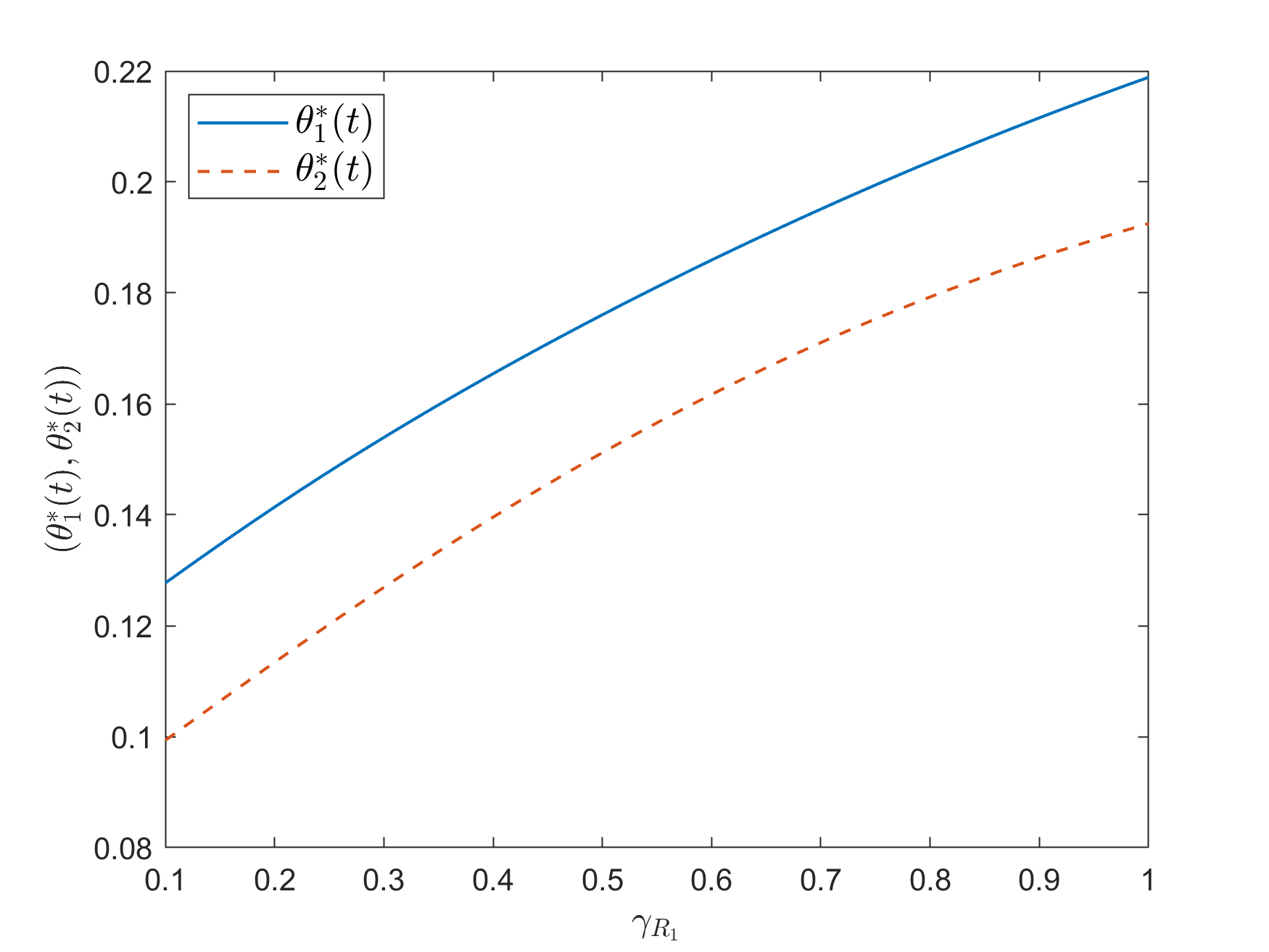}
    		\caption{}
    		\label{Excess_of_loss_reinsurance-gammaR1-theta}
    	\end{subfigure}
    	\begin{subfigure}[b]{0.5\textwidth}
    		\includegraphics[width=\textwidth]{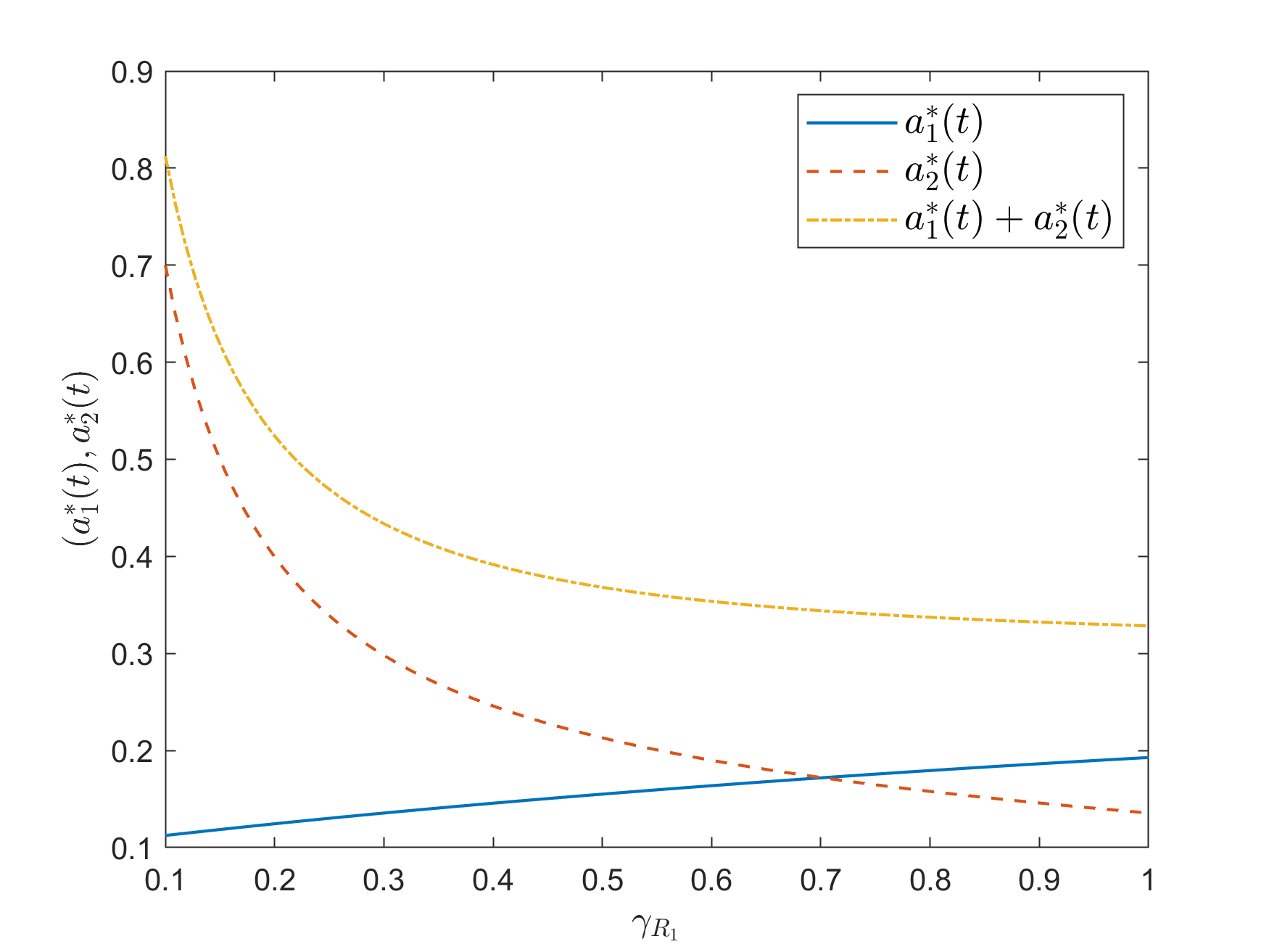}
    		\caption{}
    		\label{Excess_of_loss_reinsurance-gammaR1-a}
    	\end{subfigure}		
    	\caption{Impact of $\gamma_{R_{1}}$ on optimal reinsurance contract}	
    	\label{fig:Excess_of_loss_reinsurance-gammaR1}
    \end{figure}

    \Cref{fig:Excess_of_loss_reinsurance-gammaR1} depicts the effect of $\gamma_{R_{1}}$ on the optimal reinsurance
    contract $\theta_1^*(t), \theta_2^*(t),a_1^*(t)$, $a_2^*(t)$, respectively. As shown
    in \Cref{fig:Excess_of_loss_reinsurance-gammaR1}, when $\gamma_{{R_1}}$ rise, reinsurer $R_{1}$ will choose a higher safety loading $\theta_{1}^*(t)$ and a lower excess-of-loss retention level $a_{2}^*(t)$ to decrease risk. At this point, a higher  excess-of-loss retention level $a_1^*(t)$ will be chosen by insurer $1$. 
    Observing the decrease of  $a_1^*(t)+a_2^*(t)$, reinsurer $R_{2}$ has a higher $\theta_{2}^*(t)$. 
    
    \begin{figure}[h]
    	\begin{subfigure}[b]{0.5\textwidth}
    		\includegraphics[width=\textwidth]{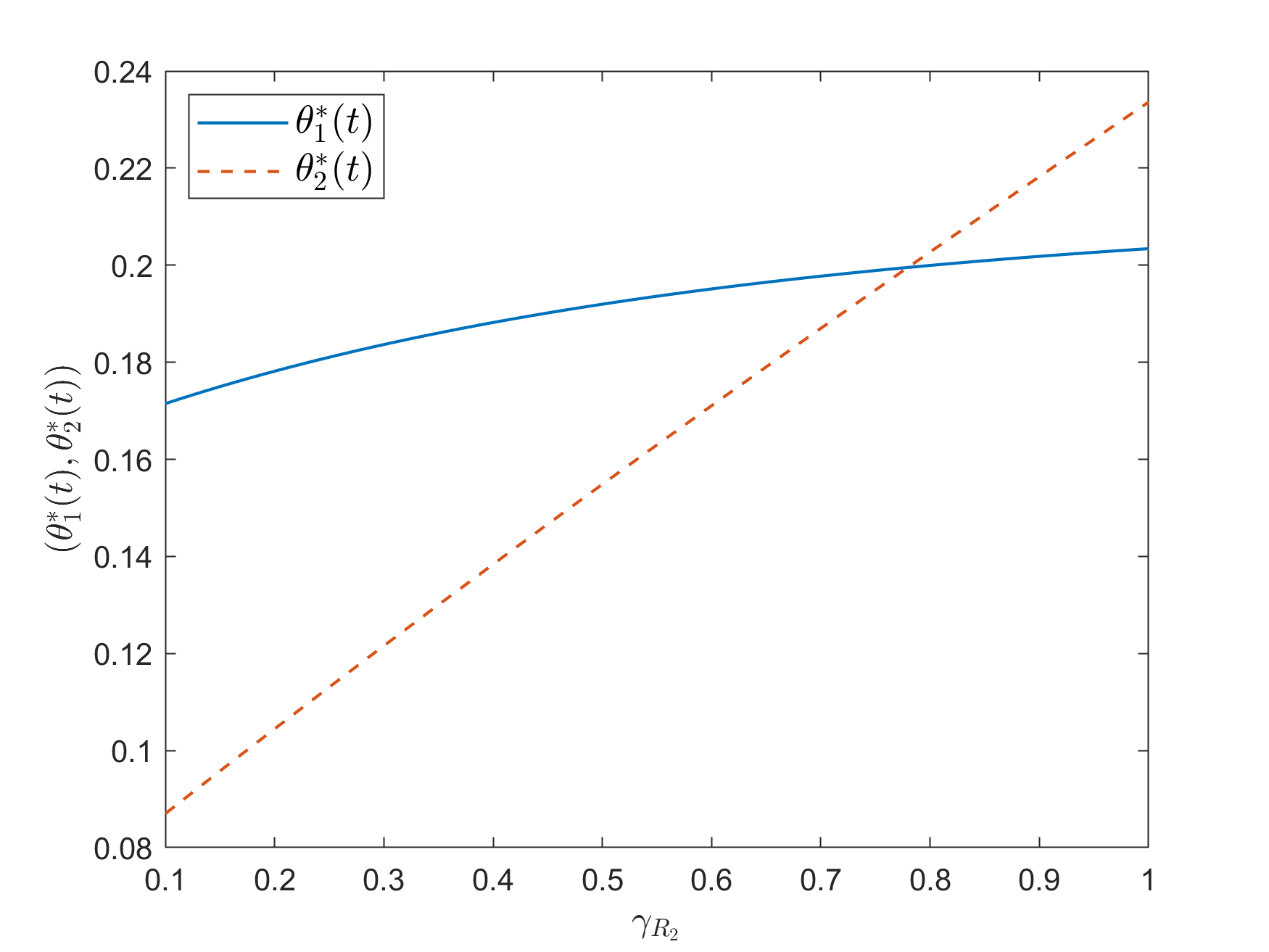}
    		\caption{}
    		\label{fig:Excess_of_loss_reinsurance-gammaR2-theta}
    	\end{subfigure}
    	\begin{subfigure}[b]{0.5\textwidth}
    		\includegraphics[width=\textwidth]{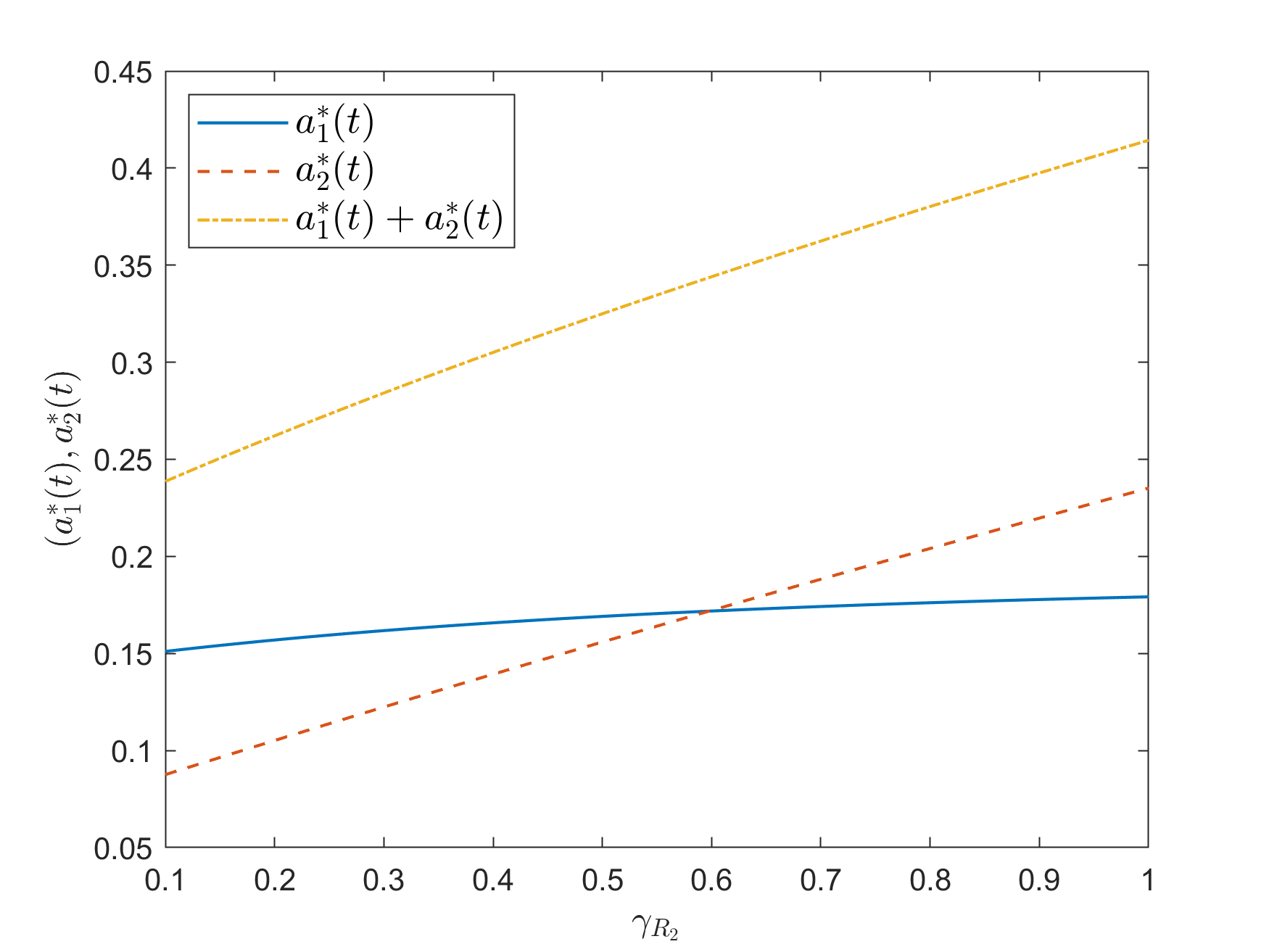}
    		\caption{}
    		\label{fig:Excess_of_loss_reinsurance-gammaR2-a}
    	\end{subfigure}		
    	\caption{Impact of $\gamma_{R_{2}}$ on optimal reinsurance contract}	
    	\label{fig:Excess_of_loss_reinsurance-gammaR2}
    \end{figure}
    
    \Cref{fig:Excess_of_loss_reinsurance-gammaR2} depicts the effect of $\gamma_{R_{2}}$ on the optimal reinsurance
    contract $\theta_1^*(t), \theta_2^*(t),a_1^*(t)$, $a_2^*(t)$, respectively. As shown
    in \Cref{fig:Excess_of_loss_reinsurance-gammaR2}(a), when $\gamma_{{R_2}}$ rise, reinsurer $R_{2}$ will choose a higher safety loading $\theta_{2}^*(t)$ to manage risk. 
    Observing a higher $\theta_{2}^*(t)$, reinsurer $R_{1}$ tends to increase excess-of-loss retention level $a_{2}^*(t)$. By retaining a larger share of the claims risk, the reinsurer $R_{1}$ will increase the safety loading $\theta_{1}^*(t)$ to cover the risk.
    At this point, a higher  excess-of-loss retention level $a_1^*(t)$ will be choose by insurer $1$.

	\section{Conclusion}
	  This paper studies the optimal investment and reinsurance problem involving $m$ insurers and 
	  $n$ reinsurers under a reinsurance chain structure. The competition among the insurance companies is described by a non-zero-sum game, while the interactions between agents at different levels of the reinsurance chain are modeled as the Stackelberg game. 
	  The claims processes of the insurers are modeled by dependent compound Poisson processes, and both the insurers and reinsurers adopt the expected value premium principle. All insurers and reinsurers invest in the same risky and risk-free assets. 
	  Under the mean-variance  criterion, the extended HJB equation is derived, and the corresponding equilibrium strategies are obtained.
	  
	  Based on the analysis of the results, 
	  the main conclusions of this paper are as follows: 
	  (i) When the expected premium principle is adopted, the optimal structure of the reinsurance chain satisfies:  $\gamma_{R_{1}}<\gamma_{R_{2}}<\cdots<\gamma_{R_{n}}$. In the optimal structure, the reinsurance chain is able to bear more risk, accept more reinsurance business, and provide the insurers with the most favorable safety loading strategy. Reasonably constructing the reinsurance chain is of significant importance for enhancing the market's risk-bearing capacity and maintaining the stability of the market.
	  (ii) When a new reinsurer joins the reinsurance chain as a terminal reinsurance company, it enhances the overall risk-bearing capacity and the stability of the reinsurance chain. Therefore, constructing a  multi-layer reinsurance structure can improve the market's ability to withstand risks.
	  (iii) As market competition intensifies, insurers exhibit a tendency to retain more risk in pursuit of higher returns, which in turn induces reinsurer $R_1$ to reduce safety loadings to attract reinsurance business. The propagation effect along the chain further synchronizes safety loading strategies across all tiers.
	  (iv) 
	  In a fiercely competitive insurance market, insurers tend to take on more risk to gain a competitive advantage, which leads to a reduction in reinsurance business.
	  To attract reinsurance business, reinsurers at all levels of the reinsurance chain lower their safety load strategies. The more intense the market competition, the greater the reduction in the safety load. Therefore, maintaining a stable insurance market is crucial for enhancing the market's ability to operate stably.

	  The results of the model reveal a pronounced strategic coupling among insurers: each insurer’s reinsurance decisions are shaped not only by its own risk constraints but also directly by the actions of the other $m-1$ competitors. Unlike \textcite{cao2023-two-reinsurance}, this study does not compare reinsurance chains with reinsurance trees; instead, it extends the dimensionality of the model to $m$ insurers and examines how this expansion influences the reinsurance chain. The findings show that such interdependencies and competitive behaviors propagate upward through the reinsurance chain, affecting contracts at every level and remedying the limitations of traditional isolated-decision models in describing competitive market dynamics. Compared with \textcite{wang2024optimal}, this study advances beyond single-layer Stackelberg games by analyzing how insurer competition cascades through subsequent layers of the reinsurance chain. This strategic coupling effect, absent in single-insurer models, underscores the enhanced applicability and generality of the proposed framework.

	 There are several important issues of reinsurance chain that remain to
	 	be investigated: (i) The existence and uniqueness of solutions for the reinsurance chain using excess-of-loss reinsurance in the case of $n$ reinsurers remains to be discussed.
	 	(ii) In this paper, one reinsurance chain is considered, which can be extended to multiple
	 	 reinsurance chains or a reinsurance network.
	 	(iii) Insurers can choose only reinsurer $R_1$. In reality, insurers are often faced with different reinsurers. The situation where there are multiple reinsurance chains can be considered.
	 	Addressing these issues will
	 	contribute to a more comprehensive understanding of the functioning of the insurance market
	 	and implementation of reinsurance contracts.

\begin{appendices}

		\section{Technical proofs}
		
		\label{Appendix-A}
		
		\setcounter{equation}{0}
		\setcounter{subsection}{0}
		\renewcommand{\theequation}{A.\arabic{equation}}
		\renewcommand{\thesubsection}{A.\arabic{subsection}}
		\vspace{0.2cm}

		\subsection{Proof of \cref{Theorem-Proportional reinsurance-equilibrium strategy}} 
		\label{proof-Proportional reinsurance-equilibrium strategy}
		\begin{proof}
				The extended HJB equation for the insurer $i$ is as follows
		\begin{equation}
			\label{insurer-extended HJB equation}
			\left\{
			\begin{aligned}
				&
				\sup_{u_{i}\in\mathcal{A}_{i}}\mathcal{L}^{u_{i}, u_{i-}}V^{i}(t, x_{i})-\frac{\gamma_{i}}{2}\mathcal{L}^{u_{i}, u_{i-}}g^{i}(t, x_{i})^2+\gamma_{i}
				g^{i}(t, x_{i})\mathcal{L}^{u_{i}, u_{i-}}g^{i}(t, x_{i})=0,
				\\
				&
				V^{i}(T, x_{i})=x_{i},\quad  
				g^{i}(T, x_{i})=x_{i},\quad  
				\mathcal{L}^{u^*_{i}, u^*_{i-}}g^{i}(t, x_{i})=0.  
			\end{aligned}
			\right.
		\end{equation}
		
		Assume $V^{i}(t, x_{i})$ and $g^{i}(t, x_{i})$ take the following forms.
		\begin{equation}
			\label{insurer-value function-assumption}
			\begin{aligned}
				V^{i}(t, x_{i})=\mathrm{e}^{r(T-t)}x_{i}+D_{i}(t),\quad
				g^{i}(t, x_{i})=\mathrm{e}^{r(T-t)}x_{i}+d_{i}(t).
			\end{aligned}
		\end{equation}

		 Substituting (\ref{insurer-value function-assumption}) into (\ref{insurer-extended HJB equation}) yields
			\begin{align*}
				\frac{\partial D_{i}}{\partial t} &+\mathrm{e}^{r(T-t)}\Biggl[(\mu-r)\left(\pi_{i}(t)
				-
				\omega_{m,i}\sum_{k\neq i}^{m} \pi_{k}(t)\right)
				\\
				&
				+
				\left(\eta_{i}-q^{1}_{i}(t)\theta^{1}_{i}(t)\right)o_{i}
				-
				\omega_{m,i}\sum_{k\neq i}^{m} \left(\eta_{k}-q^1_k(t)\theta^{1}_{k}(t)\right)o_k
				\Biggr]
				\\
				&-\frac{\gamma_{i}}{2}\mathrm{e}^{2r(T-t)}
				\Biggl[
				\left(1-q^{1}_{i}(t)\right)^2\sigma_{i}^2
				-
				2\omega_{m,i} \sum_{k\neq i}^{m} \rho_{ik}\left(1-q^{1}_i(t)\right)\left(1-q^{1}_k(t)\right)\sigma_i\sigma_k
				\\
				&
				+
				2 \omega_{m,i}^2 \sum_{k\neq i, p\neq i,k<p}^{n} \rho_{kp}\left(1-q^{1}_k(t)\right)\left(1-q^{1}_p(t)\right)\sigma_k\sigma_p
				+
				\omega_{m,i}^2 \sum_{k\neq i}^{m} \left(1-q^{1}_k(t)\right)^2\sigma_{k}^2\Biggr]
				\\
				&
				-
				\frac{\gamma_{i}}{2}\mathrm{e}^{2r(T-t)}
				\left[\left(\pi_{i}(t)-\omega_{m,i}\sum_{k\neq i}^{m}\pi_k(t)\right)^2\sigma^2\right]=0.
			\end{align*}
		
		Taking the first-order condition with respect to $\pi_{i}(t)$ gives
		\begin{equation*}
			\begin{aligned}
				\mathrm{e}^{r(T-t)}(\mu-r)-\gamma_{i}\mathrm{e}^{2r(T-t)}\left(\pi_{i}(t)-\omega_{m,i}\sum_{k\neq i}^{m}\pi_k(t)\right)\sigma^2=0,
			\end{aligned}
		\end{equation*}
		
		which simplifies to
		\begin{equation}
			\label{insurer-solution-investment}
			\pi_{i}(t)=\frac{\omega_{m,i}}{1+\omega_{m,i}}\sum_{v=1}^{m}\pi_v(t)+\frac{1}{1+\omega_{m,i}}\cdot\frac{\mu-r}{\gamma_{i}\mathrm{e}^{r(T-t)}\sigma^2}.
		\end{equation}

		Summing  $(\ref{insurer-solution-investment})$ over all insurers $i=1,\cdots, m$ yields
		\begin{equation}
        \label{insurer-solution-investment-1}
			\sum_{v=1}^{m}\pi_v(t)=
			\frac{\sum_{v=1}^{m}\frac{1}{(1+\omega_{m,v})\gamma_{v}}}{1-\sum_{v=1}^{m}\frac{\omega_{m,v}}{1+\omega_{m,v}}}\cdot
			\frac{\mu-r}{\mathrm{e}^{r(T-t)}\sigma^2}. 
		\end{equation}
		
		Substituting   $(\ref{insurer-solution-investment-1})$ into $(\ref{insurer-solution-investment})$ gives the optimal investment strategy
		
		\begin{equation}
			\pi_{i}^*(t)=\left[
			\frac{
				\frac{\omega_{m,i}}{1+\omega_{m,i}}\sum_{v=1}^{m}
				\frac{1}{(1+\omega_{m,v})\gamma_{v}}
				}
			{1-\sum_{v=1}^{m}\frac{\omega_{m,v}}{1+\omega_{m,v}}}
			+
			\frac{1}{(1+\omega_{m,i})\gamma_{i}}
			\right]
			\cdot
			\frac{\mu-r}{\mathrm{e}^{r(T-t)}\sigma^2}.
		\end{equation}
		
		Next, the first-order condition with respect to $q^{1}_{i}(t)$ leads to
		\begin{equation}
			\label{insurer-solution-reinsurance}
			\begin{aligned}
				-\theta^{1}_{i}(t)o_i
				+
				\mathrm{e}^{r(T-t)}\gamma_i\left[\left(1-q^{1}_i(t)\right)\sigma_i^2-\frac{w_i}{m-1}\sum_{k\neq i}^{m}\rho_{ik}\left(1-q^{1}_k(t)\right)\sigma_i\sigma_k\right]&=0,
			\end{aligned}
		\end{equation}
		
		Simplifying  (\ref{insurer-solution-reinsurance}) yields

		\begin{equation}
			\label{insurer-solution-reinsurance-moment}
			\begin{aligned}
				\left(1-q^{1}_i(t)\right)
				=
				\Lambda^{1}_{i}(t)\theta^{1}_{i}(t)
				+ 
				\Phi^{1}_{i}
				\sum_{k=1}^{m} \left(1-q^{1}_k(t)\right)\mu_k,
			\end{aligned}
		\end{equation}
	where
	\begin{equation*}
		\left\{
		\begin{aligned}
			&\Lambda^{1}_{i}(t)=\frac{1}{\left(1+\omega_{m,i} \frac{\lambda \mu_{i}^2}{\sigma_{i}^2}\right)} 
			\cdot
			\frac{ o_i}{\sigma_{i}^2 \gamma_i \mathrm{e}^{r(T-t)}},
			\\
			&\Phi^{1}_{i}=\frac{1}{\left(1+\omega_{m,i} \frac{\lambda \mu_{i}^2}{\sigma_{i}^2}\right)}
			\cdot
			\omega_{m,i}
			\frac{\lambda \mu_{i}}{\sigma_{i}^2}.
		\end{aligned}
		\right.
	\end{equation*}
	
		Multiply both sides of the $(\ref{insurer-solution-reinsurance-moment})$ by $\mu_{i}$ and summing  over $i=1,\cdots, m$ yields
		\begin{equation*}
			\begin{aligned}
				\sum_{v=1}^{m}\left(1-q^{1}_v(t)\right) \mu_{v}
				=
				\sum_{v=1}^{m} 
				\mu_{v} \Lambda^{1}_{v}(t) \theta^{1}_{v}(t)
				+ 
				\sum_{v=1}^{m} 
				\mu_{v} \Phi^{1}_{v}
				\sum_{k=1}^{m} \left(1-q^{1}_k(t)\right)\mu_k, 
			\end{aligned}
		\end{equation*}
		Solving for $\sum_{v=1}^{m}\left(1-q^{1}_v(t)\right) \mu_{v}$ yields
		\begin{equation*}
			\begin{aligned}
				\sum_{v=1}^{m}\left(1-q^{1}_v(t)\right) \mu_{v}
				=
				\frac{\sum_{v=1}^{m} 
					\mu_{v} \Lambda^{1}_{v}(t) \theta^{1}_{v}(t)}
				{1-\sum_{v=1}^{m}
					\mu_{v} \Phi^{1}_{v}}
			.
			\end{aligned}
		\end{equation*}

		The solutions to (\ref{insurer-solution-reinsurance}) are given by
        
        \begin{equation}
        	\label{insurer-solution-reinsurance-simplification}
        	\begin{aligned}
        		q^{1}_{i}(t)^*
        		&=1-G^{1}_{i}(t)-\sum_{v=1}^{m}H^{1}_{iv}(t)\theta^{1}_{v}(t),
        	\end{aligned}
        \end{equation}	
     or in vector form
        \begin{equation*}
        	\begin{aligned}
        		\vec{q}_{1}(t)^*
        		&=\mathbf{1}-\vec{G}_{1}(t)-\mathbf{H}_{1}(t)\vec{\theta}_{1}(t),
        	\end{aligned}
        \end{equation*}
		where
			\begin{align*}
				&G^{1}_{i}(t)=0, 
				\\
				&H^{1}_{ii}(t)=\Lambda^{1}_{i}(t)+ \Phi^{1}_{i} 
				\frac{ 
					\mu_{i} \Lambda^{1}_{i}(t)}
				{1-\sum_{v=1}^{m} 
					\mu_{v} \Phi^{1}_{v}}
				, 
				\\
				&H^{1}_{ik}(t)=
				\Phi^{1}_{i} 
				\frac{ 
					\mu_{k} \Lambda^{1}_{k}(t)}
				{1-\sum_{v=1}^{m} 
					\mu_{v} \Phi^{1}_{v}}. 
			\end{align*}

		 The extended HJB equation of reinsurer $R_1$ is given by  
		\begin{equation}
			\left\{
			\begin{aligned}
				&
				\sup_{u_{R_1}\in\mathcal{A}_{R_1}}\mathcal{L}^{u_{R_1}, u_{R_1-}}V^{R_1}(t, y_1)
				-\frac{\gamma_{R_1}}{2}\mathcal{L}^{u_{R_1}, u_{R_1-}}g^{R_1}(t, y_1)^2
				\\
				&\qquad\qquad
				+\gamma_{{R_1}}
				g^{R_1}(t, y_{1})\mathcal{L}^{u_{R_1}, u_{R_1-}}g^{R_1}(t, y_{1})=0,
				\\
				&
				V^{R_1}(T, y_1)=y_1,\quad
				g^{R_1}(T, y_1)=y_1,\quad
				\mathcal{L}^{u^*_{R_1}, u^*_{R_1-}}g^{R_1}(t, y_1)=0. 
			\end{aligned}
			\right.
		\end{equation}
		
		Assume that  $V^{R_1}(t, y_1)=\mathrm{e}^{r(T-t)}y_1+D_{R_1}(t), g^{R_1}(t, y_1)=\mathrm{e}^{r(T-t)}y_1+d_{R_1}(t)$. Substituting these expressions into the extended HJB equation of reinsurer $R_1$ yields
		\begin{equation*}
			\begin{aligned}
				\frac{\partial D_{R_1}}{\partial t} &+\left[\left(\mu-r\right)\pi_{R_1}(t)+\left(\sum_{k=1}^{m}\theta^{1}_{k}(t)q^{1}_k(t)-\theta^{2}_{k}(t)q^{2}_k(t)\right)o_k
				\right]
				\\
				&-\frac{\gamma_{R_1}}{2}\mathrm{e}^{r(T-t)}
				\left[\sum_{k=1}^{m}\left(q^{1}_k(t)-q^{2}_k(t)\right)^2\sigma_k^2
				+2\sum_{i< k}^{m}\rho_{ik}\left(q^{1}_i(t)-q^{2}_i(t)\right)\left(q^{1}_k(t)-q^{2}_k(t)\right)\sigma_i\sigma_k\right]
				\\
				&-\frac{\gamma_{R_1}}{2}\mathrm{e}^{r(T-t)}
				\left[\pi_{R_1}^2(t)\sigma^2\right]=0,
			\end{aligned}
		\end{equation*}
		
		Taking the first-order condition with respect to $\pi_{R_{1}}(t)$ gives
		\begin{equation}
			\pi_{R_1}^*(t)=\frac{\mu-r}{\gamma_{R_1}\mathrm{e}^{r(T-t)}\sigma^2}.
		\end{equation}
		
		The first-order condition for $q^{2}_{i}(t)$ is
		\begin{equation}
			\label{reinsurer-solution-reinsurance}
			\begin{aligned}
				-\theta^{2}_{i}(t)o_i+\mathrm{e}^{r(T-t)}\gamma_{{R_1}}\left[\left(q^{1}_i(t)-q^{2}_i(t)\right)\sigma_i^2
				+\sum_{k \neq i}^{m}\rho_{ik}\left(q^{1}_k(t)-q^{2}_k(t)\right)\sigma_i\sigma_k\right]&=0. 
			\end{aligned}
		\end{equation}
		
			Simplifying (\ref{reinsurer-solution-reinsurance}) leads to
			\begin{equation}
				\label{reinsurer-solution-reinsurance-moment}
				\begin{aligned}
					\left(q^{1}_i(t)-q^{2}_i(t)\right)
					=
					\Lambda^{2}_{i}(t)\theta^{2}_{i}(t)
					+
					\Phi^{2}_{i} 
					\sum_{k =1 }^{m}\left(q^{1}_k(t)-q^{2}_k(t)\right)\mu_{k},
				\end{aligned}
			\end{equation}
			where
			\begin{equation*}
				\left\{
				\begin{aligned}
				&\Lambda^{2}_{i}(t)
				=
				\frac{1}{\left(1-\frac{\lambda \mu_{i}^2}{\sigma_{i}^2}\right)}
				\frac{o_i}{\sigma_{i}^2\gamma_{{R_1}}\mathrm{e}^{r(T-t)}},
				\\
				&\Phi^{2}_{i}
				=
				-
				\frac{1}{\left(1-\frac{\lambda \mu_{i}^2}{\sigma_{i}^2}\right)}
				\frac{\lambda \mu_{i}}{\sigma_{i}^2}.
				\end{aligned}
				\right.
			\end{equation*}
		
		Multiplying both sides of the $(\ref{reinsurer-solution-reinsurance-moment})$ by $\mu_{i}$ and summing  over $i=1,\cdots, m$ yields
		\begin{equation*}
			\begin{aligned}
				\sum_{v=1}^{m}\left(q^{1}_{v}(t)-q^{2}_v(t)\right) \mu_{v}
				=
				\frac{\sum_{v=1}^{m} 
					\mu_{v} \Lambda^{2}_{v}(t) \theta^{2}_{v}(t)}
				{1-\sum_{v=1}^{m}
					\mu_{v} \Phi^{2}_{v}}
				.
			\end{aligned}
		\end{equation*}
		
	    The solutions to (\ref{reinsurer-solution-reinsurance-moment}) can be written as
	    \begin{equation}
	    	\label{Second Reinsurance}
	    	\begin{aligned}
	    		\vec{q}_{2}(t)
	    		=\vec{q}_{1}(t)
	    		-
	    		\mathbf{\Psi}_{2}(t)
	    			\vec{\theta}_{2}(t),
	    	\end{aligned}
	    \end{equation}
	    where
	    \begin{equation*}
	    	\left\{
	    	\begin{aligned}
	    		&\Psi^{2}_{ii}(t)=\Lambda^{2}_{i}(t)
	    		+
	    		\frac{ 
	    			\mu_{i} \Phi^{2}_{i} \Lambda^{2}_{i}(t) }
	    		{1-\sum_{v=1}^{m}
	    			\mu_{v} \Phi^{2}_{v}}
	    		, 
	    		\\
	    		&\Psi^{2}_{ik}(t)=
	    		\Phi^{2}_{i} 
	    		\frac{ 
	    			\mu_{k} \Lambda^{2}_{k}(t)}
	    		{1-\sum_{v=1}^{m} 
	    			\mu_{v} \Phi^{2}_{v}}. 
	    	\end{aligned}
	    	\right. 
	    \end{equation*}

		The first-order condition for $\theta^{1}_{i}(t)$ yields
		\begin{equation}
        \label{Second Reinsurance-1}
			\begin{aligned}
				\begin{aligned}
					&\left[
					1-G^{1}_{i}(t)-2H^{1}_{ii}(t)\theta^{1}_{i}(t)
					-
					\sum_{k\neq i}^{m}
					H^{1}_{ik}(t)\theta^{1}_{k}(t)
					\right]o_i
					+\left[-\sum_{k\neq i}^{m}H^{1}_{ki}(t)\theta^{1}_{k}(t)\right]o_k
					\\
					&+\mathrm{e}^{r(T-t)}\gamma_{R_1}
					\Bigg[
					\left(q^{1}_{i}(t)-q^{2}_{i}(t)\right)
					\left(H^{1}_{ii}(t)\sigma_i^2
					+
					\sum_{k\neq i}^{m}
					\rho_{ki} H^{1}_{ki}(t)\sigma_i \sigma_k\right)
					\\
					&
					+
					\sum_{k \neq i}^{m}
					\left(q^{1}_{k}(t)-q^{2}_{k}(t)\right)\left(H^{1}_{ki}(t)\sigma_k^2+
					\sum_{v\neq k}^{m}\rho_{kv}H^{1}_{vi}(t)
                    \sigma_k\sigma_v\right)\Bigg]=0,
				\end{aligned}
			\end{aligned}
		\end{equation}

		Substituting (\ref{Second Reinsurance}) into (\ref{Second Reinsurance-1}) gives 
			\begin{align*}
				&
				2H^{1}_{ii}(t)o_i\theta^{1}_{i}(t)
				+
				\sum_{k\neq i}^{m}
				\left[
				H^{1}_{ik}(t)o_{i}+H^{1}_{ki}(t)o_{k}\right]
				\theta^{1}_{k}(t)
				\\
				=&
				(1-G^{1}_{i}(t))o_i
				\\
				&
				+
				\mathrm{e}^{r(T-t)}\gamma_{R_1}
				\Bigg\{
				\Bigg[ \Psi^{2}_{ii}(t)
				\left(H^{1}_{ii}(t)\sigma_i^2
				+
				\sum_{k\neq i}^{m}
				\rho_{ki} H^{1}_{ki}(t)\sigma_i \sigma_k\right)
				\\
				&
				+ 
				 \sum_{k\neq i}^{m}
				 \Psi^{2}_{ki}(t)
				 	\left(H^{1}_{ki}(t)\sigma_k^2+
				 	\sum_{v\neq k}^{m}\rho_{kv}H_{vi}(t)
				 	\sigma_k\sigma_v\right)
				\Bigg]
				\theta^{2}_{i}(t)
				\\
				&
				+
				\sum_{k \neq i}^{m} 				
				\Bigg[
				\Psi^{2}_{ik}(t)
				\left(H^{1}_{ii}(t)\sigma_i^2
				+
				\sum_{k\neq i}^{m}
				\rho_{ki} H^{1}_{ki}(t)\sigma_i \sigma_k\right)
				\\
				&
				+
				\sum_{v \neq i}^{m} \Psi^{2}_{vk}
				\left(H^{1}_{vi}(t)\sigma_v^2+
				\sum_{l\neq v}^{m}\rho_{vl}H^{1}_{li}(t)
				\sigma_l\sigma_v\right)
				\Bigg]
				\theta^{2}_{k}(t)
				\Bigg\},
			\end{align*}
		which can be written in matrix form as
		\begin{equation*}
			\mathbf{A}_{1}\vec{\theta}_{1}(t)=\vec{C}_{1}+\mathbf{B}_{1}\vec{\theta}_{2}(t),
		\end{equation*}
		where

			\begin{align*}
				&A^{1}_{ii}=2H^{1}_{ii}(t)o_i, 
				\\
				&A^{1}_{ik}=
				\left[
				H^{1}_{ik}(t)o_{i}+H^{1}_{ki}(t)o_{k}\right], 
				\\
				&C^{1}_{i}=(1-G^{1}_{i}(t))o_i, 
				\\
				&B^{1}_{ii}=\mathrm{e}^{r(T-t)}\gamma_{R_1}
				\Bigg[ \Psi^{2}_{ii}(t)
				\left(H^{1}_{ii}(t)\sigma_i^2
				+
				\sum_{k\neq i}^{m}
				\rho_{ki} H^{1}_{ki}(t)\sigma_i \sigma_k\right)
				\\
				&
				\qquad
				+ 
				\sum_{k\neq i}^{m}
				\Psi^{2}_{ki}(t)
				\left(H^{1}_{ki}(t)\sigma_k^2+
				\sum_{v\neq k}^{m}\rho_{kv}H^{1}_{vi}(t)
				\sigma_k\sigma_v\right)
				\Bigg], 
				\\
				&
				B^{1}_{ik}
				=
				\sum_{k \neq i}^{m} 				
				\Bigg[
				\Psi^{2}_{ik}(t)
				\left(H^{1}_{ii}(t)\sigma_i^2
				+
				\sum_{k\neq i}^{m}
				\rho_{ki} H^{1}_{ki}(t)\sigma_i \sigma_k\right)
				\\
				&
				\qquad
				+
				\sum_{v \neq i}^{m} \Psi^{2}_{vk}
				\left(H^{1}_{vi}(t)\sigma_v^2+
				\sum_{l\neq v}^{m}\rho_{vl}H^{1}_{li}(t)
				\sigma_l\sigma_v\right)
				\Bigg]. 
			\end{align*}
		
		It is evident that $\mathbf{A}_{1}$ is a symmetric matrix. Assuming that $\mathbf{A}_{1}$ is invertible, the solution for  $\theta^{1}_{i}(t)$ is  given by
		\begin{equation}
			\label{Chain-first Reinsurance-theta}
			\begin{aligned}
				&
				\vec{\theta}_{1}(t)^*=\left(\mathbf{A}_{1}\right)^{-1}\vec{C}_{1}+\left(\mathbf{A}_{1}\right)^{-1}\mathbf{B}_{1}\vec{\theta}_{2}(t), 
			\end{aligned}
		\end{equation}
        
		Substituting \eqref{Chain-first Reinsurance-theta} into (\ref{Second Reinsurance}) yields
		\begin{equation*}
			\begin{aligned}
				\vec{q}_{2}(t)^*=\mathbf{1}-\vec{G}_{2}(t)-\mathbf{H}_{2}(t)\vec{\theta}_{2}(t), 
			\end{aligned}
		\end{equation*}
		where
		\begin{equation*}
			\left\{
			\begin{aligned}
				&\vec{G}_{2}(t)=\vec{G}_{1}(t)+\mathbf{H}_{1}(t)\left(\mathbf{A}_{1}\right)^{-1}
				\vec{C_{1}},
				\\
				&
				\mathbf{H}_{2}(t)=\mathbf{H}_{1}(t)\left(\mathbf{A}_{1}\right)^{-1}\mathbf{B}_{1}+\mathbf{\Psi_{2}(t)}.
			\end{aligned}
			\right.
		\end{equation*}
		
		Similarly, the optimal investment strategy for  reinsurer $R_j$ is 
		\begin{equation}
			\pi_{R_j}^*(t)=\frac{\mu-r}{\gamma_{R_j}\mathrm{e}^{r(T-t)}\sigma^2}.
		\end{equation}
		
		The first-order condition for $q^{j+1}_{i}(t)$ yields
		
		\begin{equation}
			\label{Reinsurance ratio-recursion formula}
			\begin{aligned}
				q^{j+1}_{i}(t)
				=q^{j}_i(t)
				-
				\Psi^{j+1}_{ii}(t)
				\theta^{j+1}_{i}(t)
				-
				\sum_{k \neq i}^{m} 
				\Psi^{j+1}_{ik}(t)
				\theta^{j+1}_{k}(t). 
			\end{aligned}
		\end{equation}
		where
		\begin{equation*}
			\left\{
			\begin{aligned}
				&\Psi^{j+1}_{ii}(t)=\Lambda^{j+1}_{i}(t)
				+
				\frac{ 
					\mu_{i} \Phi^{j+1}_{i} \Lambda^{j+1}_{i}(t) }
				{1-\sum_{v=1}^{m}
					\mu_{v} \Phi^{j+1}_{v}}
				, 
				\\
				&\Psi^{j+1}_{ik}(t)=
				\Phi^{j+1}_{i} 
				\frac{ 
					\mu_{k} \Lambda^{j+1}_{k}(t)}
				{1-\sum_{v=1}^{m} 
					\mu_{v} \Phi^{j+1}_{v}}. 
			\end{aligned}
			\right. 
		\end{equation*}
		
		The first-order condition for $\theta^{j}_{i}(t)$ yields 
	\begin{equation}
		\label{Chain-reinsurance R_j-theta}
		\vec{\theta}_{j}(t)^*=\left(\mathbf{A}_{j}\right)^{-1}\vec{C}_{j}
		+
		\left(\mathbf{A}_{j}\right)^{-1}\mathbf{B}_{j}\vec{\theta}_{j+1}(t),
	\end{equation}
	where
    	\begin{align*}
    		&A^{j}_{ii}=2H^{j}_{ii}(t)o_i, 
    		\\
    		&A^{j}_{ik}=
    		\left[
    		H^{j}_{ik}(t)o_{i}+H^{j}_{ki}(t)o_{k}\right], 
    		\\
    		&C^{j}_{i}=(1-G^{j}_{i}(t))o_i, 
    		\\
    		&B^{j}_{ii}=\mathrm{e}^{r(T-t)}\gamma_{R_j}
    		\Bigg[ \Psi^{j+1}_{ii}(t)
    		\left(H^{j}_{ii}(t)\sigma_i^2
            +
    		\sum_{k\neq i}^{m}
    		\rho_{ki} H^{j}_{ki}(t)\sigma_i \sigma_k\right)
    		\\
    		&
    		\qquad
    		+ 
    		\sum_{k\neq i}^{m}
    		\Psi^{j+1}_{ki}(t)
    		\left(H^{j}_{ki}(t)\sigma_k^2+
    		\sum_{v\neq k}^{m}\rho_{kv}H^{j}_{vi}(t)
    		\sigma_k\sigma_v\right)
    		\Bigg], 
    		\\
    		&
    		B^{j}_{ik}
    		=
    		\sum_{k \neq i}^{m} 				
    		\Bigg[
    		\Psi^{j+1}_{ik}(t)
    		\left(H^{j}_{ii}(t)\sigma_i^2
    		+
    		\sum_{k\neq i}^{m}
    		\rho_{ki} H^{j}_{ki}(t)\sigma_i \sigma_k\right)
    		\\
    		&
    		\qquad
    		+
    		\sum_{v \neq i}^{m} \Psi^{j+1}_{vk}
    		\left(H^{j}_{vi}(t)\sigma_v^2+
    		\sum_{l\neq v}^{m}\rho_{vl}H^{j}_{li}(t)
    		\sigma_l\sigma_v\right)
    		\Bigg]. 
    	\end{align*}
	  Substituting  $(\ref{Chain-reinsurance R_j-theta})$  into (\ref{Reinsurance ratio-recursion formula})	yields 
		\begin{equation*}
			\begin{aligned}
				\vec{q_{j+1}}(t)^*=\mathbf{1}-\vec{G}_{j+1}(t)-\mathbf{H}_{j+1}(t)\vec{\theta}_{j+1}(t), 
			\end{aligned}
		\end{equation*}
		where
		\begin{equation*}
			\begin{aligned}
				&\vec{G}_{j+1}(t)=\vec{G}_{j}(t)+\mathbf{H}_{j}(t)\left(\mathbf{A}_{j}\right)^{-1}
				\vec{C}_{j},
				\\
				&
				\mathbf{H}_{j+1}(t)=\mathbf{H}_{j}(t)\left(\mathbf{A}_{j}\right)^{-1}\mathbf{B}_{j}+\mathbf{\Psi_{j+1}(t)}.
			\end{aligned}
		\end{equation*}
		
		Similarly, the first-order condition of $\theta^{n}_{i}(t)$ for reinsurer $R_{n}$ yields
				\begin{align*}
					&\left[
					1-G^{n}_{i}(t)-2H^{n}_{ii}(t)\theta^{n}_{i}(t)
					-
					\sum_{k\neq i}^{m}
					H^{n}_{ik}(t)\theta^{n}_{k}(t)
					\right]o_i
					+\left[-\sum_{k\neq i}^{m}H^{n}_{ki}(t)\theta^{n}_{k}(t)\right]o_k
					\\
					&+\mathrm{e}^{r(T-t)}\gamma_{R_n}
					\Bigg[
					q^{n}_{i}(t)
					\left(H^{n}_{ii}(t)\sigma_i^2
					+
					\sum_{k\neq i}^{m}
					\rho_{ki} H^{n}_{ki}(t)\sigma_i \sigma_k\right)
					\\
					&
					+
					\sum_{k \neq i}^{m}
					q^{n}_{k}(t)
					\left(H^{n}_{ki}(t)\sigma_k^2+
					\sum_{v\neq k}^{m}\rho_{kv}H^{n}_{vi}(t)
					\sigma_k\sigma_v\right)\Bigg]=0,
				\end{align*}
		
		Substituting $\vec{q_{n}}(t)^*$ into above equations yields
		\begin{equation*}
			\begin{aligned}
				\vec{\theta}_{n}(t)^*=\mathbf{A}_{n}^{-1}\vec{C}_{n},
			\end{aligned}
		\end{equation*}
		where
			\begin{align*}
				&A^{n}_{ii}=2H^{n}_{ii}(t)o_i+
				\mathrm{e}^{r(T-t)}\gamma_{R_n}
				\Bigg[ H^{n}_{ii}(t)
				\left(H^{n}_{ii}(t)\sigma_i^2
				+
				\sum_{k\neq i}^{m}
				\rho_{ki} H^{n}_{ki}(t)\sigma_i \sigma_k\right)
				\\
				&
				\qquad
				+ 
				\sum_{k\neq i}^{m}
				H^{n}_{ki}(t)
				\left(H^{n}_{ki}(t)\sigma_k^2+
				\sum_{v\neq k}^{m}\rho_{kv}H^{n}_{vi}(t)
				\sigma_k\sigma_v\right)
				\Bigg]
				, 
				\\
				&A^{n}_{ik}=
				\left[
				H^{n}_{ik}(t)o_{i}+H^{n}_{ki}(t)o_{k}\right]
				+
				\mathrm{e}^{r(T-t)}\gamma_{R_n}
				\sum_{k \neq i}^{m} 				
				\Bigg[
				H^{n}_{ik}(t)
				\left(H^{n}_{ii}(t)\sigma_i^2
				+
				\sum_{k\neq i}^{m}
				\rho_{ki} H^{n}_{ki}(t)\sigma_i \sigma_k\right)
				\\
				&
				\qquad
				+
				\sum_{v \neq i}^{m} H^{n}_{vk}
				\left(H^{n}_{vi}(t)\sigma_v^2+
				\sum_{l\neq v}^{m}\rho_{vl}H^{n}_{li}(t)
				\sigma_l\sigma_v\right)
				\Bigg]
				, 
				\\
				&C^{n}_{i}=(1-G^{n}_{i}(t))o_i
				+
				\mathrm{e}^{r(T-t)}\gamma_{R_n}
				\Bigg[
				  \left(1-G^{n}_{i}(t)\right)\left(H^{n}_{ii}(t)\sigma_i^2
				  +
				  \sum_{k\neq i}^{m}
				  \rho_{ki} H^{n}_{ki}(t)\sigma_i \sigma_k\right)
				  \\
				  &
				  \qquad
				  +
				  \sum_{k\neq i}^{m}
				  \left(1-G^{n}_{k}(t)\right)
				  \left(H^{n}_{ki}(t)\sigma_k^2+
				  \sum_{v\neq k}^{m}\rho_{kv}H^{n}_{vi}(t)
				  \sigma_k\sigma_v\right)
				\Bigg]
				.
			\end{align*}

	Finally, we solve for the value functions. The  insurer $i$ satisfies the extended HJB equation, and with the terminal condition $D_{i}(T)=0$, we obtain
			\begin{align*}
				&D_{i}(t)
				\\
				=
				&-\int^{T}_{t}
				\mathrm{e}^{r(T-s)}
				\Bigg[(\mu-r)\left(\pi_{i}(s)
				-
				\omega_{m,i}\sum_{k\neq i}^{m} \pi_{k}(s)\right)
				\\
				&
				+
				\left(\eta_{i}-q^{1}_{i}(s)\theta^{1}_{i}(s)\right)o_{i}
				-
				\omega_{m,i}\sum_{k\neq i}^{m} \left(\eta_{k}-q^1_k(s)\theta^{1}_{k}(s)\right)o_k
				\Bigg]\mathrm{d}s
				\\
				&-\frac{\gamma_{i}}{2}
				\int^{T}_{t} \mathrm{e}^{2r(T-s)}
				\Biggl[
				\left(1-q^{1}_{i}(s)\right)^2\sigma_{i}^2
				-
				2\omega_{m,i} \sum_{k\neq i}^{m} \rho_{ik}\left(1-q^{1}_i(s)\right)\left(1-q^{1}_k(s)\right)\sigma_i\sigma_k
				\\
				&
				+
				2\omega_{m,i}\sum_{k\neq i, p\neq i,k<p}^{m} \rho_{kp}\left(1-q^{1}_k(s)\right)\left(1-q^{1}_p(s)\right)\sigma_k\sigma_p
				+
				\frac{w_i^2}{m-1} \sum_{k\neq i}^{m} \left(1-q^{1}_k(s)\right)^2\sigma_{k}^2\Biggr]
				\mathrm{d}s
				\\
				&-\frac{\gamma_{i}}{2}
				\int^{T}_{t} \mathrm{e}^{2r(T-s)}
				\left[\left(\pi_{i}(s)-\sum_{k\neq i}^{m}\omega_{m,i}\pi_k(s)\right)^2\sigma^2\right]\mathrm{d}s.
			\end{align*}
		
		Similarly, $D_{R_j}(t)$ is given by 
			\begin{align*}
				&D_{R_{j}}(t)
				\\
				=&
				-\int^{T}_{t}
				\mathrm{e}^{r(T-s)}
				\left[\left(\mu-r\right)\pi_{R_j}(s)
				+
				\left(\sum_{k=1}^{m}\theta^{j}_{k}(s)q^{j}_k(s)-\theta^{j+1}_{k}(s)q^{j+1}_k(s)\right)o_k
				\right]
				\\
				&-\frac{\gamma_{R_{j}}}{2}\mathrm{e}^{2r(T-s)}
				\left[\sum_{k=1}^{m}\left(q^{j}_k(s)-q^{j+1}_k(s)\right)^2\sigma_k^2
				+2\sum_{i< k}^{m}\rho_{ik}\left(q^{j}_i(s)-q^{j+1}_{i}(s)\right)\left(q^{j}_k(s)-q^{j+1}_k(s)\right)\sigma_i\sigma_k\right]
				\\
				&-\frac{\gamma_{R_j}}{2}\mathrm{e}^{2r(T-s)}
				\left[\pi_{R_j}^2(s)\sigma^2\right]\mathrm{d}s.
			\end{align*}
\end{proof}

\subsection{Proof of \cref{Lemma 3.1}}
\label{proof of Lemma 3.1}

\begin{proof}
		\begin{align*}
			\mathbb{E}\left[l_j(Z^{i},t)\right]
			&=
			\int_{h^{j-1}_{i}(t)}^{h^{j}_{i}(t)}\left(z-h^{j-1}_{i}(t)\right)f_{i}(z)\mathrm{d}z
			+
			\int_{h^{j}_{i}(t)}^{\infty}a^{j}_{i}(t)f_{i}(z)dz
			\\
			&=\int_{h^{j-1}_{i}(t)}^{h^{j}_{i}(t)}\left(z-h^{j}_{i}(t)\right)f_{i}(z)\mathrm{d}z
			+
			\int_{h^{j-1}_{i}(t)}^{\infty}a^{j}_{i}(t)f_{i}(z)dz
			\\
			&=
			\int_{h^{j-1}_{i}(t)}^{\infty}a^{j}_{i}(t)f_{i}(z)dz
			-
			\int_{h^{j-1}_{i}(t)}^{h^{j}_{i}(t)}
			\left(\int_{z}^{h^{j}_{i}(t)}\mathrm{d}x\right) f_{i}(z)\mathrm{d}z
			\\
			&=
			\int_{h^{j-1}_{i}(t)}^{h^{j}_{i}(t)}\overline{F}_{i}\left(h^{j-1}_{i}(t)\right)\mathrm{d}z
			-
			\int_{h^{j-1}_{i}(t)}^{h^{j}_{i}(t)}
			\left(\int_{h^{j-1}_{i}(t)}^{x}f_{i}(z)\mathrm{d}z\right)
			\mathrm{d}x
			\\
			&=
			\int_{h^{j-1}_{i}(t)}^{h^{j}_{i}(t)}\overline{F}_{i}\left(h^{j-1}_{i}(t)\right)\mathrm{d}z
			-
			\int_{h^{j-1}_{i}(t)}^{h^{j}_{i}(t)}
			\left(F_{i}(x)-F_{i}\left(h^{j-1}_{i}(t)\right)\right)
			\mathrm{d}x
			\\
			&=\int_{h^{j-1}_{i}(t)}^{h^{j}_{i}(t)}\overline{F}_{i}(z)dz .
		\end{align*}

		\begin{align*}
			\mathbb{E}\left[l_j(Z^{i}, t)^2\right]
			&=
			\int_{h^{j-1}_{i}(t)}^{h^{j}_{i}(t)}\left(z-h^{j-1}_{i}(t)\right)^2f(z)\mathrm{d}z
			+
			\int_{h^{j}_{i}(t)}^{\infty}\left(a^j_{i}(t)\right)^2 f(z) \mathrm{d}z
			\\
			&=
			\int_{h^{j-1}_{i}(t)}^{\infty} \left(a^j_{i}(t)\right)^2 f(z)dz
			-
			\int_{h^{j-1}_{i}(t)}^{h^{j}_{i}(t)}
			\left[
			\left(a^j_{i}(t)\right)^2-\left(z-h^{j-1}_{i}(t)\right)^2
			\right]f(z)\mathrm{d}z
			\\
			&=
			\int_{h^{j-1}_{i}(t)}^{h^{j}_{i}(t)}2\left(x-h^{j-1}_{i}(t)\right)
			\mathrm{d}x
			\overline{F}\left({h^{j-1}_{i}(t)}\right)
			\\
			&\quad
			-
			\int_{h^{j-1}_{i}(t)}^{h^{j}_{i}(t)}
			\left(\int_{z}^{h^{j}_{i}(t)}2\left(x-h^{j-1}_{i}(t)\right)
			\mathrm{d}x
			\right)
			f(z)\mathrm{d}z
			\\
			&=
			\int_{h^{j-1}_{i}(t)}^{h^{j}_{i}(t)}2\left(x-h^{j-1}_{i}(t)\right)
			\mathrm{d}x
			\overline{F}\left({h^{j-1}_{i}(t)}\right)
			\\
			&\quad
			-
			\int_{h^{j-1}_{i}(t)}^{h^{j}_{i}(t)}
			2\left(x-h^{j-1}_{i}(t)\right)
			\left(\int_{h^{j-1}_{i}(t)}^{x}
			f(z)\mathrm{d}z
			\right)
			\mathrm{d}x
			\\
			&=
			\int_{h^{j-1}_{i}(t)}^{h^{j}_{i}(t)}2\left(x-h^{j-1}_{i}(t)\right)
			\mathrm{d}x
			\left[
			\overline{F}\left({h^{j-1}_{i}(t)}\right)
			-
			F(x)+F\left(h^{j-1}_{i}(t)\right)
			\right]
			\\
			&=
			\int_{h^{j-1}_{i}(t)}^{h^{j}_{i}(t)}2\left(z-h^{j-1}_{i}(t)\right)\overline{F}(x)
			\mathrm{d}x.
		\end{align*}
\end{proof}

\subsection{Proof of \cref{Theorem-excess-of-loss reinsurance-n reinsurer}}
\label{proof-excess-of-loss reinsurance-n reinsurer}
\begin{proof}
	The extended HJB equation for the insurer $i$ is as follows
	\begin{equation}
		\label{insurer-excess-of-loss reinsurance-HJB-n insurers}
		\left\{
		\begin{aligned}
			&
			\sup_{u_{i}\in\mathcal{A}_{i}}
			\left\{
			\mathcal{L}^{u_{i}, u_{i-}}V^{i}(t, x_{i})
			-\frac{\gamma_{i}}{2}\mathcal{L}^{u_{i}, u_{i-}}g^{i}(t, x_{i})^2+\gamma_{i}
			g^{i}(t, x_{i})\mathcal{L}^{u_{i}, u_{i-}}g^{i}(t, x_{i})
			\right\}=0,
			\\
			&
			V^{i}(T, x_{i})=x_{i},\quad  
			g^{i}(T, x_{i})=x_{i},\quad  
			\mathcal{L}^{u^*_{i}, u^*_{i-}}g^{i}(t, x_{i})=0.  
		\end{aligned}
		\right.
	\end{equation}
	
	Assume
	\begin{equation}
		\label{insurer-excess-of-loss reinsurance-value function}
		V^{i}(t, x_{i})=\mathrm{e}^{r(T-t)}x_{i}+D_{i}(t),\quad g^{i}(t, x_{i})=\mathrm{e}^{r(T-t)}x_{i}+d_{i}(t).
	\end{equation}
	
	For notational simplicity, we denote $a^{j}_{i}(t)$ and $h^{j}_{i}(t)$ by $a^{j}_{i}$ and $h^{j}_{i}$,  respectively. 
	Substituting (\ref{insurer-excess-of-loss reinsurance-value function}) into (\ref{insurer-excess-of-loss reinsurance-HJB-n insurers}) yields
		\begin{align*}
			\frac{\partial M_{i}}{\partial t}
			&+\mathrm{e}^{r(T-t)}\left[(\mu-r)\left(\pi_{i}(t)-\omega_{m,i} \sum_{k\neq i}^{m} \pi_{k}(t)\right)
			\right]
			-
			\frac{\gamma_{i}}{2}\mathrm{e}^{2r(T-t)}
			\left[\left(\pi_{i}(t)-\omega_{m,i}\sum_{k\neq i}^{m}\pi_k(t)\right)^2\sigma^2\right]
			\\
			&
			+\mathrm{e}^{r(T-t)}
			\left[
			\left(\eta_{i}-\theta^{1}_{i}(t)\right)(\lambda+\lambda_{i})\mu_{i}
			+(1+\theta^{1}_{i}(t))(\lambda+\lambda_{i})\int_{0}^{a^{1}_{i}}\overline{F}_{i}(z)\mathrm{d}z
			\right]
			\\
			&
			-\mathrm{e}^{r(T-t)} \omega_{m,i} \sum_{k\neq i}^{m}
			\left[
			\left(\eta_{k}-\theta^{1}_{k}(t)\right)(\lambda+\lambda_{k})\mu_{k}
			+(1+\theta^{1}_{k}(t))(\lambda+\lambda_{k})\int_{0}^{a^{1}_{k}}\overline{F}_{k}(z)\mathrm{d}z
			\right]
			\\
			&-\mathrm{e}^{r(T-t)}\lambda_{i}\int_{0}^{a^{1}_{i}}\overline{F}_{i}(z)\mathrm{d}z
			-\mathrm{e}^{2r(T-t)}\frac{\gamma_{i}}{2}\lambda_{i}\int_{0}^{a^{1}_{i}}2z\overline{F}_{i}(z)\mathrm{d}z
			\\
			&
			+
			\mathrm{e}^{r(T-t)} \omega_{m,i} \sum_{k\neq i}^{m} \lambda_{k}  \int_{0}^{a^{1}_{k}}\overline{F}_{k}(z)\mathrm{d}z
			-
			\mathrm{e}^{2r(T-t)}\frac{\gamma_{i}}{2}
			\sum_{k\neq i}^{m}
			\lambda_{k}\int_{0}^{a^{1}_{k}}2z\overline{F}_{k}(z)\mathrm{d}z
			\\
			&-\mathrm{e}^{r(T-t)}\lambda\left(\int_{0}^{a^{1}_{i}}\overline{F}_{i}(z)\mathrm{d}z-\omega_{m,i}
			\sum_{k\neq i}^{m}
			 \int_{0}^{a^{1}_{k}}\overline{F}_{k}(z)\mathrm{d}z\right)
			 \\
			 &
			-
			\mathrm{e}^{2r(T-t)}\frac{\gamma_{i}}{2}\lambda
			\Biggr[
			\int_{0}^{a^{1}_{i}}2z\overline{F}_{i}(z)\mathrm{d}z
			-
			2\omega_{m,i}
			\sum_{k\neq i}^{m}
			\int_{0}^{a^{1}_{i}}\overline{F}_{i}(z)\mathrm{d}z\int_{0}^{a^{1}_{k}}\overline{F}_{k}(z)\mathrm{d}z
			\\
			&
			+
			2 \omega_{m,i}^2 \sum_{k\neq i, p\neq i,k<p}^{m}
			\int_{0}^{a^{1}_{k}}\overline{F}_{k}(z)\mathrm{d}z\int_{0}^{a^{1}_{p}}\overline{F}_{p}(z)\mathrm{d}z
			+
			\omega_{m,i}^2
			\sum_{k\neq i}^{m}
			\int_{0}^{a^{1}_{k}}2z\overline{F}_{k}(z)\mathrm{d}z
			\Biggr]=0.
		\end{align*}
	
		The first-order condition with respect to $a^{1}_{i}(t)$ yields
	\begin{equation*}
		a^{1}_{i}(t)^*=\frac{\theta^{1}_{i}(t)}{\gamma_{i}\mathrm{e}^{r(T-t)}}+\frac{\lambda \omega_{m,i}}{\lambda+\lambda_{i}}
		\sum_{k\neq i}^{m}
		\int_{0}^{a^{1}_{k}(t)}\overline{F}_{k}(z)\mathrm{d}z. 
	\end{equation*}
	
	The extended HJB equation for  reinsurer $R_{1}$ is as follows
	\begin{equation}
		\left\{
		\label{reinsurer-excess-of-loss reinsurance-HJB}
		\begin{aligned}
			&\sup_{u_{R_1}\in\mathcal{A}_{R_1}}
			\Bigl\{
			\mathcal{L}^{u_{R_1}, u_{R_1-}}V^{R_1}(t, y_1)
			-\frac{\gamma_{R_1}}{2}\mathcal{L}^{u_{R_1}, u_{R_1-}}g^{R_1}(t, y_1)^2
			\\
			&\qquad\qquad
			+
			\gamma_{{R_1}}
			g^{R_1}(t, y_{1})\mathcal{L}^{u_{R_1}, u_{R_1-}}g^{R_1}(t, y_{1})
			\Bigr\}
			=0,
			\\
			&
			V^{R_1}(T, y_1)=y_1,\quad
			g^{R_1}(T, y_1)=y_1,\quad
			\mathcal{L}^{u^*_{R_1},  u^*_{R_1-}}g^{R_1}(t, y_1)
			=0. 
		\end{aligned}
		\right. 
	\end{equation}
	
	Assume 
	\begin{equation}
		\label{reinsurer-excess-of-loss reinsurance-value function}
		V^{R_1}(t, y_1)=\mathrm{e}^{r(T-t)}y_1+D_{R_1}(t), g^{R_1}(t, y_1)=\mathrm{e}^{r(T-t)}y_1+d_{R_1}(t). 
	\end{equation} 
	
	Substituting (\ref{reinsurer-excess-of-loss reinsurance-value function}) into (\ref{reinsurer-excess-of-loss reinsurance-HJB}) yields
		\begin{align*}
			\frac{\partial M_{R_{1}}}{\partial t}
			&+\mathrm{e}^{r(T-t)}
			(\mu-r)\pi_{R_{1}}(t)
			-
			\frac{\gamma_{R_{1}}}{2}\mathrm{e}^{2r(T-t)}
			\pi_{R_{1}}(t)^2\sigma^2
			\\
			&
			+\mathrm{e}^{r(T-t)}
			\sum_{v=1}^{m}
			\Big[
			\left(\theta^{1}_{v}(t)-\theta^{2}_{v}(t)\right)(\lambda+\lambda_{v})\mu_{v}
			-(1+\theta^{1}_{v}(t))(\lambda+\lambda_{v})\int_{0}^{a^{1}_{v}}\overline{F}_{v}(z)\mathrm{d}z
			\\
			&+(1+\theta^{2}_{v}(t))(\lambda+\lambda_{v})\int_{0}^{a^{1}_{v}+a^{2}_{v}}\overline{F}_{v}(z)\mathrm{d}z
			\Big]
			\\
			&-\mathrm{e}^{r(T-t)}\sum_{v=1}^{m}\lambda_{v}
			\left[
			\int_{a^{1}_{v}}^{a^{1}_{v}+a^{2}_{v}}\overline{F}_{v}(z)\mathrm{d}z
			\right]
			-
			\mathrm{e}^{2r(T-t)} \frac{\gamma_{R_1}}{2}
			\sum_{v=1}^{m}
			\lambda_{v}
			\int_{a^{1}_{v}}^{a^{1}_{v}+a^{2}_{v}}2(z-a^{1}_{v})\overline{F}_{v}(z)\mathrm{d}z
			\\
			&-
			\mathrm{e}^{r(T-t)} \lambda \sum_{v=1}^{m}\left(\int_{a^{1}_{v}}^{a^{1}_{v}+a^{2}_{v}}\overline{F}_{v}(z)\mathrm{d}z\right)
			-
			\mathrm{e}^{2r(T-t)}\frac{\gamma_{R_1}}{2}\lambda
			\bigg[
			\sum_{v=1}^{m}
			\int_{a^{1}_{v}}^{a^{1}_{v}+a^{2}_{v}}2(z-a^{1}_{v})\overline{F}_{v}(z)\mathrm{d}z
			\\
			&
			+
			2\sum_{k<p}^{m}
			\int_{a^{1}_{k}}^{a^{1}_{k}+a^{2}_{k}}\overline{F}_{k}(z)\mathrm{d}z
			\int_{a^{1}_{p}}^{a^{1}_{p}+a^{2}_{p}}\overline{F}_{p}(z)\mathrm{d}z
			\bigg]
			=0.
		\end{align*}
	
	The first-order condition with respect to $a^{2}_{i}(t)$ yields
	\begin{equation*}
		\begin{aligned}
			a^{2}_i(t)^*&=
			\frac{\theta^{2}_{i}(t)}{\gamma_{R_1}\mathrm{e}^{r(T-t)}}
			-
			\frac{\lambda}{\lambda+\lambda_{i}}
			\sum_{k\neq i}^{m}
			\int_{a^{1}_{k}(t)}^{a^{1}_{k}(t)+a^{2}_{k}(t)}\overline{F}_{k}(z)\mathrm{d}z.
		\end{aligned}
	\end{equation*}
	
     The first-order condition with respect to $\theta^{1}_{i}(t)$ yields
		\begin{align*}
			&\Bigg[
			(\lambda+\lambda_{i})\mu_{i}-(\lambda+\lambda_{i})\int_{0}^{a^{1}_{i}}\overline{F}_{i}(z)\mathrm{d}z
			\\
			&
			-
			\sum_{v=1}^{m}\theta^{1}_{v}(t)(\lambda+\lambda_{v})
			\overline{F}_{v}(a^{1}_{v})
			\frac{\partial a^{1}_{v}}{\partial \theta^{1}_{i}(t)}
			+
			\sum_{v=1}^{m}
			\theta^{2}_{v}(t)(\lambda+\lambda_{v})\overline{F}_{v}(a^{1}_{v}+a^{2}_{v})
			\frac{\partial a^{1}_{v}}{\partial \theta^{1}_{i}(t)}
			\Bigg]
			\\
			&
			-\mathrm{e}^{r(T-t)}\gamma_{R_1}
			\sum_{v=1}^{m}(\lambda_{v}+\lambda)
			\left[
			a^{2}_{v}\overline{F}_{v}(a^{1}_{v}+a^{2}_{v})
			-\int_{a^{1}_{v}}^{a^{1}_{v}+a^{2}_{v}}\overline{F}_{v}(z)\mathrm{d}z
			\right]\frac{\partial a^{1}_{v}}{\partial \theta^{1}_{i}(t)}
			\\
			&
			-\mathrm{e}^{r(T-t)}\gamma_{R_1}\lambda
			\Bigg[
			\sum_{k<p}^{m}
			\left(
			\overline{F}_{k}(a^{1}_{k}+a^{2}_{k})-\overline{F}_{k}(a^{1}_{k})
			\right)
			\frac{\partial a^{1}_{k}}{\partial \theta^{1}_{i}(t)}
			\int_{a^{1}_{p}}^{a^{1}_{p}+a^{2}_{p}}\overline{F}_{p}(z)\mathrm{d}z
			\\
			&
			+
			\sum_{k<p}^{m}
			\int_{a^{1}_{k}}^{a^{1}_{k}+a^{2}_{k}}\overline{F}_{k}(z)\mathrm{d}z
			\left(
			\overline{F}_{p}(a^{1}_{p}+a^{2}_{p})-\overline{F}_{p}(a^{1}_{p})
			\right)
			\frac{\partial a^{1}_{p}}{\partial \theta^{1}_{i}(t)}
			\Bigg]
			=0. 
		\end{align*}

	Similarly, for  reinsurer $R_{j}(j=1,\cdots,n-1)$, the optimal retention level $a^{j+1}_{i}(t)^*$ satisfies
	\begin{equation*}
		\begin{aligned}
			a^{j+1}_{i}(t)^*
			&=\frac{\theta^{j+1}_{i}(t)}{\gamma_{R_j}\mathrm{e}^{r(T-t)}}
			-
			\frac{\lambda}{\lambda+\lambda_{i}}\int_{h^{j}_{k}(t)}^{h^{j+1}_{k}(t)}\overline{F}_{k}(z)\mathrm{d}z.
		\end{aligned}
	\end{equation*}
	
	The optimal safety loading $\theta^{j}_{i}(t)^*$ satisfies
		\begin{align*}
			&\Bigg[
			(\lambda+\lambda_{i})\mu_{i}-(\lambda+\lambda_{i})\int_{0}^{h^{j}_{i}}\overline{F}_{i}(z)\mathrm{d}z
			\\
			&
			-
			\sum_{v=1}^{m}\theta^{j}_{v}(t)(\lambda+\lambda_{v})
			\overline{F}_{v}(h^{j}_{v})
			\frac{\partial h^{j}_{v}}{\partial \theta^{j}_{i}(t)}
			+
			\sum_{v=1}^{m}
			\theta^{j+1}_{v}(t)(\lambda+\lambda_{v})\overline{F}_{v}(h^{j+1}_{v})
			\frac{\partial h^{j}_{v}}{\partial \theta^{j}_{i}(t)}
			\Bigg]
			\\
			&
			-\mathrm{e}^{r(T-t)}\gamma_{R_j}
			\sum_{v=1}^{m}(\lambda_{v}+\lambda)
			\left[
			a^{j+1}_{v}\overline{F}_{v}(h^{j+1}_{v})
			-\int_{h^{j}_{v}}^{h^{j+1}_{v}}\overline{F}_{v}(z)\mathrm{d}z
			\right]\frac{\partial h^{j}_{v}}{\partial \theta^{j}_{i}(t)}
			\\
			&
			-\mathrm{e}^{r(T-t)}\gamma_{R_j}\lambda
			\Bigg[
			\sum_{k<p}^{m}
			\left(
			\overline{F}_{k}(h^{j+1}_{k})-\overline{F}_{k}(h^{j}_{k})
			\right)
			\frac{\partial h^{j}_{k}}{\partial \theta^{j}_{i}(t)}
			\int_{h^{j}_{p}}^{h^{j+1}_{p}}\overline{F}_{p}(z)\mathrm{d}z
			\\
			&
			+
			\sum_{k<p}^{m}
			\int_{h^{j}_{k}}^{h^{j+1}_{k}}\overline{F}_{k}(z)\mathrm{d}z
			\left(
			\overline{F}_{p}(h^{j+1}_{p})-\overline{F}_{p}(h^{j}_{p})
			\right)
			\frac{\partial h^{j}_{p}}{\partial \theta^{j}_{i}(t)}
		    \Bigg]
			=0.
		\end{align*}
	
    In particular, $\theta^{n}_{i}(t)$ satisfies
		\begin{align*}
			&\Bigg[
			(\lambda+\lambda_{i})\mu_{i}-(\lambda+\lambda_{i})\int_{0}^{h^{n}_{i}}\overline{F}_{i}(z)\mathrm{d}z
			-
			\sum_{v=1}^{m}\theta^{n}_{v}(t)(\lambda+\lambda_{v})
			\overline{F}_{v}(h^{n}_{v})
			\frac{\partial h^{n}_{v}}{\partial \theta^{n}_{i}(t)}
			\Bigg]
			\\
			&
			-\mathrm{e}^{r(T-t)}\gamma_{R_n}
			\sum_{v=1}^{m}(\lambda_{v}+\lambda)
			\left[
			-
			\int_{h^{n}_{v}}^{\infty}\overline{F}_{v}(z)\mathrm{d}z
			\right]\frac{\partial h^{n}_{v}}{\partial \theta^{n}_{i}(t)}
			\\
			&
			-
			\mathrm{e}^{r(T-t)}\gamma_{R_n}\lambda
			\Bigg[
			\sum_{k<p}^{m}
			\left(
			-\overline{F}_{k}(h^{n}_{k})
			\right)
			\frac{\partial h^{n}_{k}}{\partial \theta^{n}_{i}(t)}
			\int_{h^{n}_{p}}^{\infty}\overline{F}_{p}(z)\mathrm{d}z
			\\
			&
			+
			\sum_{k<p}^{m}
			\left(
			-\overline{F}_{p}(h^{n}_{p})
			\right)
			\frac{\partial h^{n}_{k}}{\partial \theta^{n}_{i}(t)}
			\int_{h^{n}_{k}}^{\infty}\overline{F}_{k}(z)\mathrm{d}z
			\Bigg]
			\\
			&
			=0.
		\end{align*}
	
    The solution to the value function follows the same methodology as detailed in Theorem \ref{Theorem-Proportional reinsurance-equilibrium strategy}.
\end{proof}

\subsection{Proof of \cref{Theorem-ChainChange}}
\label{proof-ChainChange}
\begin{proof}
	For $j=1, 2, \cdots ,k-1,$  it follows from (\ref{one insurer-H}) that $H_{1j}(t)=H_{2j}(t)=H_{j}(t)$.
	
	Suppose $\gamma_{R_{k-1}}<\gamma_{R_k}$. Then $H_{1k}(t)>H_{2k}(t)$ and $H_{1j}(t)<H_{2j}(t)$ for $j=k+1, \cdots, n$.

	First, we compare  $\theta_{1k-1}$ and $\theta_{2k-1}$. From the relation
	\begin{equation*}
		\begin{aligned}
			\theta_{1k-1}-\theta_{2k-1}=\frac{1}{2}\left(\theta_{1k}-\theta_{2k}\right),
		\end{aligned}
	\end{equation*}
	it follows that $\theta_{1k-1}-\theta_{2k-1}$ and $\theta_{1k}-\theta_{2k}$
	have the same sign.
		\begin{align*}
			&\quad\theta_{1k}-\theta_{2k}
			\\
			&=\frac{\mathrm{e}^{r(T-t)} \sigma_1^2}{2^k o_1}
			\cdot
			\left[
			\frac{1}{\frac{1}{2^{k-1}\gamma_{1}}+\sum_{l=1}^{k-2}\frac{1}{2^{k-1-l}\gamma_{R_l}}+\frac{1}{\gamma_{R_{k-1}}}}
			-
			\frac{1}{\frac{1}{2^{k-1}\gamma_{1}}+\sum_{l=1}^{k-2}\frac{1}{2^{k-1-l}\gamma_{R_l}}+\frac{1}{\gamma_{R_k}}}
			\right] 
			\\
			&\quad
			+ \frac{\theta_{1k+1}-\theta_{2k+1}}{2}
			\\
			&=\frac{\mathrm{e}^{r(T-t)} \sigma_1^2}{2^k o_1}
			\cdot
			\left[
			\frac{ \frac{1}{\gamma_{R_k}}-\frac{1}{\gamma_{R_{k-1}}} }
			{\left(\frac{1}{2^{k-1}\gamma_{1}}+\sum_{l=1}^{k-2}\frac{1}{2^{k-1-l}\gamma_{R_l}}+\frac{1}{\gamma_{R_{k-1}}}\right)
				\cdot
				\left(\frac{1}{2^{k-1}\gamma_{1}}+\sum_{l=1}^{k-2}\frac{1}{2^{k-1-l}\gamma_{R_l}}+\frac{1}{\gamma_{R_k}}\right)}
			\right] 
			\\
			&\quad
			+ \frac{\theta_{1k+1}-\theta_{2k+1}}{2}.   	
		\end{align*}
	
	When $k=n$, we have $\theta_{1n-1}-\theta_{2n-1}=\frac{\theta_{1n}-\theta_{2n}}{2}$. At this point, $H_{2n}(t)-H_{1n}(t)=
	\left(\frac{1}{\gamma_{R_n}}-\frac{1}{\gamma_{R_{n-1}}}\right)\frac{o_1}{\sigma_1^2\mathrm{e}^{r(T-t)}}<0$ and $\theta_{1n}-\theta_{2n}$ is 
	
	\begin{equation*}
		\begin{aligned}
			&\quad \theta_{1n}-\theta_{2n}
			\\
			&=\frac{1}
			{2^{n-1}
				\left(2H_{1n}(t)o_1+\gamma_{R_n}\mathrm{e}^{r(T-t)}\sigma_1^2H_{1n}(t) ^2
				\right)
				\left(
				2H_{2n}(t)o_1+\gamma_{R_{n-1}}\mathrm{e}^{r(T-t)}\sigma_1^2H_{2n}(t)^2
				\right)
			}
			\times
			\\
			&
			\Bigg[ 
			2o_{1}^2(H_{2n}(t)-H_{1n}(t))
			+
			2(\gamma_{R_n}-\gamma_{R_{n-1}})\mathrm{e}^{r(T-t)}H_{1n}(t)H_{2n}(t)\sigma_{1}^2o_{1}
			\\
			&
			+
			\sigma_{1}^2 o_{1} \mathrm{e}^{r(T-t)}(H_{2n}(t)^2\gamma_{R_{n-1}}-H_{1n}(t)^2\gamma_{R_{n}})
			+
			\gamma_{R_{n-1}}\gamma_{R_{n}}\mathrm{e}^{2r(T-t)}\sigma_{1}^4 H_{1n}(t)H_{2n}(t)
			(H_{2n}(t)-H_{1n}(t))
			\Bigg]
			\\
			&=
			\frac{H_{2n}(t)-H_{1n}(t)}
			{2^{n-1}
				\left(2H_{1n}(t)o_1+\gamma_{R_n}\mathrm{e}^{r(T-t)}\sigma_1^2H_{1n}(t) ^2
				\right)
				\left(
				2H_{2n}(t)o_1+\gamma_{R_{n-1}}\mathrm{e}^{r(T-t)}\sigma_1^2H_{2n}(t)^2
				\right)
			}
			\times
			\\
			&
			\Bigg[ 
			2o_1^2
			+
			\sigma_1^2 o_1 \mathrm{e}^{r(T-t)} \frac{\gamma_{R_{n-1}}H_{2n}(t)^2+
				(\gamma_{R_n}-\gamma_{R_{n-1}})H_{1n}(t)
				H_{2n}(t)-\gamma_{R_n}H_{1n}(t)^2
			}
			{H_{2n}(t)-H_{1n}(t)}
			\Bigg], 
		\end{aligned}
	\end{equation*}
	where  $H_{1n}(t)=\frac{H_{n-1}(t)}{2}+\frac{1}{\gamma_{R_{n-1}}}\frac{o_1}{\sigma_1^2\mathrm{e}^{r(T-t)}}$, $H_{2n}(t)=\frac{H_{n-1}(t)}{2}+\frac{1}{\gamma_{R_{n}}}\frac{o_1}{\sigma_1^2\mathrm{e}^{r(T-t)}}$, and $(H_{2n}(t)-H_{1n}(t))<0$. 
	
	\begin{equation*}
		\begin{aligned}
			&\quad \gamma_{R_{n-1}}H_{2n}(t)^2+
			(\gamma_{R_n}-\gamma_{R_{n-1}})H_{1n}(t)
			H_{2n}(t)-\gamma_{R_n}H_{1n}(t)^2
			\\
			&=\frac{H_{n-1}(t)o_1}{2\sigma_1^2\mathrm{e}^{r(T-t)}}\cdot\frac{\gamma_{R_{n-1}}^2-\gamma_{R_n}^2}{\gamma_{R_n}\gamma_{R_{n-1}}}+
			\left(\frac{o_1}{\sigma_1^2\mathrm{e}^{r(T-t)}}\right)^2
			\cdot
			\frac{\left(\gamma_{R_{n-1}}-\gamma_{R_n}\right)\left(\gamma_{R_{n-1}}^2+\gamma_{R_n}^2\right)}
			{\gamma_{R_{n-1}}^2\gamma_{R_n}^2 }<0,
		\end{aligned}
	\end{equation*}
	which implies that $\theta_{1n}(t)<\theta_{2n}(t)$.
	
	\vspace{0.3cm}
	
	When $k=n-1$,  
	\begin{equation*}
		\begin{aligned}
			\theta_{1n-1}-\theta_{2n-1}
			&=\frac{1}{2^{n-1}}
			\left(
			\frac{1}{H_{1n-1}(t)}-\frac{1}{H_{2n-1}(t)}
			\right)
			+\frac{\theta_{1n}(t)-\theta_{2n}(t)}{2}
			\\
			&=\frac{o_1}{ 2^{n-1}\mathrm{e}^{r(T-t)}\sigma_1^2 }
			\left(\frac{1}{\gamma_{R_{n-1}}}-\frac{1}{\gamma_{R_{n-2}}}\right)
			\frac{1}{H_{1n-1}(t)H_{2n-1}(t)}
			+\frac{\theta_{1n}(t)-\theta_{2n}(t)}{2},
		\end{aligned}
	\end{equation*}
	\begin{equation*}
		\begin{aligned}
			&\quad \theta_{1n}-\theta_{2n}
			\\
			&=\frac{1}{2^{n-1}}
			\Bigg[
			\left(\frac{1}{H_{1n}(t)}-\frac{o_1}{2H_{1n}(t)o_1+\gamma_{R_n}\mathrm{e}^{r(T-t)}\sigma_1^2H_{1n}(t)^2 }\right)
			\\
			&\quad
			-
			\left(
			\frac{1}{H_{2n}(t)}-\frac{o_1}{2H_{2n}(t)o_1+\gamma_{R_n}\mathrm{e}^{r(T-t)}\sigma_1^2H_{2n}(t)^2 }
			\right)
			\Bigg]
			\\
			&=\frac{H_{2n}(t)-H_{1n}(t)}{ 2^{n-1}H_{1n}(t) H_{2n}(t)}
			\times
			\\
			&\quad
			\frac{2o_1^2
				+\gamma_{R_n}\mathrm{e}^{r(T-t)}\sigma_1^2o_1( H_{1n}(t) + H_{2n}(t) ) + \gamma_{R_n}^2\mathrm{e}^{2r(T-t)}\sigma_1^2H_{1n}(t)H_{2n}(t) }
			{4o_1^2
				+2\gamma_{R_n}\mathrm{e}^{r(T-t)}\sigma_1^2o_1( H_{1n}(t) + H_{2n}(t) ) + \gamma_{R_n}^2\mathrm{e}^{2r(T-t)}\sigma_1^2H_{1n}(t)H_{2n}(t) }>0,
		\end{aligned}
	\end{equation*}           
	where       
	\begin{equation*}
		\frac{1}{2}<\frac{2o_1^2
			+\gamma_{R_n}\mathrm{e}^{r(T-t)}\sigma_1^2o_1( H_{1n}(t) + H_{2n}(t) ) + \gamma_{R_n}^2\mathrm{e}^{2r(T-t)}\sigma_1^2H_{1n}(t)H_{2n}(t) }
		{4o_1^2
			+2\gamma_{R_n}\mathrm{e}^{r(T-t)}\sigma_1^2o_1( H_{1n}(t) + H_{2n}(t) ) + \gamma_{R_n}^2\mathrm{e}^{2r(T-t)}\sigma_1^2H_{1n}(t)H_{2n}(t) }<1. 
	\end{equation*}
	Thus we obtain
	
	\begin{equation*}
		\frac{H_{2n}(t)-H_{1n}(t)}{ 2^{n}H_{1n}(t) H_{2n}(t)}
		<
		\theta_{1n}(t)-\theta_{2n}(t)
		<
		\frac{H_{2n}(t)-H_{1n}(t)}{ 2^{n-1}H_{1n}(t) H_{2n}(t)},
	\end{equation*}
	and
	\begin{equation*}
		\begin{aligned}
			&\quad \frac{H_{2n}(t)-H_{1n}(t)}{ 2^{n-1}H_{1n}(t) H_{2n}(t)}
			&= \frac{
				\frac{o_1}{ \mathrm{e}^{r(T-t)} \sigma_1^2}
				\left(\frac{1}{2\gamma_{R_{n-2}}}-\frac{1}{2\gamma_{R_{n-1}}}\right)
			}{2^{n-1}H_{1n}(t) H_{2n}(t)}
			&=\frac{\frac{o_1}{ \mathrm{e}^{r(T-t)}\sigma_1^2} 
				\left(\frac{1}{\gamma_{R_{n-2}}}-\frac{1}{\gamma_{R_{n-1}}}\right)}
			{2^{n}H_{1n}(t) H_{2n}(t)}.
		\end{aligned}
	\end{equation*}
	
	Substituting it into $\theta_{1n-1}(t)-\theta_{2n-1}(t)$ yields
	
	\begin{equation*}
		\begin{aligned}
			&\quad \theta_{1n-1}(t)-\theta_{2n-1}(t)
			\\
			&
			<
			\frac{o_1}{ 2^{n-1}\mathrm{e}^{r(T-t)}\sigma_1^2 }
			\left(\frac{1}{\gamma_{R_{n-1}}}-\frac{1}{\gamma_{R_{n-2}}}\right)
			\left(
			\frac{1}{H_{1n-1}(t)H_{2n-1}(t)}
			-\frac{1}{4H_{1n}(t)H_{2n}(t)}
			\right)
			\\
			&
			=
			\frac{o_1}{ 2^{n-1}\mathrm{e}^{r(T-t)}\sigma_1^2 }
			\left(\frac{1}{\gamma_{R_{n-1}}}-\frac{1}{\gamma_{R_{n-2}}}\right)
			\left(
			\frac{4H_{1n}(t)H_{2n}(t)-H_{1n-1}(t)H_{2n-1}(t)}
			{H_{1n-1}(t)H_{2n-1}(t)H_{1n}(t)
				H_{2n}(t)}
			\right).
		\end{aligned}
	\end{equation*}
	Since $2H_{1n}(t)>H_{1n-1}(t)$ and $2H_{2n}(t)>H_{1n-1}(t)$, it follows that
    $$4H_{1n}(t)H_{2n}(t)-H_{1n-1}(t)H_{2n-1}(t)>0, $$  
    hence $\theta_{1n-1}<\theta_{2n-1}$. At this point, $\forall j\in\left\{1,2,\cdots,n-2\right\}$,  $\theta_{1j}-\theta_{2j}=\frac{1}{2^{n-1-j}}\left(\theta_{1n-1}-\theta_{2n-1}\right)$, so  $\theta_{1j}<\theta_{2j}$.  The cases $k=2, 3, \cdots, n-2$ can be established similarly.
	
\end{proof}

\subsection{Proof of \cref{Theorem-optimal reinsurance chain structure}}
\label{proof of optimal reinsurance chain structure}
\begin{proof}
	Suppose that the optimal reinsurance chain structure satisfies 
	\begin{equation}
		\label{optimal reinsurance chain structure}
		\gamma_{R_1}<\gamma_{R_2}<\cdots<\gamma_{R_n}. 
	\end{equation}
	We prove our conjecture by contradiction.  Assume that there exists another reinsurance chain, different from $(\ref{optimal reinsurance chain structure} )$, in which strategy $\theta_{1}(t)$
	attains the minimum. Then, one can necessarily identify an adjacent pair 
	$\gamma_{k-1}>\gamma_{k}$
	within the chain. By \Cref{Theorem-ChainChange}, exchanging $\gamma_{k-1}$
	and $\gamma_{k}$
	would lead to a smaller value of strategy $\theta_{1}(t)$, which contradicts the assumption. 
\end{proof}

\subsection{Proof of \cref{Theorem-excess-of-loss reinsurance-two reinsurer}}
\label{proof of Excess-of-loss reinsurance-economy analysis}
\begin{proof}
	The extended HJB equation for insurer $1$ is as follows. 
	\begin{equation}
		\label{insurer-excess-of-loss reinsurance-HJB}
		\left\{
		\begin{aligned}
			&\sup_{u_{1}\in\mathcal{A}_{1}}\mathcal{L}^{u_{1}, u_{1-}}V^{1}(t, x_{1})-\frac{\gamma_{1}}{2}\mathcal{L}^{u_{1}, u_{1-}}g^{1}(t, x_{1})^2+\gamma_{1}
			g^{1}(t, x_{1})\mathcal{L}^{u_{1}, u_{1-}}g^{1}(t, x_{1})=0,
			\\
			&V^{1}(T, x_{1})=x_{1},\quad
			g^{1}(T, x_{i})=x_{1},\quad
			\mathcal{L}^{u^*_{1}, u^*_{1-}}g^{1}(t, x_{1})=0.
		\end{aligned}
		\right.
	\end{equation}
	
	Assume that
	\begin{equation*}
		V^{1}(t, x_{1})=\mathrm{e}^{r(T-t)}x_{1}+M_{1}(t), g^{1}(t, x_{1})=\mathrm{e}^{r(T-t)}x_{i}+m_{1}(t).
	\end{equation*}
	
	For notational convenience, we write $a_{1}(t), a_{2}(t)$ for $a_{1}, a_{2}$, respectively. 
	The extended HJB equation for insurer $1$ becomes
	\begin{equation*}
		\begin{aligned}
			\frac{\partial M_{1}}{\partial t}
			&+\mathrm{e}^{r(T-t)}\left[(\mu-r)\pi_{1}(t)
			\right]
			-
			\frac{\gamma_{1}}{2}\mathrm{e}^{2r(T-t)}
			\pi_{1}(t)^2\sigma^2
			\\
			&
			+\mathrm{e}^{r(T-t)}
			\left[
			\left(\eta_{1}-\theta_1(t)\right)o_{1}
			+(1+\theta_1(t))(\lambda+\lambda_{1})\int_{0}^{a_{1}}\overline{F}_{1}(z)\mathrm{d}z
			\right]
			\\
			&
			-
			\mathrm{e}^{r(T-t)}(\lambda+\lambda_{1})\int_{0}^{a_{1}}\overline{F}_{1}(z)\mathrm{d}z
			-
			\mathrm{e}^{2r(T-t)}\frac{\gamma_{1}}{2}(\lambda+\lambda_{1})\int_{0}^{a_{1}}2z\overline{F}_{1}(z)\mathrm{d}z=0.
		\end{aligned}
	\end{equation*}
	
	From the first-order condition of $a_1$, we obtain
	\begin{equation*}
		a_{1}^*(t)=\frac{\theta_1(t)}{\gamma_{1}\mathrm{e}^{r(T-t)}},
	\end{equation*}
	
	The extended HJB equation for  reinsurer $R_{1}$ is given by
	\begin{equation*}
		\begin{aligned}
			&\sup_{u_{R_1}\in\mathcal{A}_{R_1}}\mathcal{L}^{u_{R_1}, u_{R_1-}}V^{R_1}(t, y_1)
			-\frac{\gamma_{R_1}}{2}\mathcal{L}^{u_{R_1}, u_{R_1-}}g^{R_1}(t, y_1)^2
			\\
			&
			+\gamma_{{R_1}}
			g^{R_1}(t, y_{1})\mathcal{L}^{u_{R_1}, u_{R_1-}}g^{R_1}(t, y_{1})=0.
			\\
			&
			V^{R_1}(T, y_1)=y_1,\quad
			g^{R_1}(T, y_1)=y_1,\quad
			\mathcal{L}^{u^*_{R_1}, u^*_{R_1-}}g^{R_1}(t, y_1)=0.
		\end{aligned}
	\end{equation*}
	
	Assume that 
	\begin{equation*}
		V^{R_1}(t, y_1)=\mathrm{e}^{r(T-t)}y_1+M_{R_1}(t), g^{R_1}(t, y_1)=\mathrm{e}^{r(T-t)}y_1+m_{R_1}(t),
	\end{equation*} 
	
	The extended HJB equation for  reinsurer $R_1$  then becomes
	\begin{equation*}
		\begin{aligned}
			\frac{\partial M_{R_{1}}}{\partial t}
			&+\mathrm{e}^{r(T-t)}\left[(\mu-r)\pi_{R_{1}}(t)
			\right]
			-
			\frac{\gamma_{R_{1}}}{2}\mathrm{e}^{2r(T-t)}
			\pi_{R_{1}}(t)^2\sigma^2
			\\
			&
			+
			\mathrm{e}^{r(T-t)}
			\Bigg[
			\left(\theta_1(t)-\theta_2(t)\right)o_{1}
			-
			(1+\theta_1(t)) (\lambda+\lambda_{1}) \int_{0}^{a_{1}} \overline{F}_{1}(z)\mathrm{d}z
			\\
			&
			+
			(1+\theta_2(t))(\lambda+\lambda_{1})\int_{0}^{a_{1}+a_{2}}\overline{F}_{1}(z)\mathrm{d}z
			\Bigg]
			-
			\mathrm{e}^{r(T-t)}(\lambda+\lambda_{1})\int_{a_{1}}^{a_{1}+a_{2}}\overline{F}_{1}(z)\mathrm{d}z
			\\
			&
			-
			\mathrm{e}^{2r(T-t)} \frac{\gamma_{R_{1}}}{2} (\lambda+\lambda_{1}) \int_{a_{1}}^{a_{1}+a_{2}}2(z-a_1) \overline{F}_{1}(z)\mathrm{d}z=0.
		\end{aligned}
	\end{equation*}
	
	Analogous to $a_1^*(t)$, we obtain $a_2^*(t)$ as
	\begin{equation*}
		a_{2}^*(t)=\frac{\theta_2(t)}{\gamma_{R_{1}}\mathrm{e}^{r(T-t)}}.
	\end{equation*}
	
	The first-order condition with respect to $\theta_{1}(t)$ yields
	\begin{equation*}
		\frac{1}{\delta_1}-\frac{1+\theta_{1}}{\gamma_1\mathrm{e}^{r(T-t)}}+\frac{1+\theta_{2}}{\gamma_1\mathrm{e}^{r(T-t)}}\mathrm{e}^{-\delta_1a_2}
		-\frac{\mathrm{e}^{-\delta_1a_2}-1}{\gamma_1\mathrm{e}^{r(T-t)}}
		-\frac{\gamma_{R_1}}{\gamma_1}
		\left(a_2\mathrm{e}^{-\delta_1a_2}-\frac{(1-\mathrm{e}^{-\delta_1a_2})}{\delta_1}\right)=0.
	\end{equation*}
	
	Solving for $\theta_{1}^*(t)$ gives
	\begin{equation*}
		\theta_1^*(t)=\frac{\gamma_1\mathrm{e}^{r(T-t)}}{\delta_1}
		+\theta_2(t)\mathrm{e}^{-\delta_1a_2}
		-\gamma_{R_1}\mathrm{e}^{r(T-t)}
		\left(a_2\mathrm{e}^{-\delta_1a_2}
		-\frac{(1-\mathrm{e}^{-\delta_1a_2})}{\delta_1}\right).
	\end{equation*}
	
	The extended HJB equation for the reinsurer $R_{2}$ is given by
	\begin{equation*}
		\begin{aligned}
			&\sup_{u_{R_2}\in\mathcal{A}_{R_2}}\mathcal{L}^{u_{R_2}, u_{R_2-}}V^{R_2}(t, y_2)
			-\frac{\gamma_{R_2}}{2}\mathcal{L}^{u_{R_2}, u_{R_2-}}g^{R_2}(t, y_2)^2
			\\
			&
			+
			\gamma_{{R_2}}
			g^{R_2}(t, y_{2})\mathcal{L}^{u_{R_2}, u_{R_2-}}g^{R_2}(t, y_{2})=0,
			\\
			&
			V^{R_2}(T, y_2)=y_2,\quad
			g^{R_2}(T, y_2)=y_2,\quad
			\mathcal{L}^{u^*_{R_2}, u^*_{R_2-}}g^{R_2}(t, y_2)=0,
		\end{aligned}
	\end{equation*}
	
	Assume 
	\begin{equation*}
		V^{R_{2}}(t, y_2)=\mathrm{e}^{r(T-t)}y_2+M_{R_2}(t), g^{R_{2}}(t, y_2)=\mathrm{e}^{r(T-t)}y_2+m_{R_2}(t),
	\end{equation*} 
	
	The extended HJB equation for reinsurer $R_2$ then  becomes
	
	\begin{equation*}
		\begin{aligned}
			\frac{\partial M_{R_{2}}}{\partial t}
			&+\mathrm{e}^{r(T-t)}\left[(\mu-r)\pi_{R_{2}}(t)
			\right]
			-
			\frac{\gamma_{R_{2}}}{2}\mathrm{e}^{2r(T-t)}
			\pi_{R_{2}}(t)^2\sigma^2
			\\
			&
			+\mathrm{e}^{r(T-t)}
			\left[
			\left(1+\theta_2(t)\right)o_{1}
			-
			(1+\theta_2(t))(\lambda+\lambda_{1})\int_{0}^{a_{1}+a_{2}}\overline{F}_{1}(z)\mathrm{d}z
			\right]
			\\
			&
			-
			\mathrm{e}^{r(T-t)}(\lambda+\lambda_{1})
			\int_{a_{1}+a_{2}}^{\infty}\overline{F}_{1}(z)\mathrm{d}z
			\\
			&
			-
			\mathrm{e}^{2r(T-t)}\frac{\gamma_{R_{2}}}{2}(\lambda+\lambda_{1})
			\int_{a_{1}+a_{2}}^{\infty}2(z-a_1-a_2)\overline{F}_{1}(z)\mathrm{d}z=0.
		\end{aligned}
	\end{equation*}
	Setting $h_{2}=a_1+a_2$, we obtain 
	\begin{equation*}
		\begin{aligned}
			\frac{\partial h_2}{\partial \theta_2}
			&
			=
			\frac{\mathrm{e}^{-\delta_1a_2}}{\gamma_{1}\mathrm{e}^{r(T-t)}}
			\left[
			1-\theta_{2}(t)\frac{\delta_1}{\gamma_{R_{1}}\mathrm{e}^{r(T-t)}}
			+\delta_1a_{2}
			\right]+\frac{1}{\gamma_{R_{1}}\mathrm{e}^{r(T-t)}}
			\\
			&
			=
			\frac{\mathrm{e}^{-\delta_1a_2}}{\gamma_{1}\mathrm{e}^{r(T-t)}}
			+\frac{1}{\gamma_{R_{1}}\mathrm{e}^{r(T-t)}}.
		\end{aligned}
	\end{equation*}
	
	The first-order condition with respect to  $\theta_2(t)$ yields
	\begin{equation*}
		\begin{aligned}
			&\left[
			o_{1}
			-
			(\lambda+\lambda_{1})\int_{0}^{h_{2}}\overline{F}_{1}(z)\mathrm{d}z
			-
			(1+\theta_2(t))(\lambda+\lambda_{1})\overline{F}_{1}(h_{2})\frac{\partial h_2}{\partial \theta_2}
			\right]
			\\
			&
			+
			(\lambda+\lambda_{1})\overline{F}_{1}(h_{2})\frac{\partial h_2}{\partial \theta_2}
			+
			\mathrm{e}^{r(T-t)}\gamma_{R_{2}}(\lambda+\lambda_{1})\frac{\partial h_2}{\partial \theta_2}
			\int_{h_{2}}^{\infty}\overline{F}_{1}(z)\mathrm{d}z=0.
		\end{aligned}
	\end{equation*}
	
	After simplification, $\theta_{2}(t)$ satisfies
	\begin{equation*}
		\begin{aligned}
			\frac{1}{\delta_1}
			+\left(\frac{\gamma_{R_{2}}\mathrm{e}^{r(T-t)}}{\delta_{1}}-\theta_{2}(t)\right)
			\frac{\partial h_2}{\partial \theta_2}=0.
		\end{aligned}
	\end{equation*}
	
	 We now establish the uniqueness of the solution. To this end, we divide the domain of $\theta_{2}$ into three intervals for analysis: $\left(0,\frac{\gamma_{R_{2}}\mathrm{e}^{r(T-t)}}{\delta_{1}}\right]$,  $\left(\frac{\gamma_{R_{2}}\mathrm{e}^{r(T-t)}}{\delta_{1}},\frac{\gamma_{R_{1}}\mathrm{e}^{r(T-t)}}{\delta_{1}}+
	 \frac{\gamma_{R_{2}}\mathrm{e}^{r(T-t)}}{\delta_{1}}\right)$ 
	 and 
	 $\left[\frac{\gamma_{R_{1}}\mathrm{e}^{r(T-t)}}{\delta_{1}}+
	 \frac{\gamma_{R_{2}}\mathrm{e}^{r(T-t)}}{\delta_{1}}, \infty\right)$. 	Define 
	\begin{equation*}
		K(\theta_2)=\frac{1}{\delta_1}
		+(\frac{\gamma_{R_{2}}\mathrm{e}^{r(T-t)}}{\delta_{1}}-\theta_{2})\frac{\partial h_2}{\partial \theta_2}.
	\end{equation*}

	(i)
	When 
	$\theta_{2}\in\left(0,\frac{\gamma_{R_{2}}\mathrm{e}^{r(T-t)}}{\delta_{1}}\right]$, we have $K(\theta_{2})>0$. 
	
	(ii)
	Assume $\forall x\geq 0.$ At this time, 
	
	\begin{equation*}
		\begin{aligned}
			K\left(x+\frac{\gamma_{R_{1}}\mathrm{e}^{r(T-t)}}{\delta_{1}}+
			\frac{\gamma_{R_{2}}\mathrm{e}^{r(T-t)}}{\delta_{1}}\right)
			&=
			-
			x\frac{\partial h_2}{\partial \theta_2}
			-
			\frac{\gamma_{R_{1}}\mathrm{e}^{r(T-t)}}{\delta_{1}}\cdot\frac{\mathrm{e}^{-\delta_1a_2}}{\gamma_{1}\mathrm{e}^{r(T-t)}}<0.
		\end{aligned}
	\end{equation*}
	
	Consequently, when $\theta_{2}\in\left[\frac{\gamma_{R_{1}}\mathrm{e}^{r(T-t)}}{\delta_{1}}+
	\frac{\gamma_{R_{2}}\mathrm{e}^{r(T-t)}}{\delta_{1}}, \infty\right)$, $K(\theta_{2})<0$. 
	
	(iii) 
	The partial derivatives of $K(\theta_2)$ are given by
	
	\begin{equation}
		\left\{
		\begin{aligned}
			\frac{\partial K(\theta_2)}{\partial \theta_{2}}
			&=
			\frac{-\delta_1\mathrm{e}^{-\delta_1a_2}}{\gamma_1\gamma_{R_1}\mathrm{e}^{2r(T-t)}}\left(\frac{\gamma_{R_{2}}\mathrm{e}^{r(T-t)}}{\delta_{1}}-\theta_{2}(t)\right)
			-\frac{\partial h_2}{\partial \theta_2},
			\\
			\frac{\partial^2 K(\theta_2)}{\partial \theta_{2}^2}
			&=	\frac{\delta_1^2\mathrm{e}^{-\delta_1a_2}}
			{\gamma_1\gamma_{R_1}^2\mathrm{e}^{3r(T-t)}}
			\left(
			\frac{2\gamma_{R_{1}}\mathrm{e}^{r(T-t)}}{\delta_{1}}+
			\frac{\gamma_{R_{2}}\mathrm{e}^{r(T-t)}}{\delta_{1}}
			-
			\theta_{2}(t)
			\right).
		\end{aligned}
		\right.
	\end{equation}
	When 
	$\theta_2(t)=\frac{2\gamma_{R_{1}}\mathrm{e}^{r(T-t)}}{\delta_{1}}+
	\frac{\gamma_{R_{2}}\mathrm{e}^{r(T-t)}}{\delta_{1}}
	$, $\frac{\partial^2 K(\theta_2)}{\partial \theta_{2}^2}=0$. And when 
	$\theta_{2}\in\Big(\frac{\gamma_{R_{2}}\mathrm{e}^{r(T-t)}}{\delta_{1}},\frac{\gamma_{R_{1}}\mathrm{e}^{r(T-t)}}{\delta_{1}}+
	\frac{\gamma_{R_{2}}\mathrm{e}^{r(T-t)}}{\delta_{1}}\Big)$, 
	$\frac{\partial^2 K(\theta_2)}{\partial \theta_{2}^2}>0$. 
	Hence, $\left(\frac{2\gamma_{R_{1}}\mathrm{e}^{r(T-t)}}{\delta_{1}}+
	\frac{\gamma_{R_{2}}\mathrm{e}^{r(T-t)}}{\delta_{1}}\right)$ is the maximizer of $\frac{\partial K(\theta_{2})}{\partial \theta_{2}}$. 
	Moreover,
	\begin{equation*}
		\begin{aligned}
			\frac{\partial K}{\partial \theta_{2}}
			\left( \frac{\gamma_{R_{1}}\mathrm{e}^{r(T-t)}}{\delta_{1}}+
			\frac{\gamma_{R_{2}}\mathrm{e}^{r(T-t)}}{\delta_{1}}\right)= -\frac{1}{\gamma_{R_{1}}\mathrm{e}^{r(T-t)}}<0.
		\end{aligned}
	\end{equation*}
	
	Therefore, for $\theta_{2}\in\left(\frac{\gamma_{R_{2}}\mathrm{e}^{r(T-t)}}{\delta_{1}},\frac{\gamma_{R_{1}}\mathrm{e}^{r(T-t)}}{\delta_{1}}+
	\frac{\gamma_{R_{2}}\mathrm{e}^{r(T-t)}}{\delta_{1}}\right)$, 
	$\frac{\partial K(\theta_2)}{\partial \theta_{2}}<0$. Furthermore, since $K\left(\frac{\gamma_{R_{2}}\mathrm{e}^{r(T-t)}}{\delta_{1}}\right)>0$ and $K\left(\frac{\gamma_{R_{1}}\mathrm{e}^{r(T-t)}}{\delta_{1}}+
	\frac{\gamma_{R_{2}}\mathrm{e}^{r(T-t)}}{\delta_{1}}\right)<0$, 
	it follows that the equation $K(\theta_{2})=0$ has a unique solution in the interval $\left(\frac{\gamma_{R_{2}}\mathrm{e}^{r(T-t)}}{\delta_{1}},\frac{\gamma_{R_{1}}\mathrm{e}^{r(T-t)}}{\delta_{1}}+
	\frac{\gamma_{R_{2}}\mathrm{e}^{r(T-t)}}{\delta_{1}}\right)$.
\end{proof}

\end{appendices}

\section*{Acknowledgements}

This work was supported by the Natural Science Foundation of Xinjiang Uygur Autonomous Region (Grant No. 2024D01C39) and the National Natural Science Foundation of China (Grant Nos. 12261085, 12301149).

\section*{Declarations}

\textbf{Conflict of interest} The authors declare no financial or non-financial interests that are directly or indirectly related to the work submitted for publication.

\printbibliography
\end{document}